\documentclass[journal,onecolumn]{IEEEtran}
\usepackage[T1]{fontenc}
\usepackage{amsmath,amsfonts,amsthm, mathabx}
\usepackage{algorithmic,lipsum}
\usepackage{array}
\usepackage[caption=false,font=normalsize,labelfont=sf,textfont=sf]{subfig}
\usepackage{textcomp,tcolorbox}
\usepackage{mathrsfs}
\usepackage{stfloats}
\usepackage{hyperref}       
\hypersetup{
  colorlinks   = true, 
  urlcolor     = blue, 
  linkcolor    = blue, 
  citecolor   = blue 
}
\usepackage{url}
\usepackage{float}
\usepackage{verbatim}
\usepackage{graphicx}
\def\BibTeX{{\rm B\kern-.05em{\sc i\kern-.025em b}\kern-.08em
    T\kern-.1667em\lower.7ex\hbox{E}\kern-.125emX}}
\usepackage{bbm}
\usepackage{physics}
\newtheorem{theorem}{Theorem}[section]

\newtheorem{lemma}[theorem]{Lemma}
\newtheorem{remark}[theorem]{Remark}
\newtheorem{definition}[theorem]{Definition}
\newtheorem{prop}[theorem]{Proposition}

\newtheorem{conjecture}[theorem]{Conjecture}
\usepackage{tikz-cd}
\usepackage{tikz}
\usetikzlibrary{positioning,arrows.meta}
\usetikzlibrary{decorations.pathreplacing}
\usepackage[
  left=0.5in,
  right=0.5in,
  top=0.5in,
  bottom=0.5in,
]{geometry}
\usetikzlibrary{arrows, automata}
\usetikzlibrary{shapes.geometric, arrows}
\tikzstyle{block} = [rectangle, rounded corners, minimum width=2.5cm, minimum height=1cm, text centered, draw=black, fill=blue!20]
\tikzstyle{arrow} = [thick,->,>=stealth]
\tikzstyle{shield} = [rectangle, rounded corners, minimum width=2.5cm, minimum height=1cm, text centered, draw=black, fill=red!20]

\usepackage{color,hyperref}
\usepackage{physics}

\newcommand{\wt}{\widetilde}

\newcommand{\mb}{\mathbb}
\newcommand{\mc}{\mathcal}

\renewcommand{\ket}[1]{\lvert #1\rangle}
\renewcommand{\bra}[1]{\langle #1\rvert}
\renewcommand{\ketbra}[2]{\ket{#1}\!\bra{#2}}

\definecolor{cool_green}{rgb}{0.0, 0.5, 0.0}

\title{Single-letter one-way distillable entanglement for non-degradable states}
\author{%
Rabsan Galib Ahmed$^{1,2}$, Graeme Smith$^{1,2}$ and Peixue Wu$^{1,2}$ \\[1ex]
\small{
\begin{tabular}{l}
$^{1}$\textit{Institute for Quantum Computing, University of Waterloo,}
\textit{200 University Avenue West, Waterloo, ON N2L 3G1, Canada} \\[1ex]
$^{2}$\textit{Department of Applied Mathematics, University of Waterloo,}
\textit{200 University Avenue West, Waterloo, ON N2L 3G1, Canada} \\[1ex]
\end{tabular}
}
}
\begin{document}
\maketitle
\begin{abstract}
The one-way distillable entanglement is a central operational measure of bipartite entanglement, quantifying the optimal rate at which maximally entangled pairs can be extracted by one-way LOCC. Despite its importance, it is notoriously hard to compute, since it is defined by a regularized optimization over many copies and adaptive one-way protocols. At present, single-letter formulas are only known for (conjugate) degradable and PPT states. More generally, it has remained unclear when one-way distillable entanglement can still be additive beyond degradability and PPT settings, and how such additivity relates to additivity questions of quantum capacity of channels.

In this paper, we address this gap by identifying three explicit families of non-degradable and non-PPT states whose one-way distillable entanglement is nevertheless single-letter. First, we introduce two weakened degradability-type conditions—regularized less-noisy and informationally degradable—and prove that each guarantees additivity and hence a single-letter formula. Second, we show a stability result for orthogonally flagged mixtures: when one component has orthogonal support on Alice’s system and zero one-way distillable entanglement, the mixture remains single-letter, even though degradability is typically lost under such mixing. Finally, we propose a generalized spin-alignment principle for entropy minimization in tensor-product settings, which we establish in several key cases, including a complete R\'enyi-2 result. As an application, we obtain additivity results for generalized direct-sum channels and their corresponding Choi states.
\end{abstract}
\tableofcontents








\section{Introduction}

Distilling maximally entangled states from many copies of a noisy bipartite state
via local operations and classical communication (LOCC)
\cite{Bennett1996Concentrating,Bennett1996Purification,Bennett96QECC_distillation}
is a foundational problem in quantum information science, enabling key protocols such as
dense coding and teleportation \cite{Bennett1992DenseCoding,Bennett1993Teleportation}.
When the allowed classical communication is restricted to be one-way (say $A\to B$),
the optimal asymptotic rate at which ebits can be extracted is quantified by the
\emph{one-way distillable entanglement} $D_\to(\rho_{AB})$.
Beyond its operational meaning as an entanglement measure, $D_\to$ is tightly connected to
communication problems: through channel--state duality, additivity and single-letter
phenomena for one-way distillation mirror (but are not identical to) additivity questions
for coherent information and quantum capacity.

A major obstruction is that $D_\to$ is generally \emph{non-additive}
\cite{Horodecki2009RMP,Shor_03}, and therefore is defined through a regularization
\cite{Devetak_2005}:
\begin{align*}
    D_\to(\rho_{AB}) = \lim_{n\to \infty} \frac{1}{n} D_\to^{(1)}(\rho_{AB}^{\otimes n}).
\end{align*}
Determining when such regularizations collapse to single-letter
formulas is a central theme in quantum information theory.
For (conjugate) degradable and anti-degradable states, one-way distillable entanglement
admits a single-letter expression given by coherent information \cite{Leditzky_2018}.
Likewise, PPT states have vanishing two-way (and hence one-way) distillable entanglement
\cite{Horodecki_98}.

In contrast, explicit families of states that are \emph{non-degradable}, non-PPT, and yet satisfy
\[
D_\to(\rho_{AB}) = D^{(1)}_\to(\rho_{AB})
\]
have remained unexplored. Motivated by recent progress on additivity for non-degradable channels in quantum capacity theory
\cite{Leditzky_2023,Smith_2025}, we fill this gap by introducing three mechanisms that produce such states:
\begin{enumerate}
    \item[(i)] \emph{states satisfying weaker notions of degradability},
\item[(ii)] \emph{orthogonally mixtures with a ``useless'' component}, and
\item[(iii)] \emph{states governed by a spin-alignment entropy minimization principle}.
\end{enumerate}
\medskip
\noindent\textbf{A channel-capacity viewpoint and a subtle mismatch.}
Given a bipartite state $\rho_{AB}$, a natural perspective suggested by the Choi correspondence is to compare
additivity phenomena for one-way distillation with those for one-shot channel capacity.
Concretely, let $\Phi_\rho^{A\to B}$ be the completely positive map induced by $\rho_{AB}$ via the
Choi--Jamio\l{}kowski isomorphism (equivalently, $\rho_{AB}$ is proportional to the Choi operator of $\Phi_\rho$).
One may ask whether additivity of the maximal coherent information $\mathcal{Q}^{(1)}(\Phi_\rho)$
should be equivalent to additivity of $D_\to^{(1)}(\rho_{AB})$.
While these two questions coincide in the (conjugate) degradable setting, in general they are
\emph{a priori different}: the optimizations defining the two quantities impose different operational constraints.

Recall \cite{Devetak_2005} that
\begin{align*}
D^{(1)}_\to(\rho_{AB})
= \max \Big\{ I(A'\rangle BM)_{\omega_{A'BM}}:\;
\omega_{A'BM} = \sum_m (K_m \otimes \mathbb{I}_B)\rho_{AB}(K_m^\dagger\otimes \mathbb{I}_B)\otimes |m\rangle\langle m|,
\ \sum_m K_m^\dagger K_m = \mathbb{I}_A \Big\}.
\end{align*}
A natural relaxation is to \emph{drop the completeness constraint} $\sum_m K_m^\dagger K_m=\mathbb{I}_A$,
i.e., allow trace non-preserving operations and renormalize.
For any $K:A\to A'$, define the postselected state
\[
\omega_{A'B}(K)
= \frac{(K\otimes \mathbb{I}_B)\rho_{AB}(K\otimes \mathbb{I}_B)^\dagger}
{\mathrm{Tr}\big[(K^\dagger K\otimes \mathbb{I}_B)\rho_{AB}\big]}.
\]
Then one obtains the simplified expression
\begin{equation}\label{eq:hatD-singleK}
\widehat D^{(1)}_\to(\rho_{AB})
= \max_{K}\; I(A'\rangle B)_{\omega_{A'B}(K)} ,
\end{equation}
which upper bounds $D^{(1)}_\to(\rho_{AB})$ by construction.

This relaxation admits an exact channel interpretation.
Let $\rho_{AB}^{\mathcal{N}}$ denote the normalized Choi state of a channel $\mathcal{N}^{A\to B}$.
Using the Choi--Jamio\l{}kowski isomorphism, $\widehat D^{(1)}_\to$ becomes precisely the one-shot
quantum capacity:
\[
\mathcal{Q}^{(1)}(\mathcal{N})
= \widehat D^{(1)}_\to\!\big(\rho_{AB}^{\mathcal{N}}\big),
\]
where $\mathcal{Q}^{(1)}(\mathcal{N})$ is the maximal coherent information of $\mathcal{N}$.

Consequently, additivity of $\widehat D^{(1)}_\to$ is equivalent (via the Choi correspondence)
to additivity of $\mathcal{Q}^{(1)}$ for the associated channels.
In contrast, additivity for $D^{(1)}_\to$ is intrinsically more delicate, since the optimization is restricted to
\emph{complete} one-way instruments $\sum_m K_m^\dagger K_m=\mathbb{I}_A$ rather than arbitrary
postselected filters.
Our three mechanisms provide structured settings in which this completeness constraint can still be controlled,
yielding explicit non-degradable, non-PPT families with single-letter one-way distillable entanglement.


We present three structural mechanisms that yield explicit families of \emph{non-degradable}, non-PPT states
with single-letter one-way distillable entanglement.  For each mechanism we first state the guiding principle,
and then give a representative example.

\paragraph{Mechanism 1: additivity from weaker degradability (information dominance)}

Let $\rho_{AB}$ have a purification $\ket{\phi}_{ABE}$.  Standard degradability requires that $E$ can be simulated
from $B$ by a channel; we instead impose only an \emph{information ordering} saying that, under all relevant
pre-processings on Alice, Bob is at least as informative as the purifying system.
Two convenient formulations are:
\begin{itemize}
\item \textbf{(Regularized less noisy.)} For every $n\ge 1$ and every quantum instrument
$\mathcal{T}^{A^n\to A'M}$, letting $\omega_{A'B^nE^n}=\mathcal{T}(\phi_{ABE}^{\otimes n})$,
\begin{equation}\label{eq:less-noisy-level-n-abs}
I(M;B^n)_\omega \ge I(M;E^n)_\omega .
\end{equation}
\item \textbf{(Informationally degradable.)} For every channel $\mathcal{N}^{A\to A'}$ and
$\omega_{A'BE}=\mathcal{N}(\phi_{ABE})$,
\begin{equation}\label{eq:ID-abs}
I(A';B)_\omega \ge I(A';E)_\omega .
\end{equation}
\end{itemize}
These conditions are strictly weaker than degradability but are still strong enough to force
single-letter behavior, by the same telescoping/chain-rule strategy that underlies many additivity
proofs in capacity theory (cf.\ \cite{Smith_2025}).

\smallskip
\noindent\textbf{Consequence (informal).}
If \eqref{eq:less-noisy-level-n-abs} holds for all $n$, then
\[
D_\to(\rho_{AB}) = D_\to^{(1)}(\rho_{AB}) = I(A\rangle B)_\rho,
\]
and $D_\to^{(1)}$ is additive for many copies of such states, i.e., it exhibits weak additivity under tensor products.
Furthermore informationally degradable states satisfy additivity for $D_\to^{(1)}$ under tensor products i.e., it exhibits strong additivity under tensor products.

\medskip
\emph{Example (flagged mixtures of amplitude damping channels).}
Let ${\rm AD}_\gamma$ be the qubit amplitude damping channel with Kraus operators
\[
E_0 = |0\rangle\!\langle 0| + \sqrt{1-\gamma}\,|1\rangle\!\langle 1|,
\qquad
E_1 = \sqrt{\gamma}\,|0\rangle\!\langle 1| ,
\qquad \gamma\in[0,1].
\]
Fix parameters $\gamma_0,\gamma_1\in[0,1]$ and a mixing probability $p\in[0,1]$, and define the flagged channel
\[
\mathcal{N}^{A\to BF}(\rho)
:= p\,{\rm AD}_{\gamma_0}(\rho)\otimes |0\rangle\!\langle 0|_F
 + (1-p)\,{\rm AD}_{\gamma_1}(\rho)\otimes |1\rangle\!\langle 1|_F ,
\]
where the classical flag $F$ is available to Bob.  Its (normalized) Choi state
$\rho_{RBF}^{\mathcal{N}} := (\mathrm{id}_R\otimes \mathcal{N})(|\Phi^+\rangle\!\langle\Phi^+|)$,
with $|\Phi^+\rangle = (|00\rangle+|11\rangle)/\sqrt{2}$, is
\begin{equation}\label{eq:choi-flagged-AD}
\rho_{RBF}^{\mathcal{N}}
= p\,\rho_{RB}^{(\gamma_0)}\otimes |0\rangle\!\langle 0|_F
 + (1-p)\,\rho_{RB}^{(\gamma_1)}\otimes |1\rangle\!\langle 1|_F,
\end{equation}
where, in the computational basis $\{|00\rangle,|01\rangle,|10\rangle,|11\rangle\}$ of $RB$,
\begin{equation}\label{eq:choi-AD-explicit}
\rho_{RB}^{(\gamma)}
=
\frac{1}{2}\begin{pmatrix}
1 & 0 & 0 & \sqrt{1-\gamma}\\
0 & 0 & 0 & 0\\
0 & 0 & \gamma & 0\\
\sqrt{1-\gamma} & 0 & 0 & 1-\gamma
\end{pmatrix}.
\end{equation}
This flagged family for some parameter regions provides explicit \emph{non-degradable} Choi states for which the weaker
information-dominance property holds and hence $D_\to(\rho_{RBF}^{\mathcal{N}})$ is single-letter~\cite{Smith_2025}.

\paragraph{Mechanism 2: orthogonal flags with a ``useless'' component}

Suppose Alice locally knows which component of a mixture she holds, i.e.\ the components have orthogonal support
on $A$.  Then one-way distillation can be performed conditionally on that classical information.
If, moreover, one component is ``useless'' for one-way distillation (e.g.\ anti-degradable / separable, hence
zero one-way distillable entanglement), the mixture inherits a single-letter formula from the ``useful''
component.  Operationally, orthogonality on $A$ turns the problem into a direct-sum decomposition.

\smallskip
\noindent\textbf{Prototype statement (informal).}
Let $\rho_0,\rho_1$ be bipartite states with $\rho_0^A\perp \rho_1^A$ and assume the product additivity bound
$D_\to^{(1)}(\bigotimes_{i=1}^n \rho_{w_i})\le \sum_{i=1}^n D_\to^{(1)}(\rho_{w_i})$ for all words $w^n$.
Then for any $p\in[0,1]$,
\[
D_\to\big(p\rho_0+(1-p)\rho_1\big)
= D_\to^{(1)}\big(p\rho_0+(1-p)\rho_1\big)
= p\,D_\to(\rho_0)+(1-p)\,D_\to(\rho_1).
\]

\medskip
\emph{Example (a $3\times 3$ state with a separable ``junk'' block).}
Consider, for $s\in[0,1]$, the state on $\mathbb{C}^3\otimes\mathbb{C}^3$
\begin{align*}
\rho_{AB}(s)
&:= \frac{1}{3}\Big(
|0\rangle\!\langle0|_A\otimes\big[s\,|0\rangle\!\langle0|_B+(1-s)\,|2\rangle\!\langle2|_B\big]
+|0\rangle\!\langle1|_A\otimes \sqrt{s}\,|0\rangle\!\langle2|_B\\
&\hspace{2.2cm}
+|1\rangle\!\langle0|_A\otimes \sqrt{s}\,|2\rangle\!\langle0|_B
+|1\rangle\!\langle1|_A\otimes |2\rangle\!\langle2|_B\Big)
+\frac{1}{3}|22\rangle\!\langle22|_{AB}.
\end{align*}
Here the last term $\frac{1}{3}|22\rangle\!\langle22|$ is separable and supported on an $A$-subspace orthogonal to
the support of the first block.  The remaining ``useful'' component is a Choi-type amplitude-damping structure on
${\rm span}\{|0\rangle,|1\rangle\}_A$.
By the above principle, $\rho_{AB}(s)$ has single-letter one-way distillable entanglement, and one obtains
\[
D_\to(\rho_{AB}(s)) = D_\to^{(1)}(\rho_{AB}(s))
= \frac{2}{3}\max\!\left\{ h\!\left(\frac{s}{2}\right) - h\!\left(\frac{1+s}{2}\right),\,0 \right\}.
\]

\paragraph{Mechanism 3: spin alignment and generalized direct-sum structure}

Certain block-structured channels/states reduce multi-copy optimizations (entropy minimization or coherent
information maximization) to a classical mixture over ``which block is used''~\cite{wu2025}.
The key step is a \emph{spin-alignment} entropy minimization rule: when outputs are mixtures of the form
$\rho\otimes\sigma_1$ and $\sigma_0\otimes\rho$, entropy is minimized by aligning the free input $\rho$
with maximal-eigenvalue directions of the fixed states $\sigma_0,\sigma_1$.
This forces optimal $n$-copy inputs to be product-aligned across sites, turning a priori noncommutative
problems into tractable classical ones.

\smallskip
\noindent\textbf{Spin alignment (informal).}
Fix $\sigma_0\in\mathcal{D}(B_0)$ and $\sigma_1\in\mathcal{D}(B_1)$ and define
$\mathcal{N}_0(\rho)=\rho\otimes\sigma_1$, $\mathcal{N}_1(\rho)=\sigma_0\otimes\rho$.
Given a distribution $\{p_{\vec x}\}$ on $\{0,1\}^n$, consider
\[
\min_{\{\rho_{\vec x}\}} S\!\left(\sum_{\vec x} p_{\vec x}\,\mathcal{N}_{\vec x}(\rho_{\vec x})\right).
\]
The conjectured (and partially proved) alignment rule says the optimum is attained by product inputs
$\rho_{\vec x}=\bigotimes_i \tau^{(i)}_{x_i}$ with each $\tau^{(i)}_{0}$ (resp.\ $\tau^{(i)}_{1}$)
a projector onto a maximal-eigenspace of $\sigma_0$ (resp.\ $\sigma_1$).
We establish the $n=1$ von Neumann case, and the full $n$-copy statement for R\'enyi-$2$ entropy.

\medskip
\emph{Example (a generalized direct-sum state with single-letter $D_\to$).}
Let $A,B\simeq \mathbb{C}^{d_0+d_1}$ and define
\[
\rho_{AB}
=
\frac{1}{2d_0 d_1}
\sum_{i=0}^{d_0-1}
\sum_{j=0}^{d_1-1}
\Big(
|i\rangle|j\rangle
+
|j+d_0\rangle|i+d_1\rangle
\Big)
\Big(
\langle i|\langle j|
+
\langle j+d_0|\langle i+d_1|
\Big).
\]
This state is a canonical ``two-block coherent coupling'' (a Choi-type generalized direct-sum structure).
Using the alignment principle to control the relevant entropy minimizations, we show that
\[
D_\to(\rho_{AB}) = D_\to^{(1)}(\rho_{AB}),
\]
providing an explicit non-degradable, non-PPT family where one-way distillation is single-letter.

\medskip
\noindent
Together, these three mechanisms---information dominance without full degradability,
orthogonal flags with a useless component, and spin-alignment-driven block reduction---yield broad,
explicit sources of additivity and single-letter behavior for $D_\to$ beyond the degradable/PPT regimes.

\textbf{Open questions and future work.} An important open problem is to prove the spin alignment conjecture with von-Neumann entropy for arbitrary tensor powers.  Although partial progress has been made~\cite{Alhejji_2025}, the general case remains unsolved. Futhermore, understanding the comparative additivity of $D^{(1)}_{\to}$ and $\mathcal{Q}^{(1)}$ remains to be explored. Finally, it remains to be understood whether the techniques developed here extend to two-way distillation or to other quantum resource theories.

The rest of the paper is structured as follows. In Section \ref{sec: prelim}, we provide the necessary preliminary background for our results. In Section \ref{sec:general-principle}, we have identified the general principles that allow us to show single-letter one-way distillable entanglement for non-degradable states. Finally in Section \ref{sec:examples}, we provide explicit examples for such states. Additionally we show that a large class of generalised direct sum channels exhibit single-letter quantum capacity complying with one of our general principles.   


\section{Preliminary}\label{sec: prelim}
\subsection{Bipartite quantum states and entanglement}
Consider two quantum systems with corresponding Hilbert spaces, $\mathcal{H}_A$ and $\mathcal{H}_B$ abbreviated as $A$ (for Alice) and $B$ (for Bob). The states of this composite system, called the bipartite states, are described by elements of $\mathcal{D}(\mathcal{H}_A\otimes \mathcal{H}_B)$, where $\mathcal{D}(\mathcal{H})$ is the set of all positive semi-definite operators on the Hilbert space $\mathcal{H}$ with unit trace. 

The set of bipartite states, $\mathcal{D}(\mathcal{H}_A\otimes \mathcal{H}_B)$ is further classified into two categories: \textit{separable states} and \textit{entangled states}. We call $\varrho_{AB}\in \mathcal{D}(\mathcal{H}_A\otimes \mathcal{H}_B)$ a separable state if there exists a probability vector $\{p_i\}$ and states $\rho_i\in \mathcal{D}(\mathcal{H}_A)$, $\sigma_i\in \mathcal{D}(\mathcal{H}_B)$ such that
\begin{align}
    \varrho_{AB} = \sum_{i} p_i\;\rho_i\otimes\sigma_i
\end{align}
The set of all separable states are often denoted as $\rm{Sep}(A:B)$. The states which cannot be written in this form are called \textit{entangled states}. These are elements of $\mathcal{D}(\mathcal{H}_A\otimes \mathcal{H}_B)\;\backslash\; \rm{Sep}(A:B)$. 

\subsection{Distillable entanglement}
Let \(\rho_{AB}\) be a bipartite state on a finite-dimensional Hilbert space
\(\mathcal{H}_A \otimes \mathcal{H}_B\).
We denote by \(\mathrm{LOCC}_{A\to B}\) the set of protocols implementable by
local operations and classical communication from \(A\) to \(B\) only.
For an integer \(M\), let
\(\Phi_M := \frac{1}{\sqrt{M}}\sum_{i=1}^{M} \ket{i}\ket{i}\)
be a maximally entangled state of Schmidt rank \(M\) on
\(\mathbb{C}^M \otimes \mathbb{C}^M\).
From here onward all the logarithms are taken to be base \(2\).

\paragraph{Achievable rate.}
A number \(R \ge 0\) (in ebits per copy) is said to be \emph{one-way
achievable} for \(\rho_{AB}\) if there exist a sequence of integers \(M_n\)
and a sequence of one-way LOCC protocols
\(\Lambda_n \in \mathrm{LOCC}_{A\to B}\) such that
\[
  \sigma_n := \Lambda_n \big(\rho_{AB}^{\otimes n}\big) \in
  \mathcal{D}\big((\mathbb{C}^{M_n}\!\otimes\!\mathbb{C}^{M_n})\big)
\]
satisfies
\[
  \lim_{n\to\infty} \, \big\| \sigma_n - \Phi_{M_n} \big\|_1 = 0
  \qquad\text{and}\qquad
  \liminf_{n\to\infty} \frac{1}{n}\log M_n \;\ge\; R .
\]

\paragraph{One-way distillable entanglement.}
The \emph{one-way distillable entanglement} of \(\rho_{AB}\) is
\begin{equation}\label{equivalent def:rate}
      D_{\to}(\rho_{AB})
  \;:=\;
  \sup\bigl\{ R \,\big|\, R \text{ is one-way achievable for } \rho_{AB} \bigr\}.
\end{equation}
\paragraph{Equivalent fidelity formulation.}
Equivalently, for \(\varepsilon\in(0,1)\) and \(n\in\mathbb{N}\) define the
\(\varepsilon\)-error, \(n\)-blocklength one-way distillable entanglement by
\begin{align*}
      D_{\to}^{(n,\varepsilon)}(\rho_{AB})
  := \frac{1}{n}\,
  \max_{\Lambda_n \in \mathrm{LOCC}_{A\to B}}
  \{ \log M : F\bigl(\Lambda_n(\rho_{AB}^{\otimes n}),\, \Phi_M\bigr) \ge 1-\varepsilon \Bigr\},
\end{align*}
where \(F(\cdot,\cdot)\) is the Uhlmann fidelity. Then
\begin{equation}\label{equivalent def:n block}
      D_{\to}(\rho_{AB})
  \;=\;
  \lim_{\varepsilon\to 0}\; \liminf_{n\to\infty}
  D_{\to}^{(n,\varepsilon)}(\rho_{AB}) .
\end{equation}

\paragraph{Regularization expression} Devetak and Winter \cite{Devetak_2005} proved the following regularized expression for the one-way distillable entanglement 
\begin{equation}\label{equivalent def:regularized}
    D_{\to}(\rho_{AB})
= \lim_{n\to\infty} \frac{1}{n}\, D^{(1)}_{\to}\!\left(\rho_{AB}^{\otimes n}\right),
\end{equation}
where 
\begin{align}\label{def:one shot}
    D^{(1)}_{\to}(\rho_{AB})
:= \max_{\mc T}\; \sum_m \lambda_m\, I(A'\rangle B)_{\rho_m} 
= \max_{\mc T}\, I(A'\rangle BM)_{\mc T(\rho_{AB})}, 
\quad
\rho_m := \frac{1}{\lambda_m} \, \mc T_m(\rho_{AB}),
\quad
\lambda_m := \operatorname{Tr}\!\left[\mc T_m(\rho_{AB})\right],
\end{align}
where the instrument \(\mc T = \sum_m \mc T_m \otimes \ket{m}\!\bra{m}\) has a single Kraus operator
\(\mc T_m(\cdot) = K_m (\cdot) K_m^\dagger\), \(K_m: A \to A'\).

We can obtain an alternative expression in terms of an isometric extension \(V: A \to A'MN\) of the instrument, $\mathcal{T}$, defined by
\begin{equation}\label{eqn:instrument isometry}
    V := \sum_m K_m \otimes \ket{m}_M \otimes \ket{m}_N,
\end{equation}
for a classical register \(N \cong M\).
Since \(\sum_m K_m^\dagger K_m = \mathbb{I}_A\), we have \(V^\dagger V = \mathbb{I}_A\),
and \(\mc T(\rho_{AB}) = \operatorname{Tr}_N\!\big(V \rho_{AB} V^\dagger\big)\). For any isometry $V^{A\to A'MN}$ with the form~\eqref{eqn:instrument isometry}, and let $\rho_{AB}$ be a bipartite state with purification $\ket{\phi}_{ABE}$, define 
\begin{equation}
    \ket{\omega}_{A'MN BE} = (V \otimes \mb I_B \otimes \mb I_E) \, \ket{\phi}_{ABE},
\end{equation}
we have 
\begin{align*}
    I(A'\rangle BM)_{\mc T(\rho_{AB})}& = I(A'\rangle BM)_\omega \\
    & = S(BM)_\omega - S(A'BM)_\omega \\
    & = S(BM)_\omega - S(EN)_\omega \\
    & = S(BM)_\omega - S(EM)_\omega,
\end{align*}
where the third equality follows from the fact that $\omega$ is pure on $A'MNBE$, so $S(A'B M)_\omega = S(EN)_\omega$ (complementary systems with a joint pure state have equal von Neumann entropy); the last equality follows from $M\!\leftrightarrow\!N$ symmetry in the construction of $\omega$ (in particular, the joint law of $(E,M)$ and $(E,N)$ is the same), hence $S(EN)_\omega = S(EM)_\omega$. Therefore, we have 
\begin{lemma}\label{lemma:alternative}
For any bipartite state $\rho_{AB}$ with purification $\ket{\phi}_{ABE}$, we have
\begin{equation}
    D^{(1)}_{\to}(\rho_{AB})
= \max_V\, I(A'\rangle BM)_\omega = \max_V\, \left[S(BM)_\omega - S(EM)_\omega\right],\quad \ket{\omega}_{A'MN BE} = (V \otimes \mb I_B \otimes \mb I_E) \, \ket{\phi}_{ABE},
\end{equation}
where $V^{A\to A'MN}$ is any isometry with the form~\eqref{eqn:instrument isometry}.
\end{lemma}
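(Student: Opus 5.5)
The plan is to establish the three equalities in Lemma~\ref{lemma:alternative} one at a time, starting from the definition \eqref{def:one shot}.

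\textbf{Step 1: from instruments to isometries.} There is a bijection between instruments $\mathcal T=\sum_m \mathcal T_m\otimes\ketbra{m}{m}$ with single Kraus operators $K_m$ and isometries $V$ of the form \eqref{eqn:instrument isometry}. In one direction, $V^\dagger V=\sum_{m,m'}K_m^\dagger K_{m'}\langle m|m'\rangle\langle m|m'\rangle=\sum_m K_m^\dagger K_m=\mathbb I_A$. Conversely, any $V$ of this form determines its $K_m$, and these satisfy the completeness relation. I would then verify $\operatorname{Tr}_N(V\rho_{AB}V^\dagger)=\mathcal T(\rho_{AB})$. Tracing out $N$ kills the off-diagonal $m\neq m'$ terms and leaves $\sum_m K_m\rho K_m^\dagger\otimes\ketbra{m}{m}_M$. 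Hence $I(A'\rangle BM)_{\mathcal T(\rho)}=I(A'\rangle BM)_\omega$, and maximizing over $\mathcal T$ is the same as maximizing over $V$. This gives the first equality.

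\textbf{Step 2: rewriting the coherent information.} I would write $I(A'\rangle BM)_\omega=S(BM)_\omega-S(A'BM)_\omega$. Since $\omega_{A'MNBE}$ is pure, the complementary-system identity gives $S(A'BM)=S(NE)$.

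\textbf{Step 3: the $M\leftrightarrow N$ symmetry (main obstacle).} The remaining task is to justify $S(EN)_\omega=S(EM)_\omega$, and I would make this rigorous by computing both marginals explicitly. Write $\ket{\omega}=\sum_m (K_m\otimes\mathbb I_{BE})\ket{\phi}\otimes\ket{m}_M\ket{m}_N$.

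To obtain $\omega_{EN}$, trace out $A'$, $B$ and $M$. Tracing out $M$ forces $m=m'$, which gives
\[
\omega_{EN}=\sum_m \operatorname{Tr}_{A'B}\!\big[(K_m\otimes \mathbb I)\phi(K_m^\dagger\otimes\mathbb I)\big]\otimes\ketbra{m}{m}_N .
\]
The identical computation with the roles of $M$ and $N$ exchanged gives the same operator with $M$ in place of $N$. The two marginals are therefore unitarily equivalent via the relabeling $\ket{m}_N\mapsto\ket{m}_M$, so their entropies coincide.

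Combining the three steps yields the final expression $\max_V[S(BM)-S(EM)]$. The only real subtlety in the argument is Step 3, which is the point where the specific structure of $V$, namely the repeated classical index, is essential.
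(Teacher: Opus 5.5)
Your proposal is correct and follows essentially the same route as the paper: identify $\mathcal T(\rho_{AB})=\operatorname{Tr}_N(V\rho_{AB}V^\dagger)$, use purity of $\omega_{A'MNBE}$ to get $S(A'BM)_\omega=S(EN)_\omega$, and then use the $M\leftrightarrow N$ symmetry to get $S(EN)_\omega=S(EM)_\omega$. The one difference is that you explicitly compute $\omega_{EN}$ and $\omega_{EM}$ and observe they coincide up to relabeling $N\to M$, which makes rigorous the symmetry step that the paper asserts in a single line.
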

From the above lemma, it is immediate to see that if $U_A$ and $V_B$ are isometries, then
\begin{equation}\label{eqn:unitary invariance}
    D^{(1)}_{\to}\bigl((U_A\otimes V_B)\,\rho\,(U_A^\dagger\otimes V_B^\dagger)\bigr)=D^{(1)}_{\to}(\rho),\quad D_{\to}\bigl((U_A\otimes V_B)\,\rho\,(U_A^\dagger\otimes V_B^\dagger)\bigr)=D_{\to}(\rho)
\end{equation}
for every $\rho\in\mc D(AB)$. 

\subsection{Degradable states}
\begin{definition}
Let $\rho_{AB}$ be a bipartite state with purification $\ket{\phi}_{ABE}$. 
The state $\rho_{AB}$ is called:
\begin{enumerate}
\item \emph{degradable}, if there is a quantum channel $\mc D^{B\to E}$ such that 
\begin{equation}\label{eq:degradable-equalities}
(id\otimes \mc D^{B \to E})(\rho_{AB})=\phi_{AE}.
\end{equation}


\item \emph{antidegradable}, if there is a quantum channel $\mc D^{E\to B}$ such that 
\begin{equation}
    (id\otimes \mc D^{E\to B})(\phi_{AE})=\rho_{AB}.
\end{equation}
\end{enumerate}
\end{definition}
Via data processing inequality, for the state given by $\ket{\omega}_{A'MN BE} = (V \otimes \mb I_B \otimes \mb I_E)\ket{\phi}_{ABE}$, we have $S(BM)_\omega - S(EM)_\omega \le S(B)_\omega - S(E)_\omega = I(A\rangle B)_\rho$. Then using Lemma \ref{lemma:alternative}, we find that $D_\to^{(1)}(\rho) =I(A\rangle B)_\rho$ for degradable state $\rho$. Similarly, $D_\to^{(1)}(\rho) = 0$ for anti-degradable state $\rho$. This provides an alternative proof of the results obtained in \cite{Leditzky_2018}. Using the additivity of coherent information under tensor product states, we immediately have 
\begin{align}
    D_\to (\rho_{AB}) = \begin{cases}
        I(A\rangle B)_\rho, &\text{ when $\rho$ is degradable};\\
        0, &\text{ when $\rho$ is anti-degradable}.
    \end{cases}
\end{align}

\subsection{Connection to the quantum channel capacities}
To study the additivity property of $D^{(1)}_\to$, we introduce an upper bound of it. Recall \eqref{def:one shot}:
\begin{align*}\label{def:D1_opt}
    D^{(1)}_\to(\rho_{AB}) = \max \left\{I(A'\rangle BM)_{\omega_{A'BM}}: \omega_{A'BM} = \sum_m (K_m \otimes \mb I_B) \rho_{AB} (K_m \otimes \mb I_B)^\dagger \otimes |m\rangle \langle m|,\quad \forall \{K_m^{A \to A'}\}\ \text{s.t.}\ \sum_m K_m^\dagger K_m = \mb I_A\right\}.
\end{align*}
Therefore, a natural upper bound for $D^{(1)}_\to$ is to withdraw the condition $\sum_m K_m^\dagger K_m = \mb I_A$: 

\begin{definition}\label{def: Non-TP instrument}
    For a bipartite state $\rho_{AB}$, we introduce the quantity 
    \begin{align*}
    \widehat D^{(1)}_\to(\rho_{AB}) = \max \left\{I(A'\rangle BM)_{\omega_{A'BM}}: \omega_{A'BM} = \frac{\sum_m (K_m \otimes \mb I_B) \rho_{AB} (K_m \otimes \mb I_B)^\dagger \otimes |m\rangle \langle m|}{\sum_m \Tr( (K_m^\dagger K_m \otimes \mb I_B)\rho_{AB})},\quad \forall \{K_m\}\right\}.
\end{align*}
\end{definition}
Interestingly, the above quantity is actually the maximal coherent information of the channel with Choi operator given by $\rho_{AB}$:
\begin{prop} \label{prop:connection to capacity}
Let $\rho_{AB}$ be bipartite state and $\mc{N}$ be the corresponding completely positive map from $A$ to $B$. Then, 
    \begin{align*}
    \mc Q^{(1)}(\mc N) =  \widehat D^{(1)}_\to(\rho_{AB}).
\end{align*}
\end{prop}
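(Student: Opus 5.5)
The plan is to prove the two inequalities separately, using the Choi correspondence to translate between filtered states and channel outputs. Write $\rho_{AB} = (\mathrm{id}_A\otimes \mathcal N)(\Phi_{AA''})$ up to normalization, where $\Phi$ is the (unnormalized or normalized) maximally entangled vector on $A\otimes A''$. Recall
\[
\mathcal Q^{(1)}(\mathcal N)=\max_{\psi_{A'A''}} I(A'\rangle B)_{(\mathrm{id}\otimes\mathcal N)(\psi)}.
\]
The basic identity is the transpose trick:
\[
(K\otimes \mathbb I)\ket{\Phi}_{AA''} = (\mathbb I\otimes K^{T})\ket{\Phi}_{A'A''}.
\]
Hence $(K\otimes \mathbb I_B)\rho_{AB}(K\otimes\mathbb I_B)^\dagger$ equals $(\mathrm{id}_{A'}\otimes\mathcal N)(\ket{\psi_K}\!\bra{\psi_K})$ with $\ket{\psi_K}=(K\otimes\mathbb I)\ket{\Phi}$, an arbitrary (unnormalized) vector on $A'\otimes A''$. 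Every bipartite pure vector arises this way, by the operator–vector correspondence $K\leftrightarrow \mathrm{vec}(K)$. After normalization, the postselected state $\omega_{A'B}(K)$ is exactly the output of $\mathcal N$ on an arbitrary pure input $\psi_{A'A''}$. Normalization is automatic, since the scaling by $\mathrm{Tr}[(K^\dagger K\otimes \mathbb I)\rho_{AB}]$ matches the norm of $\psi_K$ pushed through the trace-preserving $\mathcal N$. If $\mathcal N$ is only CP, the same normalization is applied on both sides.

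\textbf{Step 1 ($\ge$ direction).} Take a single Kraus operator $K$, i.e.\ a one-element family $\{K\}$ with trivial $M$. Definition~\ref{def: Non-TP instrument} then yields $I(A'\rangle B)_{\omega(K)}$. Choosing $K$ so that $\psi_K$ is the optimal pure input for $\mathcal Q^{(1)}(\mathcal N)$ gives $\widehat D^{(1)}_\to(\rho_{AB})\ge \mathcal Q^{(1)}(\mathcal N)$. The maximum in $\mathcal Q^{(1)}$ may be taken over pure inputs by the standard purification argument: $I(A'\rangle B)$ depends only on the reduced input on $A''$.

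\textbf{Step 2 ($\le$ direction).} Given a family $\{K_m\}$, set $\lambda_m\propto \mathrm{Tr}[(K_m^\dagger K_m\otimes\mathbb I)\rho_{AB}]$, normalized to a probability distribution. The normalized $\omega_{A'BM}$ is then $\sum_m \lambda_m\,\omega_{A'B}(K_m)\otimes\ketbra{m}{m}$, and since $M$ is classical,
\[
I(A'\rangle BM)_\omega=\sum_m\lambda_m I(A'\rangle B)_{\omega(K_m)}\le \max_m I(A'\rangle B)_{\omega(K_m)}.
\]
Each term is a channel coherent information with pure input $\psi_{K_m}$ by Step~0, hence at most $\mathcal Q^{(1)}(\mathcal N)$. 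Alternatively, one can embed $M$ into $A'$ as a flagged input $\sum_m\sqrt{\lambda_m}\,\psi_{K_m}\otimes\ket{m}$; this is not needed, since the average is bounded by the maximum.

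\textbf{Main obstacle.} The delicate point is bookkeeping in the transpose identity: $A'$ dimensions differ from $A$, and $K^T$ maps $A''\cong A$ into a space isomorphic to $A'$. One must verify that every input state on $A'\otimes A''$ is reachable, in particular that the reduced state on $A''$ ranges over all density operators. This holds because $K^T$ is arbitrary, so $\psi_K$ ranges over all vectors in $A'\otimes A''$. A second point is degenerate $K$ with $\mathrm{Tr}[\cdots]=0$; such terms carry zero weight and are discarded.
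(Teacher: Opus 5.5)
Your proposal is correct and follows essentially the same route as the paper. The paper first proves, as a separate lemma, that a single filter $K$ suffices, because $I(A'\rangle BM)$ is a $\lambda_m$-weighted average bounded by its largest term. It then identifies $(K\otimes \mathbb{I}_B)\rho_{AB}(K\otimes\mathbb{I}_B)^\dagger=(\mathrm{id}\otimes\mathcal N)(\ketbra{\psi}{\psi})$ with $\ket{\psi}=(K\otimes\mathbb{I})\ket{\Phi}$ ranging over all pure inputs, exactly as your Steps 1--2 do. You do not need the transpose trick (and your description of where $K^T$ maps is reversed: it sends a copy of $A'$ into $A''$), since $K$ acts on the reference factor and $\mathcal N$ on the other, so the two simply commute.
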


A first step is to simplify the above expression. 
\begin{lemma}
\begin{equation}\label{def:single operator}
            \begin{aligned}
    \widehat D^{(1)}_\to(\rho_{AB}) = \max \left\{I(A'\rangle B)_{\omega_{A'B}}: \omega_{A'B} = \frac{(K \otimes \mb I_B) \rho_{AB} (K \otimes \mb I_B)^\dagger }{\Tr( (K^\dagger K \otimes \mb I_B)\rho_{AB})}, \quad \forall K: A\to A'\right\}.
\end{aligned}
\end{equation}
\end{lemma}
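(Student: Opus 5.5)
The equality follows from two inequalities. The direction ``$\ge$'' is immediate: a single operator $K$ is the special case of Definition~\ref{def: Non-TP instrument} in which the family $\{K_m\}$ has one element. The register $M$ is then a trivial one-dimensional classical system, so $I(A'\rangle BM)_\omega = I(A'\rangle B)_{\omega}$ with $\omega$ the normalized postselected state.

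For the direction ``$\le$'', I will show that conditioning on a classical register only produces a convex combination, whose value is dominated by its best term. Fix an arbitrary family $\{K_m\}$. Set $\lambda_m := \Tr((K_m^\dagger K_m\otimes\mb I_B)\rho_{AB})$ and $\Lambda := \sum_m \lambda_m$. Discard the indices with $\lambda_m=0$, since they contribute nothing. For the remaining indices, let $\omega^{(m)}_{A'B} := (K_m\otimes \mb I_B)\rho_{AB}(K_m\otimes\mb I_B)^\dagger/\lambda_m$, which is exactly the single-operator state in \eqref{def:single operator} with $K=K_m$. The normalized state in Definition~\ref{def: Non-TP instrument} is then the classical--quantum state
\begin{equation*}
\omega_{A'BM}=\sum_m \frac{\lambda_m}{\Lambda}\,\omega^{(m)}_{A'B}\otimes\ketbra{m}{m}_M .
\end{equation*}

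The key step is the standard identity for block-diagonal states with a classical flag: $S(A'BM)_\omega = H(\{\lambda_m/\Lambda\}) + \sum_m \frac{\lambda_m}{\Lambda} S(A'B)_{\omega^{(m)}}$, and likewise for $S(BM)_\omega$. The Shannon terms cancel in the difference, so
\begin{equation*}
I(A'\rangle BM)_\omega=\sum_m \frac{\lambda_m}{\Lambda}\, I(A'\rangle B)_{\omega^{(m)}} \le \max_m I(A'\rangle B)_{\omega^{(m)}} .
\end{equation*}
The right-hand side is bounded by the right-hand side of \eqref{def:single operator}. Taking the supremum over all families $\{K_m\}$ gives ``$\le$''.

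The only care needed is with degenerate cases and with whether the maximum is attained. If every $\lambda_m$ vanishes, the normalized state is undefined, and such families are excluded from both optimizations. Attainment is not essential, because the argument shows that the two suprema coincide. If maxima are wanted, I will use that $I(A'\rangle B)$ of the normalized postselected state is invariant under rescaling $K\mapsto cK$. This lets us restrict to $K$ with $\|K\|_\infty=1$ and $\Tr((K^\dagger K\otimes \mb I_B)\rho_{AB})$ bounded below, after also restricting $A'$ to dimension at most $\dim A$ (replace $K$ by its polar part onto its range). On that compact set the objective is continuous. I expect this attainment detail to be the main obstacle, since the inequality itself is routine.
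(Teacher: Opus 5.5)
Your proof is correct and follows the paper's argument: the ``$\ge$'' direction is the one-element case, and for ``$\le$'' the classical flag turns $I(A'\rangle BM)$ into the convex combination $\sum_m \frac{\lambda_m}{\Lambda} I(A'\rangle B)_{\omega^{(m)}}$, which is dominated by the best branch $K_{m_0}$. Comparing suprema, as you do, is slightly cleaner than the paper, which simply assumes an optimal family $\{K_m\}$ exists; if you do want attainment, note that $\|K\|_\infty=1$ alone does not bound $\Tr((K^\dagger K\otimes \mb I_B)\rho_{AB})$ away from zero, so you should first replace $K$ by $KP$, with $P$ the support projection of $\rho_A$ (this leaves the postselected state unchanged and gives the lower bound $\lambda_{\min}^{+}(\rho_A)$).
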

\begin{proof}
    It is obvious that         \begin{align*}
    \widehat D^{(1)}_\to(\rho_{AB}) \ge \max \left\{I(A'\rangle B)_{\omega_{A'B}}: \omega_{A'B} = \frac{(K \otimes \mb I_B) \rho_{AB} (K \otimes \mb I_B)^\dagger}{\Tr( (K^\dagger K \otimes \mb I_B)\rho_{AB})}, \quad \forall K: A\to A'\right\}.
\end{align*}
To show the other direction, suppose $\{K_m\}$ are the optimal operators achieving $\widehat D^{(1)}_\to(\rho_{AB})$. Denote 
\begin{align*}
    p_m = \frac{\Tr( (K_m^\dagger K_m \otimes \mb I_B)\rho_{AB})}{\sum_m \Tr( (K_m^\dagger K_m \otimes \mb I_B)\rho_{AB})},\quad \rho_{A'B}^m = \frac{(K_m \otimes \mb I_B) \rho_{AB} (K_m \otimes \mb I_B)^\dagger}{\Tr( (K_m^\dagger K_m \otimes \mb I_B)\rho_{AB})}
\end{align*}
Then the coherent information of $\omega_{A'BM} = \sum_m p_m \rho_{A'B}^m \otimes |m\rangle \langle m|$ is calculated as 
\begin{align*}
    I(A'\rangle BM)_{\omega_{A'BM}} = \sum_m p_m I(A'\rangle B)_{\rho_{A'B}^m} \le I(A'\rangle B)_{\rho_{A'B}^{m_0}},
\end{align*}
where $m_0$ is chosen such that 
\begin{align*}
    I(A'\rangle B)_{\rho_{A'B}^{m_0}} = \max_m I(A'\rangle B)_{\rho_{A'B}^m}.
\end{align*}
Therefore, by choosing $K$ as $K_{m_0}$, one has
\begin{align*}
    \widehat D^{(1)}_\to(\rho_{AB}) \le \max \left\{I(A'\rangle B)_{\omega_{A'B}}: \omega_{A'B} = \frac{1}{\Tr( (K^\dagger K \otimes \mb I_B)\rho_{AB})} (K \otimes \mb I_B) \rho_{AB} (K \otimes \mb I_B)^\dagger \quad \forall K: A\to A'\right\},
\end{align*}
which concludes the proof.
\end{proof}
For any bipartite state $\rho_{AB}$, it naturally induces a completely positive map $\mc N^{A\to B}$ via the following
\begin{equation}
    \mc N^{A\to B}(\sigma):= d_A\, \Tr_A((\sigma^T\otimes \mb I_B)\rho_{AB}),\quad \rho_{AB} = d_A \, (id \otimes \mc N^{A\to B})(|\Phi\rangle \langle \Phi|).
\end{equation}
For any completely positive map $\mc N$, one can define its maximal coherent information by
\begin{equation}\label{def:channel coherent information}
    \mc Q^{(1)}(\mc N):= \max \left\{I(A'\rangle B)_{\omega_{A'B}}: \omega_{A'B} = \frac{(id \otimes \mc N)(|\psi \rangle \langle \psi |_{A'A})}{\Tr((id \otimes \mc N)(|\psi \rangle \langle \psi |_{A'A}))},\quad \forall |\psi\rangle_{A'A} \right\}.
\end{equation}

\begin{proof}[Proof of Proposition~\ref{prop:connection to capacity}]
    Recall that there is a one-to-one correspondence (up to normalization) between pure states $|\psi\rangle_{A'A}$ on $A'A$ and operators $K:A \to A'$ such that 
\begin{align*}
    |\psi\rangle_{A'A} = (K \otimes \mb I_A) |\Phi\rangle_{AA}.
\end{align*}
Note that \begin{align*}
    (K \otimes \mb I_B) \rho_{AB} (K \otimes \mb I_B)^\dagger = (K \otimes \mb I_B) (id \otimes \mc N)(|\Phi\rangle \langle \Phi |) (K \otimes \mb I_B)^\dagger = (id \otimes \mc N)(|\psi \rangle \langle \psi |_{A'A})),
\end{align*}
thus using the definitions~\eqref{def:channel coherent information} and \eqref{def:single operator}, we conclude the proof.
\end{proof}

\begin{lemma}
    Let $\rho_{AB}$ be a bipartite state. Then we have, $\widehat D^{(1)}_\to(\rho_{AB})=0 \iff D^{(1)}_\to(\rho_{AB})=0$ 
\end{lemma}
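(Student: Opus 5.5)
The approach is to prove the two implications separately, using the inequality $0\le D^{(1)}_\to\le \widehat D^{(1)}_\to$ for one direction and an explicit completion of a single filter $K$ into a full instrument for the other.

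\textbf{Nonnegativity and the easy direction.} First I will record that both quantities are nonnegative. Take the instrument with Kraus operators $K_m=\ket{e}\bra{m}$, where $\{\ket{m}\}$ is an orthonormal basis of $A$ and $\ket e$ is a fixed unit vector in $A'$. Then $\sum_m K_m^\dagger K_m=\mb I_A$. Every post-measurement state has the product form $\ketbra{e}{e}\otimes\sigma^m_B$, so each branch has $I(A'\rangle B)=S(B)-S(A'B)=0$. This gives $D^{(1)}_\to(\rho_{AB})\ge 0$, and likewise $\widehat D^{(1)}_\to(\rho_{AB})\ge 0$. Since $\widehat D^{(1)}_\to$ is obtained by dropping the completeness constraint (Definition~\ref{def: Non-TP instrument}), we have $D^{(1)}_\to\le\widehat D^{(1)}_\to$. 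Hence $\widehat D^{(1)}_\to(\rho_{AB})=0$ forces $D^{(1)}_\to(\rho_{AB})=0$.

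\textbf{Converse: choose and rescale a good filter.} For the converse I will show that $\widehat D^{(1)}_\to(\rho_{AB})>0$ implies $D^{(1)}_\to(\rho_{AB})>0$. By the single-operator expression \eqref{def:single operator}, there is some $K:A\to A'$ with $\lambda:=\Tr((K^\dagger K\otimes\mb I_B)\rho_{AB})>0$ and $I(A'\rangle B)_{\omega(K)}=:c>0$. The normalized state $\omega(K)$ is unchanged if $K$ is replaced by $tK$ with $t>0$. So I rescale to get $K_0:=tK$ with $K_0^\dagger K_0\le\mb I_A$; here $\lambda_0:=\Tr((K_0^\dagger K_0\otimes\mb I_B)\rho_{AB})=t^2\lambda>0$.

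\textbf{Converse: complete to an instrument without losing the gain.} Next I complete $K_0$ to a genuine instrument. The point is that the extra branches must contribute zero coherent information rather than something negative. Let $R:=(\mb I_A-K_0^\dagger K_0)^{1/2}$ and fix an orthonormal basis $\{\ket{u_j}\}$ of $A$. I define $K_j:=\ket{e}\bra{u_j}R$ for $j\ge 1$, with $\ket e\in A'$ a unit vector. Then
\begin{equation*}
K_0^\dagger K_0+\sum_{j\ge1}K_j^\dagger K_j=K_0^\dagger K_0+R\Big(\sum_j\ketbra{u_j}{u_j}\Big)R=\mb I_A,
\end{equation*}
so $\{K_0,K_1,\dots\}$ is an admissible instrument in \eqref{def:one shot}, all with output space $A'$. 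As in the first step, each branch $j\ge1$ yields a product state $\ketbra{e}{e}\otimes\sigma^j_B$ and hence zero coherent information. Therefore
\begin{equation*}
D^{(1)}_\to(\rho_{AB})\ \ge\ \lambda_0\, I(A'\rangle B)_{\omega(K_0)}+\sum_{j\ge1}\lambda_j\cdot 0=\lambda_0\, c>0 .
\end{equation*}
Together with the easy direction, this gives the equivalence.

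The only step needing care is this completion. The naive choice, a single complementary operator $R$, could produce a branch with negative coherent information and spoil the lower bound. The fix is to split $R$ into rank-one pieces whose outputs are product states. No dimension issue arises, since these pieces map into the same $A'$.
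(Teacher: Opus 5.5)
Your proof is correct and follows the paper's own argument. The easy direction comes from $0\le D^{(1)}_\to\le \widehat D^{(1)}_\to$. For the converse, you rescale a good filter so that $K_0^\dagger K_0\le \mathbb{I}_A$ and complete it with rank-one Kraus operators whose branches are product states with zero coherent information. The only cosmetic difference is that you build the rank-one pieces as $\ket{e}\bra{u_j}R$ from an arbitrary basis, whereas the paper uses the spectral decomposition of $\mathbb{I}_A - cK^\dagger K$.
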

\begin{proof}
    As $\widehat D^{(1)}_\to(\rho_{AB})\geq D^{(1)}_\to(\rho_{AB})\geq 0$, it trivially follows that $\widehat D^{(1)}_\to(\rho_{AB}) = 0 \implies D^{(1)}_\to(\rho_{AB})=0$. To show the other way around, assume that $\widehat D^{(1)}_\to(\rho_{AB})> 0$. Therefore, there exists $K:A\to A'$ such that 
    $$I(A'\rangle B)_{\omega_{A'B}}>0;\qquad \omega_{A'B} = \frac{1}{\Tr( (K^\dagger K \otimes \mb I_B)\rho_{AB})} (K \otimes \mb I_B) \rho_{AB} (K \otimes \mb I_B)^\dagger$$
    Now choose $0<c\leq 1$ such that $E_0 :=c(K^\dagger K)\leq \mb{I}_A$. We further perform the following spectral decomposition 
    $$\mb{I}_A - E_0 = \sum_{k=1}^{\mathrm{dim} A} E_k$$
    Here $E_k$ for $k\geq 1$ are rank-1 positive operators with $0\leq \mathrm{Tr} (E_k) \leq 1$. Firstly, we note that $\{E_k\}_{k=0}^{\mathrm{dim} A}$ forms a POVM, therefore defines an instrument, $\mathcal{T}:A\to A'M$ where $M\cong\mb{C}^{\rm{dim} A +1}$. Secondly, for $k\geq 1$, define Kraus operators, $K_k:=U\sqrt{E_k} = \ketbra{\beta}{\alpha}$ for some vectors $\ket{\alpha}\in A, \ket{\beta}\in A'$, then the state
    $$\omega^{(k)}_{A'B}:=\ketbra{\beta}{\beta}_{A'}\otimes \frac{1}{\bra{\alpha} \rho_A\ket{\alpha}} \mathrm{Tr}_A\left[(\ketbra{\alpha}{\alpha}\otimes \mb{I}_B)\rho_{AB}\right]$$
    is a product state on $A'B$, i.e., having $I(A'\rangle B)_{\omega^{(k)}_{A'B}} = 0$. Therefore, we have
    \begin{align}
        I(A'\rangle BM)_{\mc{T}(\rho)} = c \Tr( (K^\dagger K \otimes \mb I_B)\rho_{AB}) \;I(A'\rangle B)_{\omega_{A'B}} > 0 \implies D^{(1)}_{\to}(\rho_{AB}) >0
    \end{align}
    By contrapositivity this is equivalent to $D^{(1)}_\to(\rho_{AB})=0 \implies \widehat D^{(1)}_\to(\rho_{AB}) = 0 $
\end{proof}


\section{General principle of non-degradable states with
single-letter one-way distillable entanglement}\label{sec:general-principle}
It is well-known that the one-way distillable entanglement of (anti-)degradable states has a single-letter expression (see \cite{Leditzky_2018}) and is equal to the coherent information. Moreover, two-way (thus one-way) distillable entanglement of PPT states is known to be zero. In this section, we introduce three distinct classes of non-degradable states that exhibit additivity in the one-way distillable entanglement. The first class relies on a weaker notion of degradable states. The second class concerns with states having an useless component. Thirdly, we propose a special class of generalised direct sum states which has a single-letter one-way distillable entanglement.


\color{black}
\subsection{States with weaker notion of degradability}
Motivated from previous work \cite{Smith_2025}, we propose the following two notions of weaker degradability
\begin{definition}
    We say that the bipartite state $\rho_{AB}$ with purification $\ket{\phi}_{ABE}$ is 
    \begin{itemize}
        \item less noisy at level $n$, if for any quantum instrument $\mc T^{A^n \to A'M}$, for the state $\omega_{A'B^nE^n}= \mc T^{A^n \to A'M}(\phi_{ABE}^{\otimes n})$, we have 
        \begin{equation}\label{less noisy level n}
            I(M;B^n) \ge I(M;E^n).
        \end{equation}
        We say $\rho_{AB}$ is regularized less noisy, if \eqref{less noisy level n} holds for any $n \ge 1$.
        \item informationally degradable, if for any quantum channel $\mc N^{A \to A'}$, for the state $\omega_{A'BE}= \mc N^{A \to A'}(\phi_{ABE})$, we have 
        \begin{equation}\label{informational degradable}
            I(A';B) \ge I(A';E). 
        \end{equation}
    \end{itemize}
\end{definition}
\begin{prop}\label{prop:single-letter}
Let $\rho_{AB}$ be a bipartite state which is less noisy at level $n \ge 1$. Then we have 
\begin{equation}
    D_\to^{(1)}(\rho_{AB}^{\otimes n}) = n D_\to^{(1)}(\rho_{AB}) = n\, I(A \rangle B).
\end{equation}
Thus if $\rho_{AB}$ is regularized less noisy, we have $D_\to(\rho_{AB})=D_\to^{(1)}(\rho_{AB})$.
\end{prop}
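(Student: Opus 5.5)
The plan is to prove the upper bound $D^{(1)}_\to(\rho_{AB}^{\otimes n}) \le n\,I(A\rangle B)_\rho$ and pair it with the trivial lower bounds. For the lower bounds, the trivial instrument (one outcome, $K=\mathbb I$) shows $D^{(1)}_\to(\rho_{AB}^{\otimes n}) \ge I(A^n\rangle B^n)_{\rho^{\otimes n}} = n I(A\rangle B)_\rho$, using additivity of coherent information on product states. Likewise $D^{(1)}_\to(\rho_{AB}) \ge I(A\rangle B)_\rho$.

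For the upper bound, apply Lemma \ref{lemma:alternative} to $\rho^{\otimes n}$ with purification $\phi_{ABE}^{\otimes n}$. For any instrument isometry $V^{A^n\to A'MN}$, with $\omega = V\phi^{\otimes n}V^\dagger$, the objective is
\[
S(B^nM)_\omega - S(E^nM)_\omega .
\]
Since $V$ acts only on $A^n$, the marginals on $B^n$ and $E^n$ are unchanged, so $S(B^n)_\omega = n S(B)_\rho$ and $S(E^n)_\omega = n S(E)_\rho$. I rewrite the objective by the chain rule:
\[
S(B^nM)-S(E^nM) = \bigl[S(B^n)-S(E^n)\bigr] + \bigl[S(M|B^n)-S(M|E^n)\bigr].
\]
Using $S(M|X)=S(M)-I(M;X)$, the second bracket equals $I(M;E^n)-I(M;B^n)$. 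This is $\le 0$ by the level-$n$ less-noisy condition \eqref{less noisy level n}, which applies because $\omega_{A'MB^nE^n}$ (tracing out $N$) is exactly $\mathcal T(\phi^{\otimes n})$ for the instrument $\mathcal T$. The first bracket equals $n[S(B)-S(E)]_\rho = n I(A\rangle B)_\rho$, since $S(E)=S(AB)$ for the pure state $\phi$. Maximizing over $V$ then gives $D^{(1)}_\to(\rho^{\otimes n}) \le n I(A\rangle B)_\rho$.

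To obtain the middle equality $n D^{(1)}_\to(\rho) = n I(A\rangle B)$, I need level-1 less noisiness, so that the same argument with $n=1$ gives $D^{(1)}_\to(\rho)=I(A\rangle B)$. The main subtlety is whether level $n$ implies level 1. I expect it does: given an instrument $\mathcal T$ on $A$, apply $\mathcal T$ on the first copy and the identity channel (absorbed into $A'$) on the remaining copies, viewed as an instrument on $A^n$. This gives $I(M;B^n)=I(M;B_1)$, because $M$ depends only on copy 1 and the other copies are in product, so $I(M;B_2\dots B_n|B_1)=0$. The same holds with $E$ in place of $B$, so the level-$n$ inequality reduces to the level-1 inequality.

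Finally, for the regularized statement, divide by $n$ and take the limit in \eqref{equivalent def:regularized}. This gives $D_\to(\rho)=I(A\rangle B)_\rho = D^{(1)}_\to(\rho)$. The obstacle is mainly bookkeeping: checking that the classical-register form of $V$ makes $I(M;\cdot)$ well defined and that the definition quantifies over all instruments with output $A'M$.
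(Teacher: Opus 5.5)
Your proposal is correct and follows the paper's proof: via Lemma~\ref{lemma:alternative}, the level-$n$ inequality $I(M;B^n)\ge I(M;E^n)$ is equivalent to $S(B^nM)-S(E^nM)\le S(B^n)-S(E^n)=n\,I(A\rangle B)_\rho$, and the trivial instrument gives the matching lower bound. The paper leaves the middle equality $D^{(1)}_\to(\rho)=I(A\rangle B)_\rho$ implicit, and your reduction from level $n$ to level $1$ (embedding an instrument on one copy) correctly supplies it; it also follows directly from superadditivity, $n\,D^{(1)}_\to(\rho)\le D^{(1)}_\to(\rho^{\otimes n})=n\,I(A\rangle B)_\rho$.
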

\begin{proof}
Fix $n\ge 1$ and suppose the optimal quantum instrument for $D^{(1)}_\to(\rho_{AB}^{\otimes n})$ is given by \(\mc T^{A^n \to A'M}\). Denote $V:A^n\to A'MN$ is the isometry of $\mc T$, defined as in \eqref{eqn:instrument isometry}. Via Lemma \ref{lemma:alternative} and the optimality of the quantum instrument,
\begin{align*}
    D^{(1)}_\to (\rho^{\otimes n}_{AB}) = S(B^n M)_\omega - S(E^n M)_\omega,\quad \ket{\omega}_{A'MNB^nE^n} := V^{A^n\to A'MN} \ket{\phi}^{\otimes n}_{ABE}.
\end{align*}
Note that $I(M;B^n)_\omega \ge I(M;E^n)_\omega$ is equivalent to $S(B^n M)_\omega - S(E^n M)_\omega \le S(B^n)_\omega - S(E^n)_\omega$ which implies 
\begin{align*}
    D^{(1)}_\to (\rho^{\otimes n}_{AB}) \le S(B^n)_\omega - S(E^n)_\omega = I(A^n \rangle B^n)_{\rho^{\otimes n}}
     \;=\; n\, I(A\rangle B)_\rho .
\end{align*}
where the last two equalities follow from the additivity of von Neumann entropy over the tensor products: $S(B^n)_{\rho^{\otimes n}} = n\,S(B)_\rho$ and $S(A^nB^n)_{\rho^{\otimes n}} = n\,S(AB)_\rho$, hence $I(A^n\rangle B^n)_{\rho^{\otimes n}} = S(B^n) - S(A^nB^n) = n\,I(A\rangle B)_\rho$. We conclude the proof by recalling that $n\,I(A\rangle B)_\rho$ is a lower bound for $D_\to^{(1)}(\rho_{AB}^{\otimes n})$.

\end{proof}

\begin{prop}
 Let $\rho_{AB}$ and $\sigma_{\wt A\wt B}$ be informationally degradable. Then for all $n,k\ge 0$,
\begin{equation}\label{eq:additivity-ID}
    D_{\to}^{(1)}\!\big(\rho_{AB}^{\otimes n}\!\otimes\!\sigma_{\wt A\wt B}^{\otimes k}\big)
    \;=\; n\,D_{\to}^{(1)}(\rho_{AB}) \;+\; k\,D_{\to}^{(1)}(\sigma_{\wt A\wt B})=  n\, I(A\rangle B)_\rho + k\, I(\wt A \rangle \wt B)_\sigma.
\end{equation}
In particular, informationally degradable states are single-letter for one-way distillable entanglement and are additive under tensor products.
\end{prop}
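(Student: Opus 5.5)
The plan is to prove the upper bound $D_{\to}^{(1)}\big(\rho^{\otimes n}\otimes\sigma^{\otimes k}\big)\le n\,I(A\rangle B)_\rho+k\,I(\wt A\rangle\wt B)_\sigma$; the reverse inequality is immediate by choosing the trivial instrument (a single Kraus operator $K=\mb I$), since coherent information is additive on tensor products. Moreover, the case $n=1,k=0$ of the upper bound gives $D^{(1)}_\to(\rho)=I(A\rangle B)_\rho$, which yields all the claimed equalities. It therefore suffices to treat a general product $\bigotimes_{i=1}^{L}\rho^{(i)}_{A_iB_i}$ of $L=n+k$ informationally degradable states, with purifications $\ket{\phi^{(i)}}_{A_iB_iE_i}$, and to show that for every instrument isometry $V$ of the form \eqref{eqn:instrument isometry},
\begin{equation*}
S(B^LM)_\omega-S(E^LM)_\omega\le \sum_{i=1}^{L}\big[S(B_i)-S(E_i)\big].
\end{equation*}
By Lemma~\ref{lemma:alternative} this bound gives the result.

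I will use the telescoping/hybrid argument. For $0\le j\le L$, set $X_j:=E_1\cdots E_jB_{j+1}\cdots B_L$, so that $X_0=B^L$ and $X_L=E^L$. Then
\begin{equation*}
S(B^LM)-S(E^LM)=\sum_{j=1}^{L}\big[S(X_{j-1}M)-S(X_jM)\big].
\end{equation*}
Write $Y_j:=E_1\cdots E_{j-1}B_{j+1}\cdots B_L$ and $C_j:=Y_jM$. The $j$-th term is then $S(B_jC_j)-S(E_jC_j)$. Moreover, since the input is a product state and the $B$ and $E$ systems are untouched, the marginal on $B_jY_j$ (resp. $E_jY_j$) is $\rho_{B_j}\otimes\rho_{Y_j}$ (resp. $\rho_{E_j}\otimes\rho_{Y_j}$). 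Hence
\begin{equation*}
S(B_jC_j)-S(E_jC_j)=\big[S(B_j)-S(E_j)\big]-\big[I(B_j;C_j)-I(E_j;C_j)\big].
\end{equation*}
It remains to show $I(B_j;C_j)\ge I(E_j;C_j)$, i.e. $I(B_j;Y_jM)\ge I(E_j;Y_jM)$.

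For this, view $C_j=Y_jM$ as the output of a channel $\mc N_j^{A_j\to A'_j}$ acting on $A_j$ of the $j$-th copy with $A'_j:=Y_jM$. Concretely, $\mc N_j$ appends the fixed states $\phi^{(i)}$ for $i\ne j$, applies the instrument $\mc T$ to $A^L$ to get the classical register $M$ (tracing out the output system $A'$ and the copy $N$), keeps $E_1,\dots,E_{j-1},B_{j+1},\dots,B_L$, and discards everything else. This is a legitimate CPTP map from $A_j$ to $Y_jM$, because the other copies are independent of $B_jE_j$. Then $\omega_{C_jB_jE_j}=\mc N_j(\phi^{(j)}_{A_jB_jE_j})$, and informational degradability \eqref{informational degradable} of $\rho^{(j)}$ yields $I(C_j;B_j)\ge I(C_j;E_j)$. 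Summing over $j$ gives the bound. Note that $Y_j$ mixes $E$'s of $\rho$-copies and $B$'s of $\sigma$-copies, but this causes no problem since only the $j$-th copy's property is used. That is exactly why strong additivity holds here and not just weak additivity.

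I expect the main obstacle to be the bookkeeping in this last step. One must check that the reduced state on $B_jE_jY_jM$ really equals $\mc N_j$ applied to $\phi^{(j)}$, and in particular that $M$ is obtained from the instrument on all of $A^L$ after tracing out $N$ and $A'$. One must also confirm that the product-marginal identity $S(B_jY_j)=S(B_j)+S(Y_j)$ holds because $V$ acts only on the $A$ systems.
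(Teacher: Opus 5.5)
Your proposal is correct and is essentially the paper's proof. Both use the same telescoping hybrid decomposition of $S(B^LM)-S(E^LM)$, bound each term with a channel from the $j$-th copy's input $A_j$ to the side information $Y_jM$, and then invoke informational degradability. The only difference is that you run a single hybrid chain over all $L=n+k$ copies, whereas the paper first swaps the $\wt B_j$'s (with $B^n$ as side information) and then the $B_i$'s (with $\wt E^k$ as side information). Your product-marginal observation is true but unnecessary, since $S(B_jC_j)-S(E_jC_j)=[S(B_j)-S(E_j)]-[I(B_j;C_j)-I(E_j;C_j)]$ follows directly from the definition of mutual information.
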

\begin{proof}
    Let $\ket{\phi}_{ABE}$, $\ket{\wt \phi}_{\wt A \wt B \wt E}$ be the purification states of $\rho,\sigma$ respectively. Fix $n,k \ge 0$ ($0$ means empty here), suppose $$\mc T^{A^n\wt A^k\to A'M}$$
    is the optimal quantum instrument for $D_{\to}^{(1)}\!\big(\rho_{AB}^{\otimes n}\!\otimes\!\sigma_{\wt A\wt B}^{\otimes k}\big)$. Denote $V:A^n\wt A^k\to A'MN$ is the isometry of $\mc T$, defined as in \eqref{eqn:instrument isometry}. Via Lemma \ref{lemma:alternative} and the optimality of the quantum instrument, we have 
    \begin{align*}
        D_{\to}^{(1)}\!\big(\rho_{AB}^{\otimes n}\!\otimes\!\sigma_{\wt A\wt B}^{\otimes k}\big) = S(B^n\wt B^k M)_\omega - S(E^n\wt E^kM)_\omega,
    \end{align*}
    where \begin{align}\label{eqn:state informational degradable}
        \ket{\omega}_{A'MNB^n \wt B^kE^n \wt E^k} := V^{A^n\wt A^k\to A'MN} \otimes \mb I_{B^n \wt B^k} \otimes \mb I_{E^n \wt E^k} \left(\ket{\phi}_{ABE}^{\otimes n} \otimes \ket{\wt \phi}^{\otimes k}_{\wt A \wt B \wt E}\right).
    \end{align}
    Now denote 
    \begin{align*}
        B^n = B_1 B_2 \cdots B_n,\quad \wt B^k = \wt B_1 \wt B_2 \cdots \wt B_k,
    \end{align*}
    and similar for $E^n, \wt E^k$. Then via telescoping argument, 
    \begin{align*}
        S(B^n\wt B^k M)_\omega - S(E^n\wt E^kM)_\omega & = S(B^n\wt B^k M)_\omega - S(B^n\wt E^kM)_\omega + S(B^n\wt E^kM)_\omega - S(E^n\wt E^kM)_\omega \\
        & = \sum_{j=1}^k \left(S(\wt B_j \wt V_jB^n M)_\omega  -  S(\wt E_j \wt V_jB^n M)_\omega \right) + \sum_{i=1}^n \left( S(B_i V_i \wt E^kM)_\omega  - S(E_i V_i \wt E^kM)_\omega \right),
    \end{align*}
    where $V_i, \wt V_j$ are $n-1$, $k-1$ subsystems defined by 
    \begin{align*}
        V_i =
\begin{cases}
B_2 \cdots B_n, & i=1,\\[2pt]
E_1 \cdots E_{i-1}\, B_{i+1} \cdots B_n, & 2 \le i \le n-1,\\[2pt]
E_1 \cdots E_{n-1}, & i=n.
\end{cases},\quad \wt V_j = \begin{cases}
\wt B_2 \cdots \wt B_k, & j=1,\\[2pt]
\wt E_1 \cdots \wt E_{j-1}\, \wt B_{j+1} \cdots \wt B_k, & 2 \le j \le k-1,\\[2pt]
\wt E_1 \cdots \wt E_{k-1}, & j=k.
\end{cases}
    \end{align*}
    Now we claim that 
\begin{align}\label{claim:informational degradable}
    S(\wt B_j \wt V_jB^n M)_\omega  -  S(\wt E_j \wt V_jB^n M)_\omega \le S(\wt B_j)_\omega - S(\wt E_j)_\omega,\quad S(B_i V_i \wt E^kM)_\omega  - S(E_i V_i \wt E^kM)_\omega \le S(B_i)_\omega - S(E_i)_\omega,\ \forall i,j. 
\end{align}
In fact, for each $1\le j \le k$, one can construct a quantum channel $\wt{\mc N_j}^{\wt A_j \to \wt V_j B^n M}$ by 
\begin{align*}
    \wt{\mc N_j}(\rho_{\wt A_j}):= \Tr_{(\wt V_j B^n M)^c} \bigg[ V^{A^n\wt A^k\to A'MN} (\ketbra{\phi}{\phi}_{ABE}^{\otimes n} \otimes \rho_{\wt A_j} \otimes  \ketbra{\wt \phi}{\wt \phi}^{\otimes (k-1)}_{\wt A \wt B \wt E}) (V^{A^n\wt A^k\to A'MN})^{\dagger}\bigg],
\end{align*}
where $(\wt A_j \wt V_j B^n M)^c$ denotes the complement of $\wt A_j \wt V_j B^n M$. Therefore, we have 
\begin{align*}
    & \wt{\mc N_j}^{\wt A_j \to \wt V_j B^n M} (\ketbra{\wt \phi}{\wt \phi}_{\wt A_j \wt B_j \wt E_j}) = \omega_{\wt B_j \wt E_j \wt V_j B^n M},  
\end{align*}
where $\omega_{\wt B_j \wt E_j \wt V_j B^n M}$ is the reduced density of $\omega$ defined in \eqref{eqn:state informational degradable}. Since $\ket{\wt \phi}_{\wt A_j \wt B_j \wt E_j}$ is the purification of the informational degradable state $\sigma_{\wt A\wt B}$, via \eqref{informational degradable}, we have 
\begin{align*}
    I(\wt V_jB^n M; \wt B_j )\ge I(\wt V_jB^n M;\wt E_j),
\end{align*}
concluding the claim that $S(\wt B_j \wt V_jB^n M)_\omega  -  S(\wt E_j \wt V_jB^n M)_\omega \le S(\wt B_j)_\omega - S(\wt E_j)_\omega$. Similarly, we have $$S(B_i V_i \wt E^kM)_\omega  - S(E_i V_i \wt E^kM)_\omega \le S(B_i)_\omega - S(E_i)_\omega. $$ Therefore, one has 
\begin{align*}
    S(B^n\wt B^k M)_\omega - S(E^n\wt E^kM)_\omega & = \sum_{j=1}^k \left(S(\wt B_j \wt V_jB^n M)_\omega  -  S(\wt E_j \wt V_jB^n M)_\omega \right) + \sum_{i=1}^n \left( S(B_i V_i \wt E^kM)_\omega  - S(E_i V_i \wt E^kM)_\omega \right) \\
    & \le k (S(\wt B)_\sigma - S(\wt E)_\sigma) + n(S(B)_\rho - S(E)_\rho) = k\, I(\wt A \rangle \wt B)_\sigma + n\, I(A\rangle B)_\rho.
\end{align*}
We conclude the proof by recalling that $D_{\to}^{(1)}\!\big(\rho_{AB}^{\otimes n}\!\otimes\!\sigma_{\wt A\wt B}^{\otimes k}\big) \ge k\, I(\wt A \rangle \wt B)_\sigma + n\, I(A\rangle B)_\rho$.
\end{proof}


\subsection{Non-degradable states with useless component}
In this subsection we exhibit a family of generally non-degradable states whose one-way distillable entanglement is single-letter. Heuristically, these states are orthogonally flagged mixtures on Alice, so the additivity inherited from the components persists even though degradability may fail. The formal statement is:
\begin{theorem}\label{main:class 1}
    Let $\rho_0$ and $\rho_1$ be bipartite states on $AB$ satisfying \begin{equation}\label{assumption:additivity}
    D^{(1)}_{\to}(\bigotimes_{i=1}^n \rho_{w_i})
\;\le\; \sum_{i=1}^n D^{(1)}_{\to}(\rho_{w_i})
\;=\; (n-|w^n|)\,D^{(1)}_{\to}(\rho_0)+|w^n|\,D^{(1)}_{\to}(\rho_1)
\end{equation}
for all $w^n = (w_1,\cdots w_n)\in\{0,1\}^n$ and $n\in\mathbb{N}$. Moreover, assume $\rho_0^A \perp \rho_1^A$. Then for all $p\in[0,1]$, we have 
    \begin{align*}
        D_{\to}\big(p\rho_0+(1-p)\rho_1\big) = D^{(1)}_{\to}\big(p\rho_0+(1-p)\rho_1\big)= p D_{\to}(\rho_0) + (1-p) D_{\to}(\rho_1).
    \end{align*}
\end{theorem}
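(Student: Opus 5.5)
Write $\rho_p := p\rho_0+(1-p)\rho_1$ and let $P_0,P_1$ be the orthogonal projectors onto the supports of $\rho_0^A,\rho_1^A$, completed so that $P_0+P_1=\mathbb I_A$. The plan is to sandwich all quantities between $pD^{(1)}_\to(\rho_0)+(1-p)D^{(1)}_\to(\rho_1)$ from below and above.

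\textbf{Upper bound.} The key observation is that the multi-copy state decomposes as an orthogonal flagged mixture on Alice's side:
\begin{equation*}
\rho_p^{\otimes n}=\sum_{w^n\in\{0,1\}^n} p^{\,n-|w^n|}(1-p)^{|w^n|}\,\rho_{w^n},\qquad \rho_{w^n}:=\bigotimes_{i=1}^n\rho_{w_i},
\end{equation*}
where the $\rho_{w^n}$ have mutually orthogonal $A^n$-supports $P_{w^n}:=\bigotimes_i P_{w_i}$. Take an optimal instrument $\{K_m\}$ for $D^{(1)}_\to(\rho_p^{\otimes n})$. Composing it with the $A^n$ measurement $\{P_{w^n}\}$ and recording $w^n$ in $M$ does not change the output state, since $K_m\rho_p^{\otimes n}K_m^\dagger=\sum_{w}K_mP_{w}\rho_{w}P_{w}K_m^\dagger$. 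The flagged state is then block diagonal in the classical register $W$ (held by Bob via $M$). Hence
\begin{equation*}
I(A'\rangle B^nM)_{\mathcal T(\rho_p^{\otimes n})}=\sum_{w^n}p^{n-|w^n|}(1-p)^{|w^n|}\,I(A'\rangle B^nM)_{\mathcal T^{(w)}(\rho_{w^n})},
\end{equation*}
where $\mathcal T^{(w)}$ has Kraus operators $K_mP_{w^n}$. Each $\mathcal T^{(w)}$ is a valid instrument on the support of $\rho_{w^n}$, and it can be completed by a trivial Kraus operator on $\mathbb I-P_{w^n}$, which acts on zero weight. So each term is at most $D^{(1)}_\to(\rho_{w^n})$.

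This step needs care: the expansion must hold even without optimally separating $W$. The cleaner route is to write Bob's register as $MW$ and check that the cq-structure makes coherent information conditional on $W$ decompose as an average. This holds because the $W$ flag is perfectly correlated with Alice's orthogonal supports, so there are no cross terms. Applying assumption \eqref{assumption:additivity} then gives
\begin{equation*}
D^{(1)}_\to(\rho_p^{\otimes n})\le\sum_{w^n}p^{n-|w^n|}(1-p)^{|w^n|}\big[(n-|w^n|)D^{(1)}_\to(\rho_0)+|w^n|D^{(1)}_\to(\rho_1)\big]=n\big[pD^{(1)}_\to(\rho_0)+(1-p)D^{(1)}_\to(\rho_1)\big].
\end{equation*}

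\textbf{Lower bound.} For $n=1$, Alice first measures $\{P_0,P_1\}$ and then applies the optimal instrument for $\rho_0$ or for $\rho_1$ accordingly, with the outcome sent in $M$. This achieves $D^{(1)}_\to(\rho_p)\ge pD^{(1)}_\to(\rho_0)+(1-p)D^{(1)}_\to(\rho_1)$. The same computation as in the upper bound applies here, with block diagonality again making the value an average.

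\textbf{Conclusion.} Combining the two bounds, dividing by $n$ and using \eqref{equivalent def:regularized}, we get $D_\to(\rho_p)=D^{(1)}_\to(\rho_p)=pD^{(1)}_\to(\rho_0)+(1-p)D^{(1)}_\to(\rho_1)$. The assumption with $w^n$ constant gives $D_\to(\rho_i)=D^{(1)}_\to(\rho_i)$, which yields the stated form.

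\textbf{Main obstacle.} The main obstacle is the upper-bound step: justifying rigorously that inserting the $\{P_{w^n}\}$ measurement is free and that the coherent information splits as the weighted average over $w^n$. I would verify this via Lemma~\ref{lemma:alternative}, writing $S(B^nMW)-S(E^nMW)$. Adding $W$ to the conditioning of both terms is valid because $W$ is a function of Alice's support, which is copied classically to both $M$ and the purifying system.
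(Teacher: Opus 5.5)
Your proof is correct and follows the paper's own argument: the lower bound is the single-copy flag-measuring instrument of Lemma~\ref{lemma:concavity}, and your upper bound is a self-contained proof of Lemma~\ref{lemma:convexity} (which the paper cites from \cite{Leditzky_2018}), combined exactly as in the paper's final chain. One correction: for an arbitrary instrument, the displayed identity should read $I(A'\rangle B^nM)_{\mathcal T(\rho_p^{\otimes n})}\le\sum_{w^n}p^{n-|w^n|}(1-p)^{|w^n|}\,I(A'\rangle B^nM)_{\mathcal T(\rho_{w^n})}$, because giving Bob the flag $W$ can only lower $H(A'|B^nM)$; this is the direction you need, and it holds even without $\rho_0^A\perp\rho_1^A$.
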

To prove the above theorem, we need the following lemmas:
\begin{lemma}\label{lemma:concavity}
Let $\sigma,\tau$ be bipartite states on $AB$ such that $\sigma^A \perp \tau^A$, i.e., $\mathrm{supp}(\sigma^A) \perp \mathrm{supp}(\tau^A)$. Then for $p\in[0,1]$, we have 
\begin{equation}
   D^{(1)}_{\to}(\rho) \ge \;p\,D^{(1)}_{\to}(\sigma) + (1-p) D^{(1)}_{\to}(\tau),\quad  D_{\to}(\rho) \ge \;p\,D_{\to}(\sigma) + (1-p) D_{\to}(\tau).
\end{equation}
\end{lemma}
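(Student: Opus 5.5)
Here $\rho = p\sigma+(1-p)\tau$. The plan is to let Alice first measure which orthogonal support she holds, and then run the optimal protocol for that component conditionally.

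Let $P_0$ be the projector onto $\mathrm{supp}(\sigma^A)$ and $P_1=\mathbb{I}_A-P_0$. Since $\sigma^A\perp\tau^A$, we have $(P_0\otimes\mathbb{I}_B)\sigma(P_0\otimes\mathbb{I}_B)=\sigma$ and $(P_0\otimes\mathbb{I}_B)\tau=0$, and symmetrically for $P_1$. Hence the two-outcome instrument $\{P_0,P_1\}$ maps $\rho$ to $p\,\sigma\otimes|0\rangle\langle0|_F+(1-p)\,\tau\otimes|1\rangle\langle1|_F$, with the flag $F$ sent to Bob.

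\textbf{Single-letter bound.} Let $\{K_m\}$ and $\{L_{m'}\}$ be optimal instruments for $D^{(1)}_\to(\sigma)$ and $D^{(1)}_\to(\tau)$, padded to a common output $A'$. Define the combined Kraus family $\{K_mP_0\}\cup\{L_{m'}P_1\}$ with disjoint labels. Completeness holds because
\[
\sum_m P_0K_m^\dagger K_mP_0+\sum_{m'}P_1L_{m'}^\dagger L_{m'}P_1=P_0+P_1=\mathbb{I}_A .
\]
Applied to $\rho$, branch $m$ gives $p\,K_m\sigma K_m^\dagger$, because $K_mP_0\tau P_0K_m^\dagger=0$, and $P_0\sigma P_0=\sigma$ inside the branch. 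The branches $m'$ behave likewise with $\tau$. Since $I(A'\rangle BM)$ is the weighted average of the branch coherent informations, it equals $p\,D^{(1)}_\to(\sigma)+(1-p)D^{(1)}_\to(\tau)$. This gives the first inequality.

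\textbf{Regularized bound.} Expand
\[
\rho^{\otimes n}=\sum_{w^n}p^{n-|w|}(1-p)^{|w|}\bigotimes_i\rho_{w_i},
\]
with $\rho_{0}=\sigma$ and $\rho_{1}=\tau$. The $A^n$-supports of the terms for different $w^n$ are mutually orthogonal, given by the projectors $P_{w^n}=\bigotimes_iP_{w_i}$. Applying the same construction as above to the $2^n$ orthogonal blocks gives
\[
D^{(1)}_\to(\rho^{\otimes n})\ge\sum_{w^n}p^{n-|w|}(1-p)^{|w|}D^{(1)}_\to\Big(\bigotimes_i\rho_{w_i}\Big).
\]
The remaining task is to lower bound $D^{(1)}_\to(\sigma^{\otimes k}\otimes\tau^{\otimes(n-k)})$ by something like $kD_\to(\sigma)+(n-k)D_\to(\tau)-o(n)$, up to a permutation of tensor factors, which is harmless by \eqref{eqn:unitary invariance}.

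Superadditivity of $D^{(1)}_\to$ under tensor products follows from tensoring instruments. This gives
\[
D^{(1)}_\to(\sigma^{\otimes k}\otimes\tau^{\otimes(n-k)})\ge D^{(1)}_\to(\sigma^{\otimes k})+D^{(1)}_\to(\tau^{\otimes(n-k)}).
\]
By the binomial law of large numbers, $k\approx pn$ with high probability. I also need $\frac1kD^{(1)}_\to(\sigma^{\otimes k})\to D_\to(\sigma)$, which is \eqref{equivalent def:regularized}. Combining these, dividing by $n$, and letting $n\to\infty$ gives $D_\to(\rho)\ge pD_\to(\sigma)+(1-p)D_\to(\tau)$. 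The atypical $k$ contribute nonnegative terms and can simply be dropped.

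\textbf{Main obstacle.} The delicate point is uniformity in the limit. I need $\frac1nD^{(1)}_\to(\sigma^{\otimes k})\ge \frac kn(D_\to(\sigma)-\epsilon)$ for all typical $k$ simultaneously. This holds once $k\ge k_0(\epsilon)$, which is guaranteed for typical $k$ when $n$ is large, except when $p\in\{0,1\}$, where the claim is trivial. An alternative route is purely operational: Alice measures $P_{w^n}$, broadcasts $w^n$, and both parties run rate-achieving protocols on the $\sigma$ and $\tau$ positions. This avoids the regularization limits but requires fidelity bookkeeping, so I would use the entropic route.
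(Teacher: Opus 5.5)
Your proof is correct. The single-letter part matches the paper's: the paper first applies the flagging isometry $V:A\to X\otimes A'$ and then conditions the optimal instruments on $X$, and your Kraus family $\{K_mP_0\}\cup\{L_{m'}P_1\}$ is the same instrument written without the isometry.

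For the regularized bound you take a genuinely different route. The paper argues operationally. Alice measures $X^n$, and for strings with $N_0(x^n),N_1(x^n)\ge K$ the parties run rate-achieving protocols $\Pi^\sigma_{N_0}$ and $\Pi^\tau_{N_1}$. The rate is then bounded through $D^{(n,\varepsilon)}_\to$ in \eqref{equivalent def:n block}. You stay entropic throughout. You view $\rho^{\otimes n}$ as a $2^n$-fold mixture with mutually orthogonal $A^n$-supports and reuse the single-letter construction on it. You then invoke superadditivity of $D^{(1)}_\to$, the Devetak--Winter formula \eqref{equivalent def:regularized}, and the law of large numbers.

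What your route buys is rigor and brevity. The paper lower-bounds $nD^{(n,\varepsilon)}_\to$ by an \emph{average} of $\log M^\sigma_{N_0}+\log M^\tau_{N_1}$ over outcomes. The definition, however, requires a single target $\Phi_M$ reached with fidelity at least $1-\varepsilon$. Making the paper's argument precise would need restriction to typical types, truncation to a common $M$, and error accounting for the discarded branches. You avoid all of that.

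The cost is reliance on the Devetak--Winter coding theorem, whereas a cleaned-up operational argument works directly from achievable rates. Since the paper takes \eqref{equivalent def:regularized} as given anyway, this costs nothing here.

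Your handling of uniformity is also sound: choosing $k_0(\epsilon)$, dropping the nonnegative atypical terms, and dispatching $p\in\{0,1\}$ separately.
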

\begin{proof}
Define $A_0 := \mathrm{supp}(\sigma^A), A_1:= \mathrm{supp}(\tau^A)$, we have $\mathrm{supp}(\sigma)\subseteq A_0B,\ \mathrm{supp}(\tau)\subseteq A_1B$. Define an isometry $V:A\to X\otimes A'$:
\[
V\big|_{A_0}=|0\rangle_X\otimes id_{A_0},\qquad
V\big|_{A_1}=|1\rangle_X\otimes id_{A_1},
\]
where $X=\mathrm{span}\{|0\rangle,|1\rangle\}$ and $A'$ is a copy of $A_0\oplus A_1$.
Let
\[
\omega\;:=\;(V\otimes \mb I_B)\,\rho\,(V^\dagger\otimes \mb I_B)
\;=\; p\,|0\rangle\!\langle0|_X\otimes \sigma\;+\;(1-p)\,|1\rangle\!\langle1|_X\otimes \tau.
\]
Via \eqref{eqn:unitary invariance}, we have $D^{(1)}_{\to}(\omega)=D^{(1)}_{\to}(\rho),\ D_{\to}(\omega)=D_{\to}(\rho)$. It remains to show that 
\begin{align}
& D^{(1)}_{\to}(\omega) = D^{(1)}_{\to}(p\,|0\rangle\!\langle0|_X\otimes \sigma\;+\;(1-p)\,|1\rangle\!\langle1|_X\otimes \tau) \ge p\,D^{(1)}_{\to}(\sigma) + (1-p) D^{(1)}_{\to}(\tau), \label{concavity:1}\\
    & D_{\to}(\omega) = D_{\to}(p\,|0\rangle\!\langle0|_X\otimes \sigma\;+\;(1-p)\,|1\rangle\!\langle1|_X\otimes \tau) \ge p\,D_{\to}(\sigma) + (1-p) D_{\to}(\tau).\label{concavity:regularized}
\end{align}
To show~\eqref{concavity:1}, suppose $\mc T^0:A \to M_0A_0',\ \mc T^1: A \to M_1 A_1'$ are the optimal quantum instruments for $D^{(1)}_{\to}(\sigma)$ and $D^{(1)}_{\to}(\tau)$, respectively. Then one finds the quantum instrument $\mc T = \ketbra{0}{0} \cdot \ketbra{0}{0} \otimes \mc T^0 + \ketbra{1}{1} \cdot \ketbra{1}{1} \otimes \mc T^1$ provides a lower bound of $D^{(1)}_{\to}(p\,|0\rangle\!\langle0|_X\otimes \sigma\;+\;(1-p)\,|1\rangle\!\langle1|_X\otimes \tau)$, which is $p\,D^{(1)}_{\to}(\sigma) + (1-p) D^{(1)}_{\to}(\tau)$. 

To show~\eqref{concavity:regularized}, for each $n$, $\omega^{\otimes n}$ is block-diagonal in the computational basis of $X^n$:
\[
\omega^{\otimes n}\;=\;\sum_{x^n\in\{0,1\}^n} p^{N_0(x^n)}(1-p)^{N_1(x^n)}\,
|x^n\rangle\!\langle x^n|_{X^n}\ \otimes\ \sigma^{\otimes N_0(x^n)} \otimes \tau^{\otimes N_1(x^n)},
\]
where $N_0(x^n)$ and $N_1(x^n)=n-N_0(x^n)$ denote the type counts (numbers of 0's and 1's).
Let $\Delta$ be the completely dephasing channel on $X^n$ in this basis; then $\Delta(\omega^{\otimes n})=\omega^{\otimes n}$.
Therefore prepending $\Delta$ to any protocol does not change its action on $\omega^{\otimes n}$.
Hence, without loss of generality, any protocol may be assumed to \emph{measure} $X^n$ in this basis at the beginning and obtain the classical string $x^n$.

Fix $\varepsilon>0$.
By the definition of $D_{\to}(\sigma), \ D_{\to}(\tau)$, there exists $K\ge 1$ such that for all $N\ge K$ there are one-way protocols $\Pi^{\sigma}_N, \Pi^{\tau}_N$ on $\sigma^{\otimes N}$ and $\tau^{\otimes N}$ producing $(N,M^{\sigma}_N,\varepsilon)$ and $(N,M^{\tau}_N,\varepsilon)$ with
\begin{equation}\label{eqn:protocol}
    \frac{1}{N}\log M^\sigma_N \;\ge\; D_{\to}(\sigma)-\varepsilon,\quad \frac{1}{N}\log M^\tau_N \;\ge\; D_{\to}(\tau)-\varepsilon.
\end{equation}
On input $\omega^{\otimes n}$, measure $X^n$ to obtain $x^n$ and the conditional state
$\sigma^{\otimes N_0(x^n)}\otimes \tau^{\otimes N_1(x^n)}$.
If $N_0(x^n),N_1(x^n)\ge K$, run $\Pi^{\sigma}_{N_0(x^n)}$ on the $\sigma$-positions and $\Pi^{\tau}_{N_1(x^n)}$ on the $\tau$-positions; otherwise output a trivial product (zero ebits). For large enough $n$,
\begin{align*}
    n\,D^{(n,\varepsilon)}_\to(\omega) & \ge \sum_{x^n\in\{0,1\}^n, N_0(x^n),N_1(x^n) \ge K} p^{N_0(x^n)}(1-p)^{N_1(x^n)} \left(\log M^\sigma_{N_0(x^n)} + \log M^\tau_{N_1(x^n)}\right) \\
    & \ge \sum_{x^n\in\{0,1\}^n, N_0(x^n),N_1(x^n) \ge K} p^{N_0(x^n)}(1-p)^{N_1(x^n)} \left( N_0(x^n) (D_\to(\sigma) - \varepsilon) + N_1(x^n) (D_\to(\tau) - \varepsilon)\right),
\end{align*}
where for the first inequality, we use the one-way protocols consisting of $\Pi^{\sigma}_{N_0(x^n)}, \Pi^{\tau}_{N_1(x^n)}$ depending on the measurement outcome on $X^n$, and the second inequality follows from~\eqref{eqn:protocol}. Finally, using 
\begin{align*}
    & \sum_{x^n\in\{0,1\}^n, N_0(x^n),N_1(x^n) \ge K} p^{N_0(x^n)}(1-p)^{N_1(x^n)} N_0(x^n) = pn(1- o_n(1)), \\
    & \sum_{x^n\in\{0,1\}^n, N_0(x^n),N_1(x^n) \ge K} p^{N_0(x^n)}(1-p)^{N_1(x^n)} N_1(x^n) = (1-p)n(1- o_n(1)),
\end{align*}
we have \begin{align*}
    D^{(n,\varepsilon)}_\to(\omega) \ge p(1-o_n(1)) (D_\to(\sigma) -\varepsilon) +(1-p)(1-o_n(1)) (D_\to(\tau) -\varepsilon)
\end{align*}
Let $n\to \infty$ and then $\varepsilon \to 0$, by definition~\eqref{equivalent def:n block} we have 
\begin{align*}
    D_{\to}(\rho)=D_{\to}(\omega)  \ge \;p\,D_{\to}(\sigma) + (1-p) D_{\to}(\tau).
\end{align*}
\end{proof}
For the other direction, we use the following lemma, which is Proposition 2.7 in \cite{Leditzky_2018}:
\begin{lemma}\label{lemma:convexity}
Let $\rho_0$ and $\rho_1$ be bipartite states on $AB$ satisfying
\begin{equation}
    D^{(1)}_{\to}(\bigotimes_{i=1}^n \rho_{w_i})
\;\le\; \sum_{i=1}^n D^{(1)}_{\to}(\rho_{w_i})
\;=\; (n-|w^n|)\,D^{(1)}_{\to}(\rho_0)+|w^n|\,D^{(1)}_{\to}(\rho_1)
\end{equation}
for all $w^n = (w_1,\cdots w_n)\in\{0,1\}^n$ and $n\in\mathbb{N}$. Then for all $p\in[0,1]$,
\[
D_{\to}\big(p\rho_0+(1-p)\rho_1\big)
\;\le\; p\,D_{\to}(\rho_0)+(1-p)\,D_{\to}(\rho_1).
\]
\end{lemma}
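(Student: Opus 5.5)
The plan is to bound the mixture by a \emph{flagged} version in which Bob holds a classical label of the component. Let $\rho := p\rho_0+(1-p)\rho_1$ and define
\[
\omega_{ABF} := p\,\rho_0\otimes\ketbra{0}{0}_F + (1-p)\,\rho_1\otimes\ketbra{1}{1}_F ,
\]
with the flag $F$ on Bob's side. Since $\rho=\mathrm{Tr}_F\,\omega$, the state $\rho$ is obtained from $\omega$ by a local channel on Bob. For any instrument $\mathcal T$ on $A^n$, data processing for the coherent information under $\mathrm{Tr}_{F^n}$ gives
\[
I(A'\rangle B^nM)_{\mathcal T(\rho^{\otimes n})}\le I(A'\rangle B^nF^nM)_{\mathcal T(\omega^{\otimes n})}.
\]
Hence $D^{(1)}_\to(\rho^{\otimes n})\le D^{(1)}_\to(\omega^{\otimes n})$ for every $n$, and by the regularized formula \eqref{equivalent def:regularized} also $D_\to(\rho)\le D_\to(\omega)$. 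It therefore suffices to show $D_\to(\omega)\le pD_\to(\rho_0)+(1-p)D_\to(\rho_1)$.

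Next I would expand
\[
\omega^{\otimes n}=\sum_{w^n}p_{w^n}\,\Big(\bigotimes_i\rho_{w_i}\Big)\otimes\ketbra{w^n}{w^n}_{F^n},
\qquad p_{w^n}=p^{\,n-|w^n|}(1-p)^{|w^n|}.
\]
Fix any instrument $\mathcal T=\{K_m\}$ on $A^n$. Because $\mathcal T$ acts only on Alice's side, the output is still block diagonal in the classical register $F^n$, which sits with Bob. For a coherent information whose conditioning side contains a classical register, $S(B^nF^nM)-S(A'B^nF^nM)$ splits as an average over that register. This gives
\[
I(A'\rangle B^nF^nM)_{\mathcal T(\omega^{\otimes n})}=\sum_{w^n}p_{w^n}\,I(A'\rangle B^nM)_{\mathcal T(\bigotimes_i\rho_{w_i})}.
\]
The same instrument $\mathcal T$ is a legitimate instrument on the state $\bigotimes_i\rho_{w_i}$. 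So each summand is at most $D^{(1)}_\to(\bigotimes_i\rho_{w_i})$, which by hypothesis is at most $(n-|w^n|)D^{(1)}_\to(\rho_0)+|w^n|D^{(1)}_\to(\rho_1)$.

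Averaging over the binomial distribution of $|w^n|$ then gives
\[
D^{(1)}_\to(\omega^{\otimes n})\le n\big[pD^{(1)}_\to(\rho_0)+(1-p)D^{(1)}_\to(\rho_1)\big].
\]
The hypothesis applied to the constant words $0^n$ and $1^n$ gives $D^{(1)}_\to(\rho_i^{\otimes n})\le nD^{(1)}_\to(\rho_i)$. The reverse inequality is superadditivity, obtained from product instruments, so $D_\to(\rho_i)=D^{(1)}_\to(\rho_i)$. Dividing by $n$ and letting $n\to\infty$ yields
\[
D_\to(\rho)\le D_\to(\omega)\le pD_\to(\rho_0)+(1-p)D_\to(\rho_1).
\]

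The step that needs care is the decomposition of the conditional coherent information over the classical flag. One must check that the entropies $S(B^nF^nM)$ and $S(A'B^nF^nM)$ each split as $H(F^n)$ plus the $p_{w^n}$-average of the conditional entropies, so that the $H(F^n)$ terms cancel. One must also check that the normalized branch of $\mathcal T$ applied to $\omega^{\otimes n}$, conditioned on $F^n=w^n$, is exactly $\mathcal T(\bigotimes_i\rho_{w_i})$. Both facts hold because $F^n$ is untouched by Alice's operation and is classical; everything else is routine.
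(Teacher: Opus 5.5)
Your proof is correct and follows essentially the same route as the source: the paper does not prove this lemma itself but imports it as Proposition~2.7 of \cite{Leditzky_2018}, whose argument likewise applies convexity of $D^{(1)}_{\to}$ (which your Bob-held flag plus data processing, i.e.\ concavity of conditional entropy, establishes) to $\rho^{\otimes n}=\sum_{w^n}p_{w^n}\bigotimes_i\rho_{w_i}$, and then uses the additivity hypothesis and binomial averaging. One small simplification: you do not need $D_{\to}(\rho_i)=D^{(1)}_{\to}(\rho_i)$, since the trivial bound $D^{(1)}_{\to}(\rho_i)\le D_{\to}(\rho_i)$ already gives the final inequality.
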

We are able to prove Theorem~\ref{main:class 1}:
\begin{proof}[Proof of Theorem~\ref{main:class 1}]
    The first conclusion follows directly from Lemma~\ref{lemma:concavity} and Lemma~\ref{lemma:convexity}. For the second conclusion (additivity), we have 
    \begin{align*}
        p\,D^{(1)}_{\to}(\rho_0) + (1-p) D^{(1)}_{\to}(\rho_1) &\underset{\text{Lemma~\ref{lemma:concavity}}}{\le}D^{(1)}_{\to}\big(p\rho_0+(1-p)\rho_1\big) \le  D_{\to}\big(p\rho_0+(1-p)\rho_1\big) \\
        & \underset{\text{Lemma~\ref{lemma:convexity}}}{\le} p\,D_{\to}(\rho_0)+(1-p)\,D_{\to}(\rho_1) =  p\,D^{(1)}_{\to}(\rho_0)+(1-p)\,D^{(1)}_{\to}(\rho_1).
    \end{align*}
\end{proof}
\subsection{Spin alignment phenomenon}
In this subsection we introduce the spin-alignment phenomenon, generalized from~\cite{Leditzky_2023}.
Let $\sigma_0\in\mc D(B_0)$ and $\sigma_1\in\mc D(B_1)$ be fixed states, and assume $A_i\simeq B_i$ for $i=0,1$.
Define the (partially erasing) channels
\begin{align}
  \mc N_0^{A_0\to B_0B_1}(\rho) &:= \rho\otimes \sigma_1,\\
  \mc N_1^{A_1\to B_0B_1}(\rho) &:= \sigma_0\otimes \rho.
\end{align}
Given $n\ge 1$ and suppose $\vec x=(x_1,\dots,x_n)\in\{0,1\}^n$ is a $n$-bit string. Define the product channel
\begin{equation}
      \mc N_{\vec x}:= \bigotimes_{i=1}^n \mc N_{x_i}^{A_{x_i}^{(i)}\to B_0^{(i)}B_1^{(i)}}.
\end{equation}
For each position $i$, the input system is $A_{x_i}^{(i)}$ and the output system is $B_0^{(i)}B_1^{(i)}$.
For $n$-bit string $\vec x$, the input state is $\rho_{\vec x}\in\mc D \Big(\bigotimes_{i=1}^n A_{x_i}^{(i)}\Big)$. The spin alignment phenomenon is as follows:
\begin{conjecture}[Spin alignment phenemenon]\label{conj:spin-alignment}
Fix a probability distribution $\{p_{\vec x}\}_{\vec x\in\{0,1\}^n}$, the minimization of the von Neumann entropy 
\begin{equation}\label{eqn:spin von Neumann}
  \min_{\rho_{\vec x}} S \left(\sum_{\vec x\in\{0,1\}^n} p_{\vec x}\, \mc N_{\vec x}(\rho_{\vec x}) \right)
\end{equation}
has an optimal choice of input states $\{\rho_{\vec x}\}$ for which
\[
\rho_{\vec x}=\bigotimes_{i=1}^n \tau_{x_i}^{(i)},
\qquad\text{where}\qquad
\tau_0^{(i)}=|\psi_{\max}\rangle\langle\psi_{\max}|_{A_0^{(i)}},\quad
\tau_1^{(i)}=|\phi_{\max}\rangle\langle\phi_{\max}|_{A_1^{(i)}},
\]
where $\ket{\psi_{\max}}$ (resp.\ $\ket{\phi_{\max}}$) is any unit vector in the maximal-eigenspace of $\sigma_0$ (resp.\ $\sigma_1$), and $|\vec x|_0:=|\{i:\ x_i=0\}|$. Equivalently, each freely chosen spin is aligned with a maximal eigenvector of the corresponding fixed state.
\end{conjecture}
Note that since the map $(\rho_{\vec x})_{\vec x} \mapsto \sum_{\vec x\in\{0,1\}^n} p_{\vec x}\, \mc N_{\vec x}(\rho_{\vec x})$ imposes a convex combination and $S(\cdot)$ is concave, there exists an optimal choice with every $\rho_{\vec x}$ pure. Moreover, if $\sigma_0 = \sigma_1$, it can be recovered by the spin alignment conjecture proposed in \cite{Leditzky_2023}. 

Here we provide the rigorous justifications for several special cases. 
\subsubsection{$n = 1$}
Given $p_0,p_1\ge 0$ with $p_0+p_1=1$, consider the entropy minimization problem
\begin{equation}\label{eq:spin-align-1copy}
  \min_{\rho_{A_0}\in\mc D(A_0),\,\rho_{A_1}\in\mc D(A_1)}
  \ S\big(p_0\,\mc N_0(\rho_{A_0})+p_1\,\mc N_1(\rho_{A_1})\big).
\end{equation}
\begin{prop}\label{prop:spin alignment 1 copy}
    The minimization problem~\eqref{eq:spin-align-1copy} has an optimal choice 
    \begin{equation}
  \rho_{A_0}=|\psi_{\max}\rangle\langle \psi_{\max}|,\quad
  \rho_{A_1}=|\phi_{\max} \rangle \langle \phi_{\max}| .
\end{equation}
\end{prop}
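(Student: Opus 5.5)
We prove the proposition by majorization: the aligned output spectrum majorizes the output spectrum for any other choice of inputs, and von Neumann entropy is Schur-concave. By concavity of $S$ (as noted after Conjecture~\ref{conj:spin-alignment}), it suffices to consider pure inputs $\rho_{A_0}=\ketbra{\psi}{\psi}$ and $\rho_{A_1}=\ketbra{\phi}{\phi}$. Write
\[
X_{\psi,\phi}:=p_0\,\ketbra{\psi}{\psi}\otimes\sigma_1+p_1\,\sigma_0\otimes\ketbra{\phi}{\phi}.
\]
Let $\lambda_1\ge\lambda_2\ge\cdots$ be the eigenvalues of $\sigma_1$ and $\mu_1\ge\mu_2\ge\cdots$ those of $\sigma_0$.

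\textbf{Step 1: the aligned spectrum.} In the aligned case $\psi=\psi_{\max}$, $\phi=\phi_{\max}$, the two summands commute. We use an eigenbasis of $\sigma_0$ containing $\psi_{\max}$ and one of $\sigma_1$ containing $\phi_{\max}$. The spectrum of $X_{\rm al}$ is then the multiset
\[
\{p_0\lambda_1+p_1\mu_1\}\cup\{p_0\lambda_j\}_{j\ge2}\cup\{p_1\mu_i\}_{i\ge2}\cup\{0,\dots\}.
\]
Introduce the concave, piecewise linear functions $F_1(a)=\max\{\operatorname{Tr}(Q\sigma_1):0\le Q\le \mb I,\ \operatorname{Tr}Q=a\}$ and $F_0(b)$, defined analogously with $\sigma_0$. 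Their slopes are $\lambda_1,\lambda_2,\dots$ and $\mu_1,\mu_2,\dots$ respectively. A greedy argument shows that the sum of the $k$ largest eigenvalues of $X_{\rm al}$ equals
\[
\max\{p_0F_1(a)+p_1F_0(b):\ a+b\le k+1\}.
\]
The reason is that the shared top eigenvalue consumes one unit of $a$ and one unit of $b$ but only one unit of rank. Since the initial slopes $\lambda_1,\mu_1$ dominate all later slopes, the maximizer can be taken with $a,b\ge1$.

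\textbf{Step 2: the Ky Fan bound for arbitrary $\psi,\phi$.} For a rank-$k$ projection $P$ on $B_0B_1$, define
\[
Q_1:=(\bra{\psi}\otimes\mb I)P(\ket{\psi}\otimes\mb I),\qquad Q_0:=(\mb I\otimes\bra{\phi})P(\mb I\otimes\ket{\phi}).
\]
Both are contractions, with traces $a:=\operatorname{Tr}Q_1$ and $b:=\operatorname{Tr}Q_0$. Then
\[
\operatorname{Tr}(PX_{\psi,\phi})=p_0\operatorname{Tr}(Q_1\sigma_1)+p_1\operatorname{Tr}(Q_0\sigma_0)\le p_0F_1(a)+p_1F_0(b).
\]
The key constraint comes from the commuting projections $\Pi_1=\ketbra{\psi}{\psi}\otimes\mb I$ and $\Pi_2=\mb I\otimes\ketbra{\phi}{\phi}$. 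Because they commute, $\Pi_1+\Pi_2\le \mb I+\Pi_1\Pi_2$, and $\Pi_1\Pi_2$ has rank one. Hence
\[
a+b=\operatorname{Tr}(P(\Pi_1+\Pi_2))\le k+1 .
\]
Maximizing over $P$ (Ky Fan's principle) shows that the top-$k$ eigenvalue sum of $X_{\psi,\phi}$ is at most the top-$k$ sum of $X_{\rm al}$ for every $k$. Both operators have trace one, so $\operatorname{spec}X_{\rm al}\succ\operatorname{spec}X_{\psi,\phi}$. Schur-concavity of $S$ then gives $S(X_{\rm al})\le S(X_{\psi,\phi})$.

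\textbf{Main obstacle.} The delicate step is Step 1's identification of the top-$k$ sum with the constrained concave maximization. It needs a careful greedy/exchange argument, including ties and the boundary cases $k=1$ and $a$ or $b$ capped by the dimension. I would handle it by writing the Lagrangian for the constraint $a+b\le k+1$ and comparing slopes.
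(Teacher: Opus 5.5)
\textbf{There is a genuine gap: the identity claimed in Step~1 is false, so the argument does not close as written.} Your strategy (Ky Fan's variational principle plus Schur concavity) is viable, and Step~2 is correct as far as it goes. The problem $\max\{p_0F_1(a)+p_1F_0(b):\ a+b\le k+1\}$ is a greedy allocation of $k+1$ units among the slopes $p_0\lambda_1\ge p_0\lambda_2\ge\cdots$ and $p_1\mu_1\ge p_1\mu_2\ge\cdots$. Its value is therefore the sum of the $k+1$ largest entries of the merged list, and nothing forces both $p_0\lambda_1$ and $p_1\mu_1$ to be among them. Your justification (``$\lambda_1,\mu_1$ dominate all later slopes'') only compares slopes within one function. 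The allocation, however, compares slopes across functions, e.g.\ $p_0\lambda_2$ against $p_1\mu_1$. Concretely, take qubits with $\sigma_0=\sigma_1=\mb I/2$, $p_0=0.9$, $p_1=0.1$ and $k=1$. The largest eigenvalue of $X_{\rm al}$ is $0.45+0.05=0.5$, while $a=2,\ b=0$ gives $p_0F_1(2)=0.9$. So the bound from Step~2, as you set it up, is strictly weaker than what majorization requires. A Lagrangian analysis of the problem as you stated it would confirm this wrong value rather than repair it.

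\textbf{The fix is already available in Step~2.} Since $\Pi_1,\Pi_2\le\mb I$, you also have $a=\operatorname{Tr}(P\Pi_1)\le k$ and $b=\operatorname{Tr}(P\Pi_2)\le k$. With these caps it suffices to show
\[
p_0F_1(a)+p_1F_0(b)\le p_0\lambda_1+p_1\mu_1+\Sigma_{k-1}\quad\text{whenever } a+b\le k+1,\ a\le k,\ b\le k,
\]
where $\Sigma_{k-1}$ is the sum of the $k-1$ largest entries of $\{p_0\lambda_j\}_{j\ge2}\cup\{p_1\mu_i\}_{i\ge2}$. The right-hand side is exactly the top-$k$ sum of $X_{\rm al}$, because $p_0\lambda_1+p_1\mu_1$ dominates every other eigenvalue. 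Write $F_1(a)=\lambda_1\min\{a,1\}+F_1'((a-1)_+)$, where $F_1'$ has slopes $\lambda_2,\lambda_3,\dots$, and similarly for $F_0$. The case $a,b\ge 1$ then follows from $(a-1)+(b-1)\le k-1$. The case $b<1$ follows from $a-1\le k-1$ and $p_1\mu_1 b\le p_1\mu_1$, and $a<1$ is symmetric. No tie-breaking or Lagrangian analysis is needed.

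\textbf{Comparison with the paper.} With this repair your proof is a correct, self-contained alternative. The paper instead applies the tensor rearrangement majorization of Lemma~\ref{lem:tensor-rearrangement} with $B_1=p_0\ketbra{\psi}{\psi}$, $B_2=p_1\sigma_0$, $C_1=\sigma_1$, $C_2=\ketbra{\phi}{\phi}$. It then observes that the rearranged operator is unitarily equivalent to $X_{\rm al}$. That proof is two lines but rests on an imported result. Your route proves the needed special case directly from Ky Fan's principle and the inequality $\Pi_1+\Pi_2\le\mb I+\Pi_1\Pi_2$.
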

To prove the above result, we first briefly review the majorization of matrices. For a Hermitian matrix $X$, let $$\lambda(X)=(\lambda_1^\downarrow(X),\dots,\lambda_d^\downarrow(X))$$ be the vector of eigenvalues in non-increasing order.
For two vectors $x,y\in\mb R^d$, we say $x$ is \emph{majorized} by $y$, written $x\prec y$, if
\[
\sum_{i=1}^k x_i^\downarrow \le \sum_{i=1}^k y_i^\downarrow \quad (k=1,\dots,d-1),
\qquad
\sum_{i=1}^d x_i = \sum_{i=1}^d y_i.
\]
A function $f$ on probability vectors is \emph{Schur concave} if $x\prec y$ implies $f(x)\ge f(y)$.
The Shannon entropy $H(x)=-\sum_i x_i\log x_i$ is Schur concave; hence the von Neumann entropy
$S(\rho)=H(\lambda(\rho))$ is Schur concave in the spectrum.

For a positive semidefinite matrix $X$, define $X^\downarrow$ to be the diagonal matrix (in some fixed reference basis) whose diagonal entries are the eigenvalues of $X$ written in non-increasing order.
 
We will use the following known tensor rearrangement majorization principle, proved in~\cite{Alhejji_2025}:
\begin{lemma}[Tensor rearrangement majorization]\label{lem:tensor-rearrangement}
Let $B_1,B_2\succeq 0$ act on $\mb C^{d_B}$ and $C_1,C_2\succeq 0$ act on $\mb C^{d_C}$.
Then
\begin{equation}\label{eq:tensor-rearrangement}
\lambda\big(B_1\otimes C_1 + B_2\otimes C_2\big)
\ \prec\
\lambda\big(B_1^\downarrow\otimes C_1^\downarrow + B_2^\downarrow\otimes C_2^\downarrow\big).
\end{equation}
\end{lemma}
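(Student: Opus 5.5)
The plan is to verify the two defining conditions of majorization. The trace condition is immediate: $\Tr(B_i\otimes C_i)=\Tr B_i\,\Tr C_i=\Tr(B_i^\downarrow\otimes C_i^\downarrow)$, so both sides have the same total sum. For the partial sums I will use Ky Fan's maximum principle,
\[
\sum_{j=1}^k \lambda_j^\downarrow(X)=\max_{P}\Tr(PX),
\]
where the maximum runs over rank-$k$ projections $P$. The target is then to show, for every rank-$k$ projection $P$ on $\mb C^{d_B}\otimes\mb C^{d_C}$,
\[
\Tr\big(P(B_1\otimes C_1+B_2\otimes C_2)\big)\le \sum_{j=1}^k\lambda_j^\downarrow\big(B_1^\downarrow\otimes C_1^\downarrow+B_2^\downarrow\otimes C_2^\downarrow\big).
\]
The right-hand side is a diagonal operator on the $d_B\times d_C$ grid, so it equals the maximum over $k$-element subsets $S$ of the grid of the sum of the diagonal entries indexed by $S$.

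To linearize the problem, I will apply a layer-cake decomposition to each factor. Write
\[
B_i=\sum_{s}\beta^{(i)}_s Q^{(i)}_s,\qquad C_i=\sum_t\gamma^{(i)}_t R^{(i)}_t ,
\]
where $\beta,\gamma\ge 0$ are consecutive eigenvalue gaps and $Q^{(i)}_s$, $R^{(i)}_t$ are spectral projections of ranks $s$ and $t$, nested in $s$ and $t$. The same decomposition applied to $B_i^\downarrow,C_i^\downarrow$ replaces $Q^{(i)}_s$ by the coordinate projection onto the first $s$ basis vectors, and similarly for $R^{(i)}_t$. Each term $Q^{(i)}_s\otimes R^{(i)}_t$ is therefore sent to the indicator of the $s\times t$ lower-left rectangle of the grid. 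Both sides become the same nonnegative combination, one of tensor-product projections and one of nested rectangle indicators. Since the coefficients are nonnegative, it suffices to produce a single $k$-set $S$ such that, simultaneously for all $(i,s,t)$,
\[
\Tr\big(P\,(Q^{(i)}_s\otimes R^{(i)}_t)\big)\lesssim |S\cap [s]\times[t]|,
\]
or at least to obtain this in the aggregate weighted sense.

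This simultaneous comparison is the main obstacle. For a single term it is easy, since $\Tr(P(Q\otimes R))\le\min(k,st)$. However, the two families $\{Q^{(1)}_s\otimes R^{(1)}_t\}$ and $\{Q^{(2)}_s\otimes R^{(2)}_t\}$ are not jointly diagonalizable, and the generic Lidskii bound $\lambda(X+Y)\prec\lambda^\downarrow(X)+\lambda^\downarrow(Y)$ loses the rectangle structure, so it is too weak. My first attempt will be to define the ``profile'' of $P$ with respect to each flag pair, namely the numbers $f_i(s,t)=\Tr(P(Q^{(i)}_s\otimes R^{(i)}_t))$. I will show that each $f_i$ is monotone, submodular-like in $(s,t)$, and bounded by $\min(k,st)$, and that it is dominated by $|S_i\cap[s]\times[t]|$ for some down-set $S_i$ of size $k$. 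This is a Schubert-calculus or Gelfand--Tsetlin-type statement about a subspace positioned relative to a product flag.

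The difficulty is that the optimal down-sets $S_1$ and $S_2$ may differ. I will then try to interpolate between them by an exchange argument: move cells of the grid one at a time and use the two-term structure to show that the weighted sum is maximized at a common $S$. If that fails, the fallback is an induction on $d_B+d_C$, compressing $P$ by the top eigenvector of $B_1$ or of $B_2$. At that point I would defer to the argument of \cite{Alhejji_2025}, which the paper cites for exactly this lemma.
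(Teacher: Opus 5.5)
Your opening steps are correct (equal traces, Ky Fan's principle, and the layer-cake expansion $B_i\otimes C_i=\sum_{s,t}\beta^{(i)}_s\gamma^{(i)}_t\,Q^{(i)}_s\otimes R^{(i)}_t$), but they do not reduce the problem. The ``aggregate weighted'' comparison you fall back on is exactly the Ky Fan form of the lemma itself.

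The one substantive step you propose is that each profile $f_i(s,t)=\operatorname{Tr}\bigl(P(Q^{(i)}_s\otimes R^{(i)}_t)\bigr)$ is dominated pointwise by $|S_i\cap[s]\times[t]|$ for a single $k$-set $S_i$. This is false already for a single term, where the lemma is trivial by unitary invariance. Take $d_B=d_C=2$, $Q_1=R_1=|0\rangle\langle 0|$, $k=2$, and $P$ the projection onto $\mathrm{span}\{|00\rangle,(|01\rangle+|10\rangle)/\sqrt{2}\}$. Then $f(1,2)=\operatorname{Tr}\bigl(P(|0\rangle\langle 0|\otimes I)\bigr)=3/2$ forces $S=[1]\times[2]$, while $f(2,1)=3/2$ forces $S=[2]\times[1]$. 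In fact $f=\tfrac12(\chi_{[1]\times[2]}+\chi_{[2]\times[1]})$ with $\chi_S(s,t):=|S\cap[s]\times[t]|$, a mixture that no single $\chi_S$ dominates.

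So your exchange argument has no $S_1,S_2$ to start from. What is actually needed is never addressed: a common $k$-set, or mixture of $k$-sets, that wins against both non-commuting flag families simultaneously under the product weights $\beta^{(i)}_s\gamma^{(i)}_t$. Neither the exchange step nor the induction is carried out, so the proposal rests entirely on \cite{Alhejji_2025}. That is also all the paper does: it states the lemma without proof and attributes it to that reference. As a citation you agree with the paper, but the sketch before it is not a proof, and its key intermediate claim is wrong.

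If you want a self-contained argument, there is a short one for the only instance the paper uses: Proposition~\ref{prop:spin alignment 1 copy}, where $B_1=p_0|\psi\rangle\langle\psi|$ and $C_2=|\phi\rangle\langle\phi|$ have rank one.
\begin{itemize}
\item Let $\Pi_\psi=|\psi\rangle\langle\psi|\otimes I$ and $\Pi_\phi=I\otimes|\phi\rangle\langle\phi|$. They commute, so $\Pi_\psi+\Pi_\phi\le I+\Pi_\psi\Pi_\phi$. Hence for a rank-$k$ projection $P$, the numbers $\tau_1=\operatorname{Tr}(P\Pi_\psi)$ and $\tau_2=\operatorname{Tr}(P\Pi_\phi)$ satisfy $\tau_1,\tau_2\le k$ and $\tau_1+\tau_2\le k+1$.
\item The term $p_0|\psi\rangle\langle\psi|\otimes\sigma_1$ lives on the range of $\Pi_\psi$, and in its eigenbasis the diagonal of $P$ has entries in $[0,1]$ summing to $\tau_1$. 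So its contribution to $\operatorname{Tr}(P\Omega)$ is at most $\hat A(\tau_1)$, the piecewise-linear interpolation of the partial sums $A_i$ of $p_0\lambda^\downarrow(\sigma_1)$. Likewise the other term is at most $\hat B(\tau_2)$, built from the partial sums $B_j$ of $p_1\lambda^\downarrow(\sigma_0)$.
\item Both functions are nondecreasing and piecewise linear with integer breakpoints. So the maximum over the constraint set is attained at integers with $\tau_1+\tau_2=k+1$, giving $\max_{i+j=k+1,\ i,j\ge 1}(A_i+B_j)$.
\item This is exactly the sum of the $k$ largest eigenvalues of $\Omega^\downarrow$. Its top entry is $p_0\lambda_{\max}(\sigma_1)+p_1\lambda_{\max}(\sigma_0)$, and its other entries are the remaining eigenvalues of $p_0\sigma_1$ and $p_1\sigma_0$.
\end{itemize}
The commuting support projections are what the general lemma lacks: there the layers $Q^{(1)}_s\otimes R^{(1)}_t$ and $Q^{(2)}_{s'}\otimes R^{(2)}_{t'}$ need not commute.
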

\begin{proof}[Proof of Proposition~\ref{prop:spin alignment 1 copy}]
Fix arbitrary unit vectors $|\psi\rangle$ and $|\phi\rangle$ and consider
\[
\Omega:=\Omega(|\psi\rangle\langle\psi|,|\phi\rangle\langle\phi|)
=
p_0\,|\psi\rangle\langle\psi|\otimes \sigma_1
+
p_1\,\sigma_0\otimes |\phi\rangle\langle\phi|.
\]
Apply Lemma~\ref{lem:tensor-rearrangement} with
\[
B_1=p_0\,|\psi\rangle\langle\psi|,
\quad
B_2=p_1\,\sigma_0,
\quad
C_1=\sigma_1,
\quad
C_2=|\phi\rangle\langle\phi|.
\]
We obtain the spectral majorization
\begin{equation}\label{eq:majorization-step}
\lambda(\Omega)
\prec
\lambda(\Omega^\downarrow),
\qquad
\Omega^\downarrow:=
p_0\,(|\psi\rangle\langle\psi|)^\downarrow\otimes \sigma_1^\downarrow
+
p_1\,\sigma_0^\downarrow\otimes (|\phi\rangle\langle\phi|)^\downarrow.
\end{equation}
Since $\Omega$ and $\Omega^\downarrow$ are density operators, their spectra are probability vectors.
Because the von Neumann entropy is Schur concave in the spectrum, \eqref{eq:majorization-step} implies
\begin{equation}\label{eq:entropy-lower}
S(\Omega)=H(\lambda(\Omega)) \ \ge\ H(\lambda(\Omega^\downarrow))=S(\Omega^\downarrow).
\end{equation}
By definition, $\sigma_0^\downarrow$ and $\sigma_1^\downarrow$ are diagonal with eigenvalues in non-increasing order; thus
the first basis vectors correspond to the \emph{maximal} eigenvalues of $\sigma_0$ and $\sigma_1$, respectively.
Therefore, in the eigenbases of $\sigma_0$ and $\sigma_1$, we can rewrite $\Omega^\downarrow$ as
\[
\Omega^\downarrow
=
p_0\,|\psi_{\max}\rangle\langle\psi_{\max}|\otimes \sigma_1
+
p_1\,\sigma_0\otimes |\phi_{\max}\rangle\langle\phi_{\max}|,
\]
i.e.\ $\Omega^\downarrow=\Omega(|\psi_{\max}\rangle\langle\psi_{\max}|,|\phi_{\max}\rangle\langle\phi_{\max}|)$ up to unitary conjugation,
and hence has the same entropy as that aligned-output state.

Combining with \eqref{eq:entropy-lower}, we conclude that for every pure pair $(|\psi\rangle,|\phi\rangle)$,
\[
S\big(\Omega(|\psi\rangle\langle\psi|,|\phi\rangle\langle\phi|)\big)
\ \ge\
S\big(\Omega(|\psi_{\max}\rangle\langle\psi_{\max}|,|\phi_{\max}\rangle\langle\phi_{\max}|)\big).
\]
Since an optimizer exists among pure pairs, this shows that \eqref{eq:spin-align-1copy}
admits an optimizer with $\rho_{A_0}=|\psi_{\max}\rangle\langle\psi_{\max}|$ and
$\rho_{A_1}=|\phi_{\max}\rangle\langle\phi_{\max}|$, as claimed.
\end{proof}

\subsubsection{R\'enyi-2 entropy}
Recall that the Rényi-2 entropy is defined by 
\begin{equation}
    S_2(\rho) := -\log \Tr(\rho^2).
\end{equation}
Thus minimizing $S_2(\rho)$ is equivalent to maximizing $\Tr(\rho^2)$. Dealing with $2$-norm can greatly simplify the optimization. In fact, one can separate the mixture and optimize each term, while for von Neumann entropy, the non-linear nature obstructs this analysis. We have the following:
\begin{prop}[Spin alignment for R\'enyi-2 entropy]\label{prop:renyi2-spin-alignment}
Fix a probability distribution $\{p_{\vec x}\}_{\vec x\in\{0,1\}^n}$, the minimization of the R\'enyi-2 entropy
\begin{equation}\label{eqn:spin Renyi}
  \min_{\rho_{\vec x}} S_2 \left(\sum_{\vec x\in\{0,1\}^n} p_{\vec x}\, \mc N_{\vec x}(\rho_{\vec x}) \right)
\end{equation}
has an optimal choice of input states $\{\rho_{\vec x}\}$ for which
\[
\rho_{\vec x}=\bigotimes_{i=1}^n \tau_{x_i}^{(i)},
\qquad\text{where}\qquad
\tau_0^{(i)}=|\psi_{\max}\rangle\langle\psi_{\max}|_{A_0^{(i)}},\quad
\tau_1^{(i)}=|\phi_{\max}\rangle\langle\phi_{\max}|_{A_1^{(i)}},
\]
and $|\psi_{\max}\rangle$ (resp.\ $|\phi_{\max}\rangle$) is any eigenvector of $\sigma_0$ (resp.\ $\sigma_1$)
corresponding to its largest eigenvalue $\lambda_0$ (resp.\ $\lambda_1$).
\end{prop}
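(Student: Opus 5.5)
The plan is to exploit that $\Tr(\Omega^2)$ is a quadratic form, so the optimization splits into pairwise terms that can be bounded one at a time. Write $\Omega=\sum_{\vec x}p_{\vec x}\,\mc N_{\vec x}(\rho_{\vec x})$, so that
\begin{equation*}
\Tr(\Omega^2)=\sum_{\vec x,\vec y}p_{\vec x}p_{\vec y}\,\Tr\big[\mc N_{\vec x}(\rho_{\vec x})\,\mc N_{\vec y}(\rho_{\vec y})\big].
\end{equation*}
Since minimizing $S_2$ is the same as maximizing $\Tr(\Omega^2)$, it suffices to prove a uniform upper bound on every cross term $\Tr[\mc N_{\vec x}(\rho_{\vec x})\mc N_{\vec y}(\rho_{\vec y})]$, valid for all (possibly entangled) inputs. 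This bound should depend only on $\vec x,\vec y$, and it should be attained simultaneously for all pairs by the aligned product inputs $\rho_{\vec x}=\bigotimes_i\tau^{(i)}_{x_i}$. By the concavity remark after Conjecture~\ref{conj:spin-alignment} (which applies equally to $-\Tr(\cdot^2)$ being concave), we may restrict to pure $\rho_{\vec x}$ if convenient, though the bound below does not need this.

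To bound a cross term, fix $\vec x,\vec y$ and partition the sites into three sets: $S_{00}$ (where $x_i=y_i=0$), $S_{11}$ (where $x_i=y_i=1$), and $D$ (where $x_i\neq y_i$), with $d:=|D|$. On a site $i\in D$ with $x_i=0,y_i=1$, the operator $\mc N_{\vec x}(\rho_{\vec x})$ has its free part on $B_0^{(i)}$ and $\sigma_1$ on $B_1^{(i)}$. Meanwhile $\mc N_{\vec y}(\rho_{\vec y})$ has $\sigma_0$ on $B_0^{(i)}$ and its free part on $B_1^{(i)}$; the case $x_i=1,y_i=0$ is symmetric. The key step is Loewner monotonicity. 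Because $\mc N_{\vec x}(\rho_{\vec x})\succeq 0$ and $\sigma_0\preceq\lambda_0\mb I$, we can replace each $\sigma_0$ factor of $\mc N_{\vec y}(\rho_{\vec y})$ on $D$-sites by $\lambda_0\mb I$. This yields an operator that dominates $\mc N_{\vec y}(\rho_{\vec y})$ in Loewner order, since tensoring with positive operators preserves the order, so the trace can only increase. We then do the same to the $\sigma_1$ factors of $\mc N_{\vec x}(\rho_{\vec x})$ on $D$-sites, now using positivity of the already-modified second operator. After these replacements, the identities let us partially trace out the free systems on the $D$-sites. The remaining $\sigma$-factors on $S_{00}$ (namely $\sigma_1\otimes\sigma_1$ on $B_1$ slots) and on $S_{11}$ (namely $\sigma_0\otimes\sigma_0$) factor out. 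What is left is $\Tr(\bar\rho_{\vec x}\bar\rho_{\vec y})$, where $\bar\rho_{\vec x},\bar\rho_{\vec y}$ are the reductions of $\rho_{\vec x},\rho_{\vec y}$ to the common free slots on $S_{00}\cup S_{11}$. This last quantity is at most $\|\bar\rho_{\vec x}\|_2\|\bar\rho_{\vec y}\|_2\le 1$ by Cauchy--Schwarz. The outcome is the bound
\begin{equation*}
\Tr\big[\mc N_{\vec x}(\rho_{\vec x})\mc N_{\vec y}(\rho_{\vec y})\big]\le (\lambda_0\lambda_1)^{d}\,\Tr(\sigma_1^2)^{|S_{00}|}\,\Tr(\sigma_0^2)^{|S_{11}|}.
\end{equation*}

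It remains to check attainment. For the aligned product inputs the cross term factorizes over sites. On $S_{00}$ each site contributes $\Tr(\tau_0\tau_0)\Tr(\sigma_1^2)=\Tr(\sigma_1^2)$, on $S_{11}$ each contributes $\Tr(\sigma_0^2)$, and on $D$ each contributes $\bra{\psi_{\max}}\sigma_0\ket{\psi_{\max}}\bra{\phi_{\max}}\sigma_1\ket{\phi_{\max}}=\lambda_0\lambda_1$. Hence every cross term meets its upper bound simultaneously, so $\Tr(\Omega^2)$ is maximized, and the proposition follows.

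I expect the main obstacle to be the monotone-replacement step for entangled $\rho_{\vec x}$, which is the only place where non-product inputs matter. One must order the two replacements so that the operator kept fixed is positive at each stage, and carefully track which tensor slots carry $\sigma$'s versus free parts under the site-dependent system reordering. I would first verify this for $n=2$ with $\vec x=(0,0)$ and $\vec y=(0,1)$ before writing the general slot bookkeeping.
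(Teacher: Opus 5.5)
Your proposal is correct and follows the paper's route: expand $\Tr(\Omega^2)$ into pairwise terms, bound each one uniformly by $\alpha_1^{N_{00}}\alpha_0^{N_{11}}(\lambda_0\lambda_1)^{N_{01}+N_{10}}$, and note that the aligned product inputs attain every such bound at once. The only difference is how you prove the pairwise bound: you use the Loewner replacement $\sigma_i\preceq\lambda_i\mathbb{I}$, then partial traces and Cauchy--Schwarz, whereas the paper computes $\mathcal N_{\vec x}^\dagger\mathcal N_{\vec y}$ site by site and applies H\"older with $\|\cdot\|_\infty$; the entangled-input step you worry about is harmless, because the fixed $\sigma$'s always sit in tensor product with the whole of $\rho_{\vec y}$.
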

\begin{proof}
Define the constants
\begin{equation}
    \lambda_0:=\lambda_{\max}(\sigma_0),\qquad \lambda_1:=\lambda_{\max}(\sigma_1),\qquad
\alpha_0:=\Tr(\sigma_0^2),\qquad \alpha_1:=\Tr(\sigma_1^2).
\end{equation}
For $\vec x,\vec y\in\{0,1\}^n$ define the counts
\begin{equation}
    N_{ab}(\vec x,\vec y):=\bigl|\{i\in[n] : (x_i,y_i)=(a,b)\}\bigr|,\qquad a,b\in\{0,1\}.
\end{equation}
We show that the maximizer of \begin{align*}
\Tr\big[\big(\sum_{\vec x\in\{0,1\}^n} p_{\vec x}\, \mc N_{\vec x}(\rho_{\vec x}) \big)^2 \big]
=
\sum_{\vec x,\vec y} p_{\vec x}p_{\vec y}\,
\Tr\big[\mathcal N_{\vec x}(\rho_{\vec x})\,\mathcal N_{\vec y}(\rho_{\vec y})\big].
\end{align*}
can be chosen as $$\rho_{\vec x}=\bigotimes_{i=1}^n \tau_{x_i}^{(i)},
\qquad\text{where}\qquad
\tau_0^{(i)}=|\psi_{\max}\rangle\langle\psi_{\max}|_{A_0^{(i)}},\quad
\tau_1^{(i)}=|\phi_{\max}\rangle\langle\phi_{\max}|_{A_1^{(i)}}.$$
In fact, in Appendix~\ref{app:upper bound}, we show that given $\{\rho_{\vec x}\}_{\vec x \in \{0,1\}^n}$, for any $\vec x, \vec y \in \{0,1\}^n$, we have
\begin{align}\label{eqn:key upper bound Renyi}
    \Tr\big[\mathcal N_{\vec x}(\rho_{\vec x})\,\mathcal N_{\vec y}(\rho_{\vec y})\big] \le \alpha_1^{N_{00}(\vec x,\vec y)}\alpha_0^{N_{11}(\vec x,\vec y)}
(\lambda_0\lambda_1)^{N_{01}(\vec x,\vec y)+N_{10}(\vec x,\vec y)}.
\end{align}
On the other hand, via direct calculation, 
\begin{align*}
    \Tr\!\big[\mathcal N_{x_i}(\tau_{x_i}^{(i)})\,\mathcal N_{y_i}(\tau_{y_i}^{(i)})\big]
=
\begin{cases}
\Tr(\sigma_1^2)=\alpha_1,& (x_i,y_i)=(0,0),\\[2pt]
\Tr(\sigma_0^2)=\alpha_0,& (x_i,y_i)=(1,1),\\[2pt]
\Tr(|\psi_{\max}\rangle\langle\psi_{\max}|\sigma_0)\,
\Tr(\sigma_1|\phi_{\max}\rangle\langle\phi_{\max}|)=\lambda_0\lambda_1,& (x_i,y_i)=(0,1),\\[2pt]
\Tr(\sigma_0|\psi_{\max}\rangle\langle\psi_{\max}|)\,
\Tr(|\phi_{\max}\rangle\langle\phi_{\max}|\sigma_1)=\lambda_0\lambda_1,& (x_i,y_i)=(1,0).
\end{cases}
\end{align*}
which concludes the proof that $\rho_{\vec x}=\bigotimes_{i=1}^n \tau_{x_i}^{(i)}$ is an optimal choice of~\eqref{eqn:spin Renyi}
\end{proof}

In the next section, we will illustrate how to derive single-letter expression for quantum capacities and one-way distillable entanglement using Spin alignment phenomenon.


\section{Explicit examples of non-degradable states with single-letter one-way
distillable entanglement}\label{sec:examples}
In this section, we provide several examples of non-degradable states with single-letter distillable entanglement using the framework proposed in the previous section. For the first class consisting of states with weaker degradability, one can choose the Choi state of quantum channels proposed in \cite{Smith_2025}.
Therefore, we focus on the second and third class in the remaining section.


\subsection{Mixture of degradable and anti-degradable states with orthogonal support}

To illustrate our general framework, we consider the state $\rho_{AB}(s)$ given by 
\begin{align*}
 \frac{1}{3}\Big(
|0\rangle\!\langle0|_A\otimes\big[s\,|0\rangle\!\langle0|_B+(1-s)\,|2\rangle\!\langle2|_B\big]
+|0\rangle\!\langle1|_A\otimes\sqrt{s}\,|0\rangle\!\langle2|_B
+|1\rangle\!\langle0|_A\otimes\sqrt{s}\,|2\rangle\!\langle0|_B
+|1\rangle\!\langle1|_A\otimes|2\rangle\!\langle2|_B
+|2\rangle\!\langle2|_A\otimes|2\rangle\!\langle2|_B
\Big).
\end{align*}
This is an example of a state with an useless component having orthogonal support on Alice's side to the useful component. The antidegradable state here is a separable state $\ketbra{2}{2}_A\otimes\ketbra{2}{2}_B$ (appears in the mixture with probability $1/3$) and the useful component here is the Choi state of an amplitude damping channel from the $\rm{Span}\{\ket{0},\ket{1}\}\subset \mathcal{H}_A$ to $\rm{Span}\{\ket{1},\ket{2}\}\subset \mathcal{H}_B$, parameterized by $s$. Therefore, from Theorem \ref{main:class 1}, we conclude that this state has a single-letter distillable entanglement, given by
\begin{align}
     D_{\to}(\rho_{AB}(s))  = D^{(1)}_{\to}(\rho_{AB}(s)) 
      = \frac{2}{3} \max\left\{h\left(\frac{s}{2}\right) - h\left(\frac{1+s}{2}\right), 0 \right\}.
\end{align}


\subsection{Flagged mixture of degradable and antidegradable states}
A concrete and very explicit family comes from \emph{flagged} mixtures of amplitude damping channels
(from \cite{Smith_2025}).
Let ${\rm AD}_\gamma$ be the qubit amplitude damping channel with Kraus operators
\[
E_0 = |0\rangle\!\langle 0| + \sqrt{1-\gamma}\,|1\rangle\!\langle 1|,
\qquad
E_1 = \sqrt{\gamma}\,|0\rangle\!\langle 1| ,
\qquad \gamma\in[0,1].
\]
Fix parameters $\gamma_0,\gamma_1\in[0,1]$ and a mixing probability $p\in[0,1]$, and define the flagged channel
\[
\mathcal{N}^{A\to BF}(\rho)
:= p\,{\rm AD}_{\gamma_0}(\rho)\otimes |0\rangle\!\langle 0|_F
 + (1-p)\,{\rm AD}_{\gamma_1}(\rho)\otimes |1\rangle\!\langle 1|_F ,
\]
where the classical flag $F$ is available to Bob.  Its (normalized) Choi state
$\rho_{RBF}^{\mathcal{N}} := (\mathrm{id}_R\otimes \mathcal{N})(|\Phi^+\rangle\!\langle\Phi^+|)$,
with $|\Phi^+\rangle = (|00\rangle+|11\rangle)/\sqrt{2}$, is
\begin{equation}\label{eq:choi-flagged-AD}
\rho_{RBF}^{\mathcal{N}}
= p\,\rho_{RB}^{(\gamma_0)}\otimes |0\rangle\!\langle 0|_F
 + (1-p)\,\rho_{RB}^{(\gamma_1)}\otimes |1\rangle\!\langle 1|_F,
\end{equation}
where, in the computational basis $\{|00\rangle,|01\rangle,|10\rangle,|11\rangle\}$ of $RB$,
\begin{equation}\label{eq:choi-AD-explicit}
\rho_{RB}^{(\gamma)}
=
\frac{1}{2}\begin{pmatrix}
1 & 0 & 0 & \sqrt{1-\gamma}\\
0 & 0 & 0 & 0\\
0 & 0 & \gamma & 0\\
\sqrt{1-\gamma} & 0 & 0 & 1-\gamma
\end{pmatrix}.
\end{equation}
This flagged family for some parameter regions provides explicit \emph{non-degradable} Choi states for which the weaker
information-dominance property holds and hence $D_\to(\rho_{RBF}^{\mathcal{N}})$ is single-letter, see~\cite{Smith_2025}.


\subsection{Generalized direct sum channels and their corresponding states}
In this subsection, motivated by~\cite{wu2025}, we present a class of quantum channels and states with single-letter expression for capacities and one-way distillable entanglement. We first show that this class of channels can have single-letter quantum capacity using the spin alignment phenomenon.

We say a completely positive map $\Phi$ has a generalized (partially coherent) direct sum structure if 
\begin{align*}
    \Phi \begin{pmatrix}
        X_{00} & X_{01} \\
        X_{10} & X_{11}
    \end{pmatrix} =  \begin{pmatrix}
        \Phi_{0}(X_{00}) & \Phi_{01}(X_{01}) \\
        \Phi_{10}(X_{10}) & \Phi_{1}(X_{11})
    \end{pmatrix}
\end{align*}
where $\Phi_0,\Phi_1$ are completely positive. It is shown in \cite{Chessa_2021} that $\Phi$ admits partially coherent direct sum structure if and only if the Kraus operators of $\Phi$ are block diagonal, i.e., $\Phi(X) = \sum_k E_k X E_k^\dagger$ with 
\begin{equation}
    E_k = \begin{pmatrix}
        E_k^0 & 0 \\
        0 & E_k^1
    \end{pmatrix}.
\end{equation}
A key observation is that the optimal states achieving the maximal coherent information of $\Phi^{\otimes n}$ can be chosen to be block diagonal.
\begin{lemma}\label{lem:block-diagonal-optimizer}
    Suppose $\Phi$ is a generalized direct sum channel with Kraus operators $\begin{pmatrix}
        E_k^0 & 0 \\
        0 & E_k^1
    \end{pmatrix}$. Then
    \begin{align*}
        \mc Q^{(1)}(\Phi^{\otimes n})
        = \max\Big\{I_c(\rho_n,\Phi^{\otimes n}) :\ 
        \rho_n = \sum_{\vec x \in \{0,1\}^n} p_{\vec x}\ |\vec x\rangle \langle \vec x | \otimes \rho_{\vec x},\ 
        \rho_{\vec x} \in \mc D_{d_0^{|\vec x|} d_1^{n-|\vec x|}},\ 
        \sum_{\vec x \in \{0,1\}^n} p_{\vec x}=1\Big\}.
    \end{align*}
\end{lemma}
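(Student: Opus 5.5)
The plan is to show that, for any input state, dephasing the $n$ block labels on the input cannot decrease the coherent information. Because the block-diagonal inputs form a subset of all inputs, the inequality ``$\ge$'' in the lemma is trivial. So it suffices to show that for every pure input $\ket{\psi}_{RA^n}$ there is a block-diagonal input $\rho_n$ with $I_c(\rho_n,\Phi^{\otimes n})\ge I_c(\psi,\Phi^{\otimes n})$.

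\textbf{Step 1: the channel is covariant under block-phase unitaries.}
Write $A\cong \mathbb{C}^{d_0}\oplus\mathbb{C}^{d_1}$, and let $Z=P_0-P_1$ be the block phase operator, where $P_0,P_1$ are the block projectors. Since every Kraus operator $E_k$ is block diagonal, it commutes with $Z$ (with $Z$ also acting on the output blocks). Hence
\[
\Phi(ZXZ)=Z\,\Phi(X)\,Z .
\]
The complementary channel $\Phi^c(X)=\sum_{k,l}\Tr(E_kXE_l^\dagger)\ketbra{k}{l}$ satisfies $\Phi^c(ZXZ)=\Phi^c(X)$, because $E_l^\dagger E_k$ commutes with $Z$. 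Tensoring these identities, for every $\vec z\in\{0,1\}^n$ and $Z^{\vec z}:=\bigotimes_i Z^{z_i}$ we get that $\Phi^{\otimes n}$ is covariant under $Z^{\vec z}$ and $(\Phi^{\otimes n})^c$ is invariant under it.

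\textbf{Step 2: dephasing the input does not lower coherent information.}
Write $I_c(\rho,\Phi^{\otimes n})=S(\Phi^{\otimes n}(\rho))-S((\Phi^{\otimes n})^c(\rho))$. Let $\mathcal{D}_n(\rho)=2^{-n}\sum_{\vec z}Z^{\vec z}\rho Z^{\vec z}$ be the pinching onto the $2^n$ blocks labeled by $\vec x$.
\begin{itemize}
\item By invariance of the complementary channel and linearity, $(\Phi^{\otimes n})^c(\mathcal{D}_n(\rho))=(\Phi^{\otimes n})^c(\rho)$, so the environment entropy is unchanged.
\item By covariance, $\Phi^{\otimes n}(\mathcal{D}_n(\rho))=2^{-n}\sum_{\vec z}Z^{\vec z}\Phi^{\otimes n}(\rho)Z^{\vec z}$. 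This is a mixture of unitary conjugates, so concavity of $S$ gives $S(\Phi^{\otimes n}(\mathcal{D}_n\rho))\ge S(\Phi^{\otimes n}(\rho))$.
\end{itemize}
Hence $I_c(\mathcal{D}_n(\rho),\Phi^{\otimes n})\ge I_c(\rho,\Phi^{\otimes n})$. The state $\mathcal{D}_n(\rho)=\sum_{\vec x}p_{\vec x}\ketbra{\vec x}{\vec x}\otimes\rho_{\vec x}$ has exactly the form in the statement, with $\rho_{\vec x}$ supported on $\bigotimes_i\mathbb{C}^{d_{x_i}}$. Identifying the block label with the flag $\ket{\vec x}$ is a relabeling of the direct-sum decomposition.

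\textbf{Main obstacle.}
The delicate point is the convention for $I_c$ when $\Phi$ is only completely positive, as in \eqref{def:channel coherent information}, where the output is renormalized. If $\Phi$ is trace preserving, the argument above is complete. If not, the normalization $\Tr\Phi^{\otimes n}(\rho)$ must also be checked. It equals $\Tr[(\Phi^{\otimes n})^*(\mathbb{I})\rho]$, and $(\Phi^{\otimes n})^*(\mathbb{I})=\bigotimes\sum_kE_k^\dagger E_k$ is block diagonal, so the normalization is also invariant under $\mathcal{D}_n$. The same concavity argument then applies to the normalized states. I would verify this normalization invariance carefully and also confirm the complementary-channel identity $E_l^\dagger E_k Z=ZE_l^\dagger E_k$, since these two facts carry the whole argument.
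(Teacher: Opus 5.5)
Your proposal is correct and follows essentially the same route as the paper: pinch the input onto the $\vec x$-blocks, note that the complementary output is unchanged because the Kraus operators are block diagonal, and note that the main output becomes a pinching (a mixture of unitary conjugates) of the original output, so its entropy can only increase. Your $Z^{\vec z}$-twirl formulation makes the paper's output-pinching step explicit via covariance. Your normalization check for non-trace-preserving $\Phi$ is a harmless extra that the paper does not need, since the lemma is stated for a channel.
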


\begin{proof}
    Using the direct sum structure of the Hilbert space, any $\rho_n \in \mc D_{(d_0+d_1)^n}$ can be decomposed as
    \begin{equation*}
        \rho_n = \sum_{\vec x,\vec y \in \{0,1\}^n} |\vec x\rangle \langle \vec y | \otimes \rho_{\vec x,\vec y},\quad
        \rho_{\vec x,\vec y} \in \mb M_{d_0^{|\vec x|} d_1^{n-|\vec x|} \times d_0^{|\vec y|} d_1^{n-|\vec y|}}.
    \end{equation*}
    Define the pinching map (conditional expectation onto block-diagonal entries)
    \begin{equation*}
        \mc P(\rho_n)
        := \sum_{\vec x \in \{0,1\}^n} (|\vec x\rangle\langle \vec x|\otimes \mb I)\ \rho_n\ (|\vec x\rangle\langle \vec x|\otimes \mb I)
        = \sum_{\vec x} |\vec x\rangle\langle \vec x|\otimes \rho_{\vec x,\vec x}.
    \end{equation*}

    \textbf{Step 1: the complementary output is unchanged by pinching.}
    Since each Kraus operator of $\Phi$ is block diagonal, every Kraus operator of $\Phi^{\otimes n}$ is also block diagonal with respect to the decomposition indexed by $\vec x\in\{0,1\}^n$. Hence, for $\vec x\neq \vec y$ one has
    \begin{equation*}
        \Tr\!\big(E_{\vec k}\ (|\vec x\rangle\langle \vec y|\otimes \rho_{\vec x,\vec y})\ E_{\vec \ell}^\dagger\big)=0,
    \end{equation*}
    and therefore the complementary channel depends only on the diagonal blocks:
    \begin{equation*}
        (\Phi^c)^{\otimes n}(\rho_n) = (\Phi^c)^{\otimes n}(\mc P(\rho_n)).
    \end{equation*}

    \textbf{Step 2: the main output entropy increases under pinching.}
    Set $\omega := \Phi^{\otimes n}(\rho_n)$ and $\omega' := \Phi^{\otimes n}(\mc P(\rho_n))$.
    By construction, $\omega'$ is obtained from $\omega$ by pinching with respect to the orthogonal projections $\{|\vec x\rangle\langle \vec x|\otimes \mb I\}_{\vec x}$ on the output space. Equivalently, $\omega'$ is a convex combination of unitary conjugations of $\omega$ (hence the spectrum of $\omega'$ majorizes that of $\omega$), and since the von Neumann entropy is Schur concave we get
    \begin{equation*}
        S(\Phi^{\otimes n}(\rho_n)) = S(\omega) \le S(\omega') = S(\Phi^{\otimes n}(\mc P(\rho_n))).
    \end{equation*}
    This is exactly the majorization/pinching trick~\cite{bhatia2013matrix}.

    Combining Step~1 and Step~2,
    \begin{equation*}
        I_c(\rho_n,\Phi^{\otimes n})
        = S(\Phi^{\otimes n}(\rho_n)) - S((\Phi^c)^{\otimes n}(\rho_n))
        \le S(\Phi^{\otimes n}(\mc P(\rho_n))) - S((\Phi^c)^{\otimes n}(\mc P(\rho_n)))
        = I_c(\mc P(\rho_n),\Phi^{\otimes n}),
    \end{equation*}
    so an optimizer exists among block-diagonal states of the claimed form.
\end{proof}

Now we show the following using the spin alignment phenomenon, which generalizes the result in~\cite{wu2025quantum} to different dimensions.

\begin{prop}\label{prop:GDS-single-letter}
    Assume $\Phi_0(X)=\Tr(X)\mb I_{d_0'}/d_0'$ and $\Phi_1(X)=\Tr(X)\mb I_{d_1'}/d_1'$, and we choose the generalized direct sum channel as
    \begin{equation}\label{eq:Phi-special}
        \Phi \begin{pmatrix}
        X_{00} & X_{01} \\
        X_{10} & X_{11}
    \end{pmatrix}
    = \begin{pmatrix}
        \Tr(X_{00})\frac{\mb I_{d_0'}}{d_0'} & X_{01}^T \\
        X_{10}^T & \Tr(X_{11})\frac{\mb I_{d_1'}}{d_1'}
    \end{pmatrix}.
    \end{equation}
    Then $\mc Q(\Phi)=\mc Q^{(1)}(\Phi)$.
\end{prop}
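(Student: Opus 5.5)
The plan is to combine Lemma~\ref{lem:block-diagonal-optimizer} with the spin-alignment phenomenon, and then single-letterize the resulting classical-quantum expression by strong subadditivity. The lower bound $\mc Q(\Phi)\ge \mc Q^{(1)}(\Phi)$ is trivial, so only $\mc Q^{(1)}(\Phi^{\otimes n})\le n\,\mc Q^{(1)}(\Phi)$ needs proof.

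\textbf{Step 1: write $I_c$ for block-diagonal inputs.} By Lemma~\ref{lem:block-diagonal-optimizer} it suffices to treat inputs $\rho_n=\sum_{\vec x}p_{\vec x}|\vec x\rangle\langle\vec x|\otimes\rho_{\vec x}$. I would first fix an explicit block-diagonal Kraus representation of \eqref{eq:Phi-special}. Take $E^{0}_{ij}\propto |i\rangle\langle j|$ on the first block and $E^{1}_{ij}\propto|j\rangle\langle i|$ on the second, with normalizations chosen so that the cross terms produce the transpose; this step needs to be checked.

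For the main output, the diagonal blocks are replaced by maximally mixed states. Hence $\Phi^{\otimes n}(\rho_n)=\sum_{\vec x}p_{\vec x}|\vec x\rangle\langle\vec x|\otimes\bigotimes_i \mb I_{d'_{x_i}}/d'_{x_i}$, with entropy $H(p)+\sum_{\vec x}p_{\vec x}\sum_i\log d'_{x_i}$, independent of the $\rho_{\vec x}$.

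For the complementary output, a direct computation should show that, up to a local isometry, the single-site complementary map acts as $\mc N_0(\rho)=\rho\otimes\sigma_1$ on the $0$-block and as $\mc N_1(\rho)=\sigma_0\otimes\rho$ on the $1$-block. The fixed states $\sigma_0,\sigma_1$ should be maximally mixed (possibly up to a transpose). The $n$-copy complementary output is then exactly $\sum_{\vec x}p_{\vec x}\mc N_{\vec x}(\rho_{\vec x})$. Consequently
\[
\mc Q^{(1)}(\Phi^{\otimes n})=\max_{p}\Big[H(p)+\sum_{\vec x}p_{\vec x}\sum_i\log d'_{x_i}-\min_{\rho_{\vec x}}S\Big(\sum_{\vec x}p_{\vec x}\mc N_{\vec x}(\rho_{\vec x})\Big)\Big].
\]

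\textbf{Step 2 (main obstacle): spin alignment for the inner minimization.} I need Conjecture~\ref{conj:spin-alignment} for this specific pair $\sigma_0,\sigma_1$, namely that optimal $\rho_{\vec x}$ may be taken as products of pure states. Since the $\sigma_i$ are proportional to identities, every unit vector lies in the maximal eigenspace, so the conjecture reduces to the claim that product pure inputs are optimal. This is the flat-spectrum case; for $d_0=d_1$ it is the setting of \cite{Leditzky_2023}.

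I would try first to adapt their argument to unequal dimensions. If that fails, I would try iterating Lemma~\ref{lem:tensor-rearrangement} site by site, treating the remaining sites as the second tensor factor. The flat spectrum of $\sigma_i$ means $\sigma_i^\downarrow$ is unitarily equivalent to $\sigma_i$, which is what makes that iteration plausible. After alignment, $\rho_{\vec x}=\bigotimes_i\tau_{x_i}$ with $\tau$ fixed pure states, so the environment state is $\sum_{\vec x}p_{\vec x}\bigotimes_i\omega_{x_i}$ with fixed $\omega_0=\tau_0\otimes\sigma_1$ and $\omega_1=\sigma_0\otimes\tau_1$.

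\textbf{Step 3: single-letterize.} Introduce the classical register $X^n$ and the cq state $\theta=\sum_{\vec x}p_{\vec x}|\vec x\rangle\langle\vec x|\otimes\bigotimes_i\omega_{x_i}$. The chain rule gives
\[
H(p)-S(E^n)=H(X^n|E^n)_\theta-\sum_{\vec x}p_{\vec x}\sum_i S(\omega_{x_i}).
\]
Strong subadditivity gives $H(X^n|E^n)\le\sum_iH(X_i|E_i)$. The linear terms depend only on the marginals $p^{(i)}$ of $p$. Hence the objective is at most $\sum_i F(p^{(i)})$, where $F(q)=H(X|E)+\sum_xq_x(\log d'_x-S(\omega_x))$ is the $n=1$ objective with aligned inputs. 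By Proposition~\ref{prop:spin alignment 1 copy} (or simply since $F(q)$ is attained by a valid single-copy input), $F(q)\le\mc Q^{(1)}(\Phi)$. This yields $\mc Q^{(1)}(\Phi^{\otimes n})\le n\,\mc Q^{(1)}(\Phi)$, and regularizing gives $\mc Q(\Phi)=\mc Q^{(1)}(\Phi)$.
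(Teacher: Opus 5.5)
\paragraph{Steps 1--2 match the paper.} Your Steps 1--2 are the paper's Steps 1--2: the block-diagonal reduction via Lemma~\ref{lem:block-diagonal-optimizer}, a main output that depends only on $p$, a complementary output of spin-alignment form with maximally mixed $\sigma_0,\sigma_1$, and then Conjecture~\ref{conj:spin-alignment}.

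The paper does not resolve your ``main obstacle'' either. It simply invokes the multi-copy von Neumann conjecture, which it proves only for $n=1$ and for R\'enyi-2, so both arguments are conditional on it.

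Your fallback would not close this gap. Lemma~\ref{lem:tensor-rearrangement} needs each of the two groups obtained by splitting on $x_i$ to already be a product across site $i$ versus the rest, with a site-$i$ factor depending on $x_i$ alone. That decoupling is the real content of the conjecture. The alignment the lemma would then perform is vacuous for flat spectra, which is also why $n=1$ is trivial here.

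On the normalization you flagged, take the matched case $d_0'=d_1$, $d_1'=d_0$. The block-diagonal Kraus operators $E_{(a,b)}=\tfrac{1}{\sqrt{d_1}}\ketbra{a}{b}\oplus\tfrac{1}{\sqrt{d_0}}\ketbra{b}{a}$ produce the cross term $X_{01}^T/\sqrt{d_0d_1}$, not $X_{01}^T$. With coefficient $1$, the Choi matrix contains the principal minor $\begin{pmatrix}1/d_1&1\\1&1/d_0\end{pmatrix}$, so the displayed map is CP only for $d_0=d_1=1$. This rescaling is harmless for both arguments, since only block-diagonal inputs matter. These Kraus operators give exactly the complementary map you predicted.

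\paragraph{Step 3 is where you differ.} The paper restricts the enlarged channel $\wt\Phi$ to the aligned two-dimensional input subspace. It observes that the complementary output of the restriction $\Psi$ depends only on the two block weights, and writes down a measure-and-prepare degrading map; in your notation, $\mathcal D(Y)=\Tr(P_0Y)\,\omega_0+\Tr(P_1Y)\,\omega_1$. Additivity for degradable channels then gives $\mc Q^{(1)}(\wt\Phi^{\otimes n})=\mc Q^{(1)}(\Psi^{\otimes n})=n\,\mc Q^{(1)}(\Psi)$.

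You instead keep the aligned classical-quantum state $\theta$ and single-letterize by hand. Each ingredient is correct: the identity $H(p)-S(E^n)=H(X^n|E^n)_\theta-\sum_{\vec x}p_{\vec x}\sum_i S(\omega_{x_i})$, the strong-subadditivity bound $H(X^n|E^n)_\theta\le\sum_i H(X_i|E_i)_\theta$, and the identification of $F(q)$ with the single-copy coherent information of $q_0\tau_0\oplus q_1\tau_1$.

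This is essentially the same mechanism. The standard additivity proof for degradable channels is subadditivity of a conditional entropy, and here the degrading map only reads the block label. So your Step 3 is an inlined version of the paper's.

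What each route buys:
\begin{itemize}
\item Your route is more self-contained. It needs no auxiliary channel, no enlargement step, and no justification of $\mc Q^{(1)}(\wt\Phi^{\otimes n})=\mc Q^{(1)}(\Psi^{\otimes n})$.
\item Your route also makes explicit what alignment must deliver: product pure inputs whose site-$i$ factor depends on $x_i$ only. Otherwise the site-$i$ conditional states are mixtures, and the bound picks up an excess $\sum_x p^{(i)}_x S(\bar\tau^{(i)}_x)$, where $\bar\tau^{(i)}_x$ is the average site-$i$ factor over strings with $x_i=x$.
\item The paper's route yields the cleaner structural statement $\mc Q(\Phi)=\mc Q^{(1)}(\Psi)$ for a degradable qubit-input channel. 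This is the ``reduce to a degradable object'' template it reuses for the generalized direct-sum state.
\end{itemize}
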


\begin{proof}
    \textbf{Step 0: enlarge dimensions without changing coherent information.}
    When $d_0',d_1'$ do not match the sizes needed to place $X_{01}^T$ as an off-diagonal block, we work with an enlarged GDS channel
    $\wt\Phi:\mc B(\mb C^{d_0}\oplus \mb C^{d_1})\to \mc B(\mb C^{D_0}\oplus \mb C^{D_1})$ where
    \begin{equation}\label{eq:D0D1}
        D_0=\max\{d_0',d_1\},\qquad D_1=\max\{d_0,d_1'\}.
    \end{equation}
    Define $\wt\Phi$ exactly as in your construction, i.e. it has the block form
    \begin{equation*}
        \wt \Phi \begin{pmatrix}
            X_{00} & X_{01} \\
            X_{10} & X_{11}
        \end{pmatrix}
        = \begin{pmatrix}
            \Tr(X_{00})\frac{\mb I_{d_0'}}{d_0'} & 0 & X_{01}^T & 0 \\
            0 & 0 & 0 & 0 \\
            X_{10}^T & 0 & \Tr(X_{11})\frac{\mb I_{d_1'}}{d_1'} & 0 \\
            0 & 0 & 0 & 0
        \end{pmatrix},
    \end{equation*}
    so that $\Phi$ is obtained from $\wt\Phi$ by restricting to the supported output subspace (equivalently, composing with an isometry and its adjoint).
    Therefore, for every $n$ and every input state $\rho_n$,
    \begin{equation*}
        S(\Phi^{\otimes n}(\rho_n)) = S(\wt\Phi^{\otimes n}(\rho_n)),\qquad
        S((\Phi^c)^{\otimes n}(\rho_n)) = S((\wt\Phi^c)^{\otimes n}(\rho_n)),
    \end{equation*}
    and hence $\mc Q^{(1)}(\Phi^{\otimes n})=\mc Q^{(1)}(\wt\Phi^{\otimes n})$. In particular, it suffices to prove the claim for $\wt\Phi$.

    \textbf{Step 1: identify the spin-alignment structure on the complementary channel.}
    With Kraus operators as in your definition, $\wt\Phi^c$ can be chosen so that
    \begin{equation}\label{eq:Phi-tilde-complement}
        \wt \Phi^c \begin{pmatrix}
            X_{00} & X_{01} \\
            X_{10} & X_{11}
        \end{pmatrix}
        = \frac{1}{d_0'}P_{d_0'} \otimes \widehat X_{00} \;+\; \widehat X_{11} \otimes \frac{1}{d_1'}P_{d_1'} ,
    \end{equation}
    where $\widehat X_{00}$ (resp.\ $\widehat X_{11}$) is the embedding of $X_{00}$ (resp.\ $X_{11}$) into $\mc B(\mb C^{D_1})$ (resp.\ $\mc B(\mb C^{D_0})$), and $P_{d_0'}$ (resp.\ $P_{d_1'}$) is the projection onto $\mb C^{d_0'}\subset \mb C^{D_0}$ (resp.\ $\mb C^{d_1'}\subset \mb C^{D_1}$).

    In particular, for any block-diagonal input state
    \begin{equation*}
        \rho = \begin{pmatrix}
            \rho_{A_0} & 0 \\
            0 & \rho_{A_1}
        \end{pmatrix},\qquad p_0:=\Tr(\rho_{A_0}),\ p_1:=\Tr(\rho_{A_1}),
    \end{equation*}
    the complementary output is exactly a two-branch mixture of the spin-alignment form:
    \begin{equation}\label{eq:spin-align-instance}
        \wt\Phi^c(\rho)
        = p_0\Big(\frac{1}{d_0'}P_{d_0'}\Big)\otimes \widehat{\rho}_{A_0}
        \;+\;
        p_1\ \widehat{\rho}_{A_1}\otimes \Big(\frac{1}{d_1'}P_{d_1'}\Big).
    \end{equation}

    \textbf{Step 2: reduce the $n$-copy optimization to the aligned two-dimensional subspace (Spin Alignment Conjecture).}
    By Lemma~\ref{lem:block-diagonal-optimizer}, for every $n$ we may restrict to inputs of the form
    \begin{equation*}
        \rho_n = \sum_{\vec x \in \{0,1\}^n} p_{\vec x}\ |\vec x\rangle\langle \vec x|\otimes \rho_{\vec x}.
    \end{equation*}
    For our specific choice~\eqref{eq:Phi-special} (hence also for $\wt\Phi$), the main output $\wt\Phi^{\otimes n}(\rho_n)$ depends on $\rho_{\vec x}$ only through the weights $p_{\vec x}$, because each diagonal branch is completely depolarizing. Therefore, for fixed $\{p_{\vec x}\}$, maximizing $I_c(\rho_n,\wt\Phi^{\otimes n})$ is equivalent to minimizing the entropy of the complementary output
    \begin{equation*}
        (\wt\Phi^c)^{\otimes n}(\rho_n)
        = \sum_{\vec x \in \{0,1\}^n} p_{\vec x}\ (\wt\Phi^c)^{\otimes n}\big(|\vec x\rangle\langle \vec x|\otimes \rho_{\vec x}\big),
    \end{equation*}
    and each summand has the spin-alignment structure inherited from~\eqref{eq:spin-align-instance} on every tensor factor.

    Invoking the (multi-copy) Spin Alignment Conjecture for the entropy-minimization problem associated with~\eqref{eq:spin-align-instance}, we may choose an optimizer such that, for each $\vec x$, the state $\rho_{\vec x}$ is supported on the tensor product of one-dimensional subspaces spanned by a fixed unit vector in $\mb C^{d_0}$ and a fixed unit vector in $\mb C^{d_1}$ (i.e.\ the ``aligned spins''). Equivalently, there exists an optimizer supported on
    \begin{equation}\label{eq:aligned-subspace}
        \Big(\mathrm{span}\{\ket{0},\ket{d_1}\}\Big)^{\otimes n}\subset (\mb C^{d_0}\oplus \mb C^{d_1})^{\otimes n},
    \end{equation}
    where $\ket{0}$ denotes a unit vector in the first summand $\mb C^{d_0}$ and $\ket{d_1}$ denotes a unit vector in the second summand $\mb C^{d_1}$ (after fixing an orthonormal basis).
    Consequently, for every $n$,
    \begin{equation}\label{eq:reduce-to-subchannel}
        \mc Q^{(1)}(\wt\Phi^{\otimes n})=\mc Q^{(1)}(\Psi^{\otimes n}),
    \end{equation}
    where $\Psi$ is the restriction of $\wt\Phi$ to the two-dimensional input subspace $\mathrm{span}\{\ket{0},\ket{d_1}\}$.

    \textbf{Step 3: the restricted channel $\Psi$ is degradable.}
    On $\mathrm{span}\{\ket{0},\ket{d_1}\}$, both diagonal blocks $X_{00}$ and $X_{11}$ are scalars, hence by~\eqref{eq:Phi-tilde-complement} the complementary output $\Psi^c(\cdot)$ depends only on the block traces (and ignores the off-diagonal entry).
    Let $P_0$ and $P_1$ be the orthogonal projections onto the two output summands of $\wt\Phi$.
    Define a CPTP map $\mc D$ on the output space of $\Psi$ by
    \begin{equation}\label{eq:degrading-map}
        \mc D(Y)
        := \Tr(P_0 Y)\Big(\frac{1}{d_0'}P_{d_0'}\Big)\otimes \ket{0}\bra{0}
        \;+\;
        \Tr(P_1 Y)\ \ket{d_1}\bra{d_1}\otimes \Big(\frac{1}{d_1'}P_{d_1'}\Big).
    \end{equation}
    Since $\Tr(P_0\Psi(\rho))=\Tr(\rho_{A_0})$ and $\Tr(P_1\Psi(\rho))=\Tr(\rho_{A_1})$, and since $\Psi^c$ depends only on these two weights on the restricted subspace, one checks directly from~\eqref{eq:Phi-tilde-complement} that
    \begin{equation*}
        \Psi^c = \mc D \circ \Psi,
    \end{equation*}
    i.e.\ $\Psi$ is degradable. Therefore $\mc Q(\Psi)=\mc Q^{(1)}(\Psi)$.

    \textbf{Step 4: conclude single-letter capacity for $\Phi$.}
    Using~\eqref{eq:reduce-to-subchannel} and degradability of $\Psi$,
    \begin{equation*}
        \mc Q(\Phi)
        = \lim_{n\to\infty}\frac{1}{n}\mc Q^{(1)}(\Phi^{\otimes n})
        = \lim_{n\to\infty}\frac{1}{n}\mc Q^{(1)}(\wt\Phi^{\otimes n})
        = \lim_{n\to\infty}\frac{1}{n}\mc Q^{(1)}(\Psi^{\otimes n})
        = \mc Q^{(1)}(\Psi)
        = \mc Q^{(1)}(\wt\Phi)
        = \mc Q^{(1)}(\Phi),
    \end{equation*}
    which proves $\mc Q(\Phi)=\mc Q^{(1)}(\Phi)$.
\end{proof}
Now we provide the proof of the single-letterization of one-way distillable entanglement for a class of generalised direct sum (GDS) states as an example case of the Spin alignment phenomena.
\begin{prop}\label{thm:GDS-state-additivity}
Let $A,B \cong \mathbb{C}^{d_0+d_1}$ and define
\begin{align*}
\rho_{AB}
=
\frac{1}{2d_0 d_1}
\sum_{i=0}^{d_0-1}
\sum_{j=0}^{d_1-1}
\left(
\ket{i}\ket{j}
+
\ket{j+d_0}\ket{i+d_1}
\right)
\left(
\bra{i}\bra{j}
+
\bra{j+d_0}\bra{i+d_1}
\right).
\end{align*}
Then $\rho_{AB}$ has single-letter one-way distillable entanglement:
\begin{align*}
D_{\to}(\rho_{AB}) = D^{(1)}_{\to}(\rho_{AB}).
\end{align*}
\end{prop}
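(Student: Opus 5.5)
We plan to run the argument of Proposition~\ref{prop:GDS-single-letter}, but on the state side. We replace the maximal coherent information by $D^{(1)}_\to$ and its isometric form in Lemma~\ref{lemma:alternative}.

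\textbf{Step 1: structure of $\rho_{AB}$.} We first check that $\rho_{AB}$ is, up to normalization and local unitaries (which is harmless by \eqref{eqn:unitary invariance}), the Choi state of the generalized direct sum map. Its diagonal blocks are the completely depolarizing maps $X_{00}\mapsto \Tr(X_{00})\mb I_{d_1}/d_1$ and $X_{11}\mapsto\Tr(X_{11})\mb I_{d_0}/d_0$. Its off-diagonal blocks are coherent transpositions. Concretely, $A$ splits as $A_0\oplus A_1$ with $A_0=\mathrm{span}\{\ket i\}_{i<d_0}$ and $A_1=\mathrm{span}\{\ket{j+d_0}\}$.

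The purification is $\ket{\phi}_{ABE}\propto\sum_{i,j}(\ket{i}\ket{j}+\ket{j+d_0}\ket{i+d_1})\ket{ij}_E$. Its $E$-marginal, conditioned on the $A$-block, has exactly the spin-alignment form of \eqref{eq:spin-align-instance}. The $A_0$ branch gives $\widehat\rho_{A_0}\otimes \mb I/d_1$, and the $A_1$ branch gives $\mb I/d_0\otimes\widehat\rho_{A_1}$, up to reordering.

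\textbf{Step 2: block-diagonal reduction for instruments.} Fix $n$ and an optimal instrument $\mc T$ for $D^{(1)}_\to(\rho^{\otimes n})$, with isometry $V$. By Lemma~\ref{lemma:alternative} the objective is $S(B^nM)-S(E^nM)=\sum_m \lambda_m[S(B^n)_{\rho_m}-S(E^n)_{\rho_m}]$.

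Each branch $K_m$ induces an input $\psi_m$ on $A^n$ for the induced channel $\Phi^{\otimes n}$, and the branch value is then the coherent information of $\Phi^{\otimes n}$ at $\psi_m$. We apply the pinching argument of Lemma~\ref{lem:block-diagonal-optimizer} branch by branch. This replaces each branch by a block-diagonal one, indexed by $\vec x\in\{0,1\}^n$, without decreasing its value. For this particular state the $B$-output entropy of a block-diagonal branch depends only on the block weights, because the diagonal blocks are depolarizing. The remaining task is therefore to minimize $S(E^n)$, which is exactly the problem in Conjecture~\ref{conj:spin-alignment} with $\sigma_0=\mb I/d_1$ and $\sigma_1=\mb I/d_0$.

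The crucial point is that here the $\sigma_i$ are maximally mixed. Every unit vector is then a maximal eigenvector, and the alignment statement becomes trivial: $S(E^n)$ is independent of the pure $\rho_{\vec x}$ chosen. We still need to verify this directly, by showing that the entropy of $\sum_{\vec x}p_{\vec x}\mc N_{\vec x}(\rho_{\vec x})$ is invariant under local unitaries on each $\rho_{\vec x}$. I expect this to be the main obstacle. Different $\vec x$ place the free factor on different tensor slots, and unitary covariance of the $\mb I$ factors only makes the slots invariant individually. Joint invariance must be argued through a twirling, or majorization, argument: apply random local unitaries per slot, use concavity and Schur concavity, and conclude that product pure choices of the form $\bigotimes_i\tau^{(i)}_{x_i}$ are optimal. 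If the von Neumann case resists, I will fall back on the $n=1$ majorization argument (Lemma~\ref{lem:tensor-rearrangement}) applied iteratively slot by slot.

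\textbf{Step 3: reduction to a degradable qubit state and completeness.} Once each branch is restricted to $\mathrm{span}\{\ket0,\ket{d_0}\}^{\otimes n}$ on Alice, the relevant object is the restriction $\Psi$. As in Step 3 of Proposition~\ref{prop:GDS-single-letter}, $\Psi$ is degradable. This gives $D^{(1)}_\to(\rho^{\otimes n})\le n\,\widehat D^{(1)}_\to$ of the restricted state, which equals $n\,\mc Q^{(1)}(\Psi)$.

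To close the chain we need the matching lower bound $D^{(1)}_\to(\rho)\ge \mc Q^{(1)}(\Psi)$ despite the completeness constraint. We will obtain it by completing the optimal filter with the same orthogonal-flag trick as in Lemma~\ref{lemma:concavity}. Alice measures the block projectors $\{P_{\vec x}\}$ together with rank-one complements on the aligned subspace, and the complement branches contribute zero, as in the lemma showing $\widehat D=0\iff D=0$. By the unitary invariance in Step~2, every aligned direction is equivalent, so a full instrument built from a basis of aligned qubit subspaces achieves the same value as the filter. Combining the upper and lower bounds gives $D_\to(\rho_{AB})=\lim_n \frac1n D^{(1)}_\to(\rho^{\otimes n})=D^{(1)}_\to(\rho_{AB})$.
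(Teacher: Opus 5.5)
Your Steps 1--2 follow the paper's Steps 1--2: you reduce to block-diagonal POVM elements and then align within each block. After that there are two genuine gaps. The second one is fatal for general $d_0\neq d_1$.

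\emph{The alignment step is not trivial.} For $n\ge 2$, alignment is not automatic even though $\sigma_0,\sigma_1$ are maximally mixed. The output entropy is invariant under a common product unitary $\bigotimes_i(U_0^{(i)}\otimes U_1^{(i)})$ applied to all branches at once. It is not invariant under independent unitaries on each $\rho_{\vec x}$. For example, take $d_0=d_1=2$, $n=2$, $p_{00}=p_{01}=\tfrac12$, $\rho_{00}=\ketbra{a}{a}\otimes\ketbra{c}{c}$ and $\rho_{01}=\ketbra{a'}{a'}\otimes\ketbra{b}{b}$. The output entropy is $5/2$ bits if $a'=a$, but $3$ bits if $a'\perp a$. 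Your twirling idea runs the wrong way: averaging over unitaries and using concavity gives $S(\text{twirled})\ge S(\text{original})$, which cannot certify that aligned inputs minimize the entropy. Your fallback fails too, because Lemma~\ref{lem:tensor-rearrangement} handles two tensor-product terms, not $2^n$ branches with possibly entangled $\rho_{\vec x}$. The paper does not prove this step either. It invokes the multi-copy Conjecture~\ref{conj:spin-alignment}, which it establishes only for $n=1$ and for R\'enyi-2. So your argument can at best be conditional on that conjecture, as the paper's is.

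\emph{Step 3 closes with a false inequality.} You need $D^{(1)}_\to(\rho_{AB})\ge\mathcal Q^{(1)}(\Psi)$, i.e.\ that the complete-instrument value matches the filter value $\widehat D^{(1)}_\to$ on the aligned restriction. This fails whenever $d_0\neq d_1$.

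For the aligned filter $K=\sqrt a\,\ketbra{0}{0}+\sqrt b\,\ketbra{d_0}{d_0}$, the normalized filtered state has coherent information $I(p)=(u+v)\,h\big(\tfrac{u}{u+v}\big)$. Here $p=\tfrac{ad_1}{ad_1+bd_0}$, $u=p/d_1$ and $v=(1-p)/d_0$, so $\mathcal Q^{(1)}(\Psi)=\max_p I(p)$.

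A complete instrument, however, obeys $\sum_m\alpha_{m,\vec x}=\Tr\Pi_{\vec x}=d_0^{N_0(\vec x)}d_1^{N_1(\vec x)}$, and this survives pinching and alignment. At $n=1$, where alignment is unproblematic, each branch contributes $\lambda_m I(p_m)=F\big(\tfrac{\alpha_{m,0}}{2d_0d_1},\tfrac{\alpha_{m,1}}{2d_0d_1}\big)$ with $F(x,y)=(x+y)h\big(\tfrac{x}{x+y}\big)$. Since $F$ is concave and positively homogeneous, the total is at most $F\big(\tfrac{1}{2d_1},\tfrac{1}{2d_0}\big)=I(1/2)=I(A\rangle B)_{\sigma_0}$, which is the paper's value.

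Now $I'(1/2)=\tfrac{1}{d_1}\log\tfrac{d_0+d_1}{d_0}-\tfrac{1}{d_0}\log\tfrac{d_0+d_1}{d_1}\neq 0$ for $d_0\neq d_1$. Hence $D^{(1)}_\to(\rho_{AB})<\mathcal Q^{(1)}(\Psi)\le\widehat D^{(1)}_\to(\rho_{AB})$ strictly. For instance, with $d_0=1$ and $d_1=2$, $I(1/2)=\tfrac34h(1/3)\approx 0.689$ while $I(0.55)\approx 0.694$.

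Neither of your completion devices fixes this:
\begin{itemize}
\item The rank-one-complement trick from the $\widehat D=0\iff D=0$ lemma rescales the filter by $c$. It preserves positivity, not the value.
\item Uniformly tiling $A$ by aligned qubit subspaces forces weights $1/d_1$ on each $\ket{i}$ and $1/d_0$ on each $\ket{d_0+j}$, i.e.\ $p=1/2$. Non-uniform tilings do no better, by the concavity bound above.
\end{itemize}

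The missing idea is to carry completeness through the reduction, as the paper does. Set $F_m=\sum_{\vec x}d_0^{-N_0(\vec x)}d_1^{-N_1(\vec x)}\alpha_{m,\vec x}\bigotimes_i\ketbra{t(x_i)}{t(x_i)}$. Then $\sum_m F_m\le\mathbb I$ and $\sqrt{F_m}\,K_{00}^{\otimes n}=(d_0d_1)^{-n/2}\sqrt{E'_m}$. So the infeasible aligned filters on $\rho_{AB}^{\otimes n}$ become a genuine instrument on the fixed degradable state $\sigma_0^{\otimes n}$. Degradability then gives the tight bound $n\,I(A\rangle B)_{\sigma_0}$, which $\mathcal T^{\otimes n}$ attains. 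Your chain closes only in the symmetric case $d_0=d_1$.
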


\begin{proof}
We prove that for every $n\in\mathbb{N}$,
\begin{align*}
D^{(1)}_{\to}(\rho_{AB}^{\otimes n})
=
n\,D^{(1)}_{\to}(\rho_{AB}),
\end{align*}
which implies the claim by regularization.

\medskip
\noindent
\textbf{Step 0: a canonical instrument.}
For $0\le i<d_0$ and $0\le j<d_1$ define the (orthonormal) vectors
\begin{align*}
\ket{\psi_{ij}}
=
\frac{1}{\sqrt{2}}\Bigl(\ket{i}\ket{j}+\ket{j+d_0}\ket{i+d_1}\Bigr),
\end{align*}
so that $\rho_{AB}=\frac{1}{d_0d_1}\sum_{i,j}\ketbra{\psi_{ij}}{\psi_{ij}}$ is the maximally mixed state on the span of $\{\ket{\psi_{ij}}\}_{i,j}$.
Define an instrument $\mathcal{T}:A\to AM$ with Kraus operators
\begin{align*}
K_{ij}
=
\frac{1}{\sqrt{d_1}}\ketbra{i}{i}
+
\frac{1}{\sqrt{d_0}}\ketbra{d_0+j}{d_0+j},
\qquad
0\le i<d_0,\; 0\le j<d_1.
\end{align*}
A direct computation gives $\sum_{i,j}K_{ij}^\dagger K_{ij}=\mathbb{I}_A$, hence $\mathcal{T}$ is trace-preserving and $\mathcal{T}^{\otimes n}$ is a valid candidate for $D^{(1)}_{\to}(\rho_{AB}^{\otimes n})$.

\medskip
\noindent
\textbf{Step 1: reduction to block-diagonal POVM elements (pinching).}
Decompose $A=\mathbb{C}^{d_0}\oplus\mathbb{C}^{d_1}$ and define projectors
\begin{align*}
\Pi_0=\sum_{i=0}^{d_0-1}\ketbra{i}{i},
\qquad
\Pi_1=\sum_{j=0}^{d_1-1}\ketbra{j+d_0}{j+d_0}.
\end{align*}
For $\vec{x}\in\{0,1\}^n$, set
\begin{align*}
\Pi_{\vec{x}}=\Pi_{x_1}\otimes\cdots\otimes\Pi_{x_n}.
\end{align*}

Let $\mathcal{T}_1:A^{\otimes n}\to A^{\otimes n}M$ be any trace-preserving instrument with Kraus operators $\{L_m\}_m$ and POVM elements $E_m=L_m^\dagger L_m$.
Consider the pinched POVM elements
\begin{align*}
\widetilde E_m
:=
\sum_{\vec{x}\in\{0,1\}^n}\Pi_{\vec{x}}E_m\Pi_{\vec{x}}.
\end{align*}
Since $\sum_m E_m=\mathbb{I}$, we still have $\sum_m \widetilde E_m=\mathbb{I}$, so $\{\widetilde E_m\}_m$ comes from a trace-preserving instrument $\widetilde{\mathcal{T}}_1$.

The GDS structure of $\rho_{AB}$ implies that, conditioned on outcome $m$, the joint state of $B^{\otimes n}$ and of a purification environment depends only on the diagonal blocks $\Pi_{\vec{x}}E_m\Pi_{\vec{x}}$ and is insensitive to the off-diagonal blocks $\Pi_{\vec{x}}E_m\Pi_{\vec{y}}$ with $\vec{x}\neq \vec{y}$.
Moreover, on the Bob system the replacement $E_m\mapsto \widetilde E_m$ corresponds to pinching the post-measurement state in the direct-sum decomposition indexed by $\vec{x}$, which increases the entropy by the Ky Fan majorization trick~\cite{bhatia2013matrix}.
Since $I(A^{\otimes n}\rangle B^{\otimes n}M)=S(B^{\otimes n}M)-S(E^{\otimes n}M)$ for a purification, and the environment term is unchanged under the above replacement, we obtain
\begin{align*}
I(A^{\otimes n}\rangle B^{\otimes n}M)_{\widetilde{\mathcal{T}}_1(\rho^{\otimes n})}
\ge
I(A^{\otimes n}\rangle B^{\otimes n}M)_{\mathcal{T}_1(\rho^{\otimes n})}.
\end{align*}
It therefore suffices to optimize over block-diagonal instruments.

\medskip
\noindent
\textbf{Step 2: spin alignment within each block.}
Fix a block-diagonal instrument (still denoted $\mathcal{T}_1$) with POVM elements $E_m=\sum_{\vec{x}}\Pi_{\vec{x}}E_m\Pi_{\vec{x}}$.
For each $m$ and $\vec{x}$ define the weights
\begin{align*}
\alpha_{m,\vec{x}}:=\Tr(\Pi_{\vec{x}}E_m\Pi_{\vec{x}})\ge 0.
\end{align*}
For our state $\rho_{AB}^{\otimes n}$, the marginal on $B^{\otimes n}M$ depends on $E_m$ only through the collection $\{\alpha_{m,\vec{x}}\}_{\vec{x}}$ (intuitively, within a fixed block $\vec{x}$ the state is maximally mixed on the corresponding support, so only the total weight matters for Bob's entropy).
Thus, keeping $\{\alpha_{m,\vec{x}}\}$ fixed, to increase coherent information it is enough to \emph{minimize} the entropy contribution coming from the environment.

By the (multi-copy) spin alignment property applied to the entropy-minimization problem induced by $\rho_{AB}^{\otimes n}$, the environment entropy is minimized when each positive block $\Pi_{\vec{x}}E_m\Pi_{\vec{x}}$ is replaced by a rank-one projector with the same weight.
Concretely, define positive elements
\begin{align*}
E'_{m}
=
\sum_{\vec{x}\in\{0,1\}^n}
\alpha_{m,\vec{x}}
\bigotimes_{i=1}^n
\ketbra{t(x_i)}{t(x_i)},
\end{align*}
where $t(0)=0$ and $t(1)=d_0$.
Let $\mathcal{T}_2$ denote the corresponding (possibly non–trace-preserving) instrument (as in Definition~\ref{def: Non-TP instrument}). Then, by construction,
\begin{align*}
I(A^{\otimes n}\rangle B^{\otimes n}M)_{\mathcal{T}_2(\rho^{\otimes n})}
\ge
I(A^{\otimes n}\rangle B^{\otimes n}M)_{\mathcal{T}_1(\rho^{\otimes n})}.
\end{align*}

\medskip
\noindent
\textbf{Step 3: reduction to a degradable tensor-power state.}
Define
\begin{align*}
\sigma_0
=
(d_0 d_1)
(K_{00}\otimes\mathbb{I})
\rho_{AB}
(K_{00}^\dagger\otimes\mathbb{I}),
\qquad
\sigma
=
\sigma_0^{\otimes n}.
\end{align*}
One verifies (see the appendix) that $\sigma_0$ is a degradable state, hence so is $\sigma$.
Moreover, from the explicit form of $E'_m$, one can construct an instrument $\mathcal{T}_3$ acting on $\sigma$ such that the classical-quantum branch structure matches that of $\mathcal{T}_2(\rho^{\otimes n})$ and
\begin{align*}
I(A^{\otimes n}\rangle B^{\otimes n}M)_{\mathcal{T}_2(\rho^{\otimes n})}
=
I(A^{\otimes n}\rangle B^{\otimes n}M)_{\mathcal{T}_3(\sigma)}.
\end{align*}

Since $\sigma$ is degradable, coherent information is maximized by the trivial instrument~\cite{Leditzky_2018}. Therefore
\begin{align*}
I(A^{\otimes n}\rangle B^{\otimes n})_{\sigma}
\ge
I(A^{\otimes n}\rangle B^{\otimes n}M)_{\mathcal{T}_3(\sigma)}.
\end{align*}
Combining inequalities gives the upper bound
\begin{align}
I(A^{\otimes n}\rangle B^{\otimes n})_{\sigma}
\ge
D^{(1)}_{\to}(\rho_{AB}^{\otimes n}).
\end{align}

\medskip
\noindent
\textbf{Step 4: a matching lower bound from $\mathcal{T}^{\otimes n}$.}
For all $x_i\in (0,\dots, d_0-1)$ and $y_i \in (0,\dots,d_1-1)$, the normalized branch states
\begin{align*}
(d_0d_1)^n
\bigotimes_{i=1}^n
(K_{x_i y_i}\otimes\mathbb{I})
\rho_{AB}
(K_{x_i y_i}^\dagger\otimes\mathbb{I})
\end{align*}
are related to $\sigma$ by local unitaries on $A^{\otimes n}$ and $B^{\otimes n}$ (they simply relabel the basis vectors within each direct-sum block). Hence they have the same coherent information as $\sigma$, and these branches are precisely the outcomes of $\mathcal{T}^{\otimes n}$. Therefore
\begin{align}\label{eq:low-bound}
I(A^{\otimes n}\rangle B^{\otimes n})_{\sigma}
=
I(A^{\otimes n}\rangle B^{\otimes n}M)_{\mathcal{T}^{\otimes n}(\rho^{\otimes n})}
\le
D^{(1)}_{\to}(\rho_{AB}^{\otimes n}).
\end{align}
Combining the bounds yields
\begin{align}
D^{(1)}_{\to}(\rho_{AB}^{\otimes n})
=
I(A^{\otimes n}\rangle B^{\otimes n})_{\sigma}.
\end{align}
Since $\sigma=\sigma_0^{\otimes n}$,
\begin{align*}
I(A^{\otimes n}\rangle B^{\otimes n})_{\sigma}
=
n\, I(A\rangle B)_{\sigma_0}.
\end{align*}
Therefore,
\begin{align*}
D^{(1)}_{\to}(\rho_{AB}^{\otimes n})
=
n\,D^{(1)}_{\to}(\rho_{AB}).
\end{align*}
Taking the regularized limit yields
\begin{align}
D_{\to}(\rho_{AB})
=
\lim_{n\to\infty}
\frac{1}{n}
D^{(1)}_{\to}(\rho_{AB}^{\otimes n})
=
D^{(1)}_{\to}(\rho_{AB}).
\end{align}
\end{proof}
 One can notice the structural similarity in the proofs of the Propositions \ref{prop:GDS-single-letter} and \ref{thm:GDS-state-additivity} till step 2. However, in step 3 and step 4 the proofs significantly differ. As demonstrated below, this difference explains why it is harder to prove additivity for $D^{(1)}_\to$ than that for $Q^{(1)}$.

\begin{center}
\begin{tikzpicture}[
    box/.style={
        draw,
        rectangle,
        align=center,
        text width=6cm,
        minimum height=2cm
    },
    step/.style={
        font=\bfseries
    },
    arrow/.style={
        -{Latex},
        thick
    },
    node distance=1.8cm
]

\node at (0,1.8) {\textbf{Channel ($\Phi$)}};
\node at (9,1.8) {\textbf{State ($\rho$)}};

\node[box] (C1) at (0,0)
{Block diagonal input states are optimal\\
for $Q^{(1)}(\Phi^{\otimes n})$};

\node[box, below=of C1] (C2)
{Rank-1 in each block is optimal:\\
still a feasible solution};

\node[box, below=of C2] (C3)
{$\Phi^{\otimes n}$ restricted to the optimum subspace: $\Psi^{\otimes n}$
$\therefore Q^{(1)}(\Phi^{\otimes n}) = Q^{(1)}(\Psi^{\otimes n})$};

\node[box, below=of C3] (C4)
{$\Psi$ turns out to be degradable.\\
$\implies Q^{(1)}(\Phi^{\otimes n}) = Q^{(1)}(\Psi^{\otimes n}) = n\,Q^{(1)}(\Psi)=n\,Q^{(1)}(\Phi)$};

\node[box] (S1) at (9,0)
{Block diagonal POVM elements are optimal for $D^{(1)}_{\to}(\rho^{\otimes n})$};

\node[box, below=of S1] (S2)
{Rank-1 in each block yields larger coherent information.\\
However, \underline{not} a feasible solution};

\node[box, below=of S2] (S3)
{A degradable state $\sigma_0^{\otimes n}$ is constructed such that
$D^{(1)}_{\to}(\sigma_0^{\otimes n}) \ge D^{(1)}_{\to}(\rho^{\otimes n})$};

\node[box, below=of S3] (S4)
{There exists an instrument for $\rho^{\otimes n}$ that achieves $D^{(1)}_{\to}(\sigma_0^{\otimes n})$\\
 $\implies D^{(1)}_{\to}(\rho^{\otimes n}) = D^{(1)}_{\to}(\sigma_0^{\otimes n})= n\,D^{(1)}_{\to}(\sigma_0)= n\,D^{(1)}_{\to}(\rho)$\\
};

\draw[arrow] (C1) -- (C2);
\draw[arrow] (C2) -- (C3);
\draw[arrow] (C3) -- (C4);

\draw[arrow] (S1) -- (S2);
\draw[arrow] (S2) -- (S3);
\draw[arrow] (S3) -- (S4);

\end{tikzpicture}
\end{center}

\begin{remark}
    Unlike the GDS channel setting, for general GDS states with different block dimensions on Bob (e.g.\ $B=\mathbb{C}^{d_0'}\oplus\mathbb{C}^{d_1'}$ with $(d_0',d_1')\neq(d_0,d_1)$), the spin alignment conjecture still suggests that one can restrict to aligned rank-one structures inside each block. However, after this restriction the different measurement branches need not be related by local isometries on $AB$, so the shortcut used in~\eqref{eq:low-bound} (identifying all branches with a single tensor-power degradable state) can fail. This makes proving additivity of one-way distillable entanglement for arbitrary generalised direct sum states comparatively harder than the channel setting.
\end{remark}

\section{Conclusion and Outlook}
We identified new structural conditions under which one-way distillable entanglement admits a single-letter formula. In particular, we introduced the notions of \textit{less noisy at level $n$} and \textit{informationally degradable} states, and showed that both classes are single-letter for $D_\to$. Informationally degradable states are moreover additive under tensor products, i.e., they exhibit strong additivity. These results demonstrate that single-letter behavior extends strictly beyond the degradable regime.

We further proved a general single-letter theorem for mixtures with orthogonal support on Alice’s side, introducing non-degradable states with “useless” components. This yields explicit examples where the distillable entanglement is simply the convex combination of the components, despite the absence of degradability.

A central conceptual ingredient is the spin-alignment phenomenon, which explains why entropy minimization for certain generalized direct-sum channels is achieved by locally aligned inputs. We established this effect for the single-copy von Neumann entropy and for arbitrary block length in the Rényi-2 case. Applying this principle, we showed that a class of generalized direct sum channels has single-letter quantum capacity, and that some particular case of the corresponding states have single-letter one-way distillable entanglement. However, we have found that even though numerical evidences for additivity of $D^{(1)}_{\to}$ are present for certain states, proving additivity is difficult.

Our results reveal new mechanisms enforcing additivity and single-letter formulas beyond degradability. An important open problem is to prove the spin alignment conjecture with von-Neumann entropy for arbitrary tensor powers. Understanding the comparative additivity of $D^{(1)}_{\to}$ and $\mathcal{Q}^{(1)}$ remains to be explored. Finally, it remains to be understood whether the techniques developed here extend to two-way distillation or to other quantum resource theories.
\bibliographystyle{marcotomPB} 
\bibliography{additivity}

\appendix

\subsection{Explicit upper bound on each alignment term for R\'enyi-2 entropy}\label{app:upper bound}
In this section, we present the details of the proof of \eqref{eqn:key upper bound Renyi}. That is 
\begin{align*}
    \Tr\big[\mathcal N_{\vec x}(\rho_{\vec x})\,\mathcal N_{\vec y}(\rho_{\vec y})\big] \le \alpha_1^{N_{00}(\vec x,\vec y)}\alpha_0^{N_{11}(\vec x,\vec y)}
(\lambda_0\lambda_1)^{N_{01}(\vec x,\vec y)+N_{10}(\vec x,\vec y)},\quad \forall \vec x,\vec y\in\{0,1\}^n,
\end{align*}
where $\mathcal N_0(\rho)=\rho\otimes\sigma_1$ and $\mathcal N_1(\rho)=\sigma_0\otimes\rho$ and the constants are given by
\begin{equation}
    \lambda_0:=\lambda_{\max}(\sigma_0),\qquad \lambda_1:=\lambda_{\max}(\sigma_1),\qquad
\alpha_0:=\Tr(\sigma_0^2),\qquad \alpha_1:=\Tr(\sigma_1^2).
\end{equation}
For $\vec x,\vec y\in\{0,1\}^n$, the counts are given by
\begin{equation}
    N_{ab}(\vec x,\vec y):=\bigl|\{i\in[n] : (x_i,y_i)=(a,b)\}\bigr|,\qquad a,b\in\{0,1\}.
\end{equation}

\begin{lemma}[Single-site adjoint compositions]\label{lem:single-site}
With respect to the Hilbert--Schmidt inner product, the adjoints satisfy
\[
\mathcal N_0^\dagger(X)=\Tr_{B_1}\!\big[(I\otimes\sigma_1)X\big],
\qquad
\mathcal N_1^\dagger(X)=\Tr_{B_0}\!\big[(\sigma_0\otimes I)X\big].
\]
Moreover, for any positive operator $X$ of compatible dimension,
\begin{align*}
\mathcal N_0^\dagger\mathcal N_0(X)&=\alpha_1\,X,\\
\mathcal N_1^\dagger\mathcal N_1(X)&=\alpha_0\,X,\\
\mathcal N_0^\dagger\mathcal N_1(X)&=\Tr(\sigma_1 X)\,\sigma_0,\\
\mathcal N_1^\dagger\mathcal N_0(X)&=\Tr(\sigma_0 X)\,\sigma_1.
\end{align*}
\end{lemma}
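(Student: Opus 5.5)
The plan is to verify each identity directly from the definitions, using the Hilbert--Schmidt pairing $\langle Y,Z\rangle=\Tr(Y^\dagger Z)$.

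\textbf{Adjoints.} For $\mathcal N_0^\dagger$, take $\rho$ on $A_0\simeq B_0$ and $X$ on $B_0B_1$. Then
\[
\Tr\big[(\rho\otimes\sigma_1)X\big]=\Tr_{B_0}\!\big[\rho\,\Tr_{B_1}[(I\otimes\sigma_1)X]\big],
\]
by the defining property of the partial trace. Reading off the operator paired with $\rho$ gives $\mathcal N_0^\dagger(X)=\Tr_{B_1}[(I\otimes\sigma_1)X]$. The formula for $\mathcal N_1^\dagger$ follows the same way with the roles of $B_0$ and $B_1$ swapped. Self-adjointness of $\sigma_i$ means no daggers appear.

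\textbf{Compositions.} Each one is a short partial-trace computation on product operators:
\begin{align*}
\mathcal N_0^\dagger\mathcal N_0(X)&=\Tr_{B_1}[(I\otimes\sigma_1)(X\otimes\sigma_1)]=X\,\Tr(\sigma_1^2)=\alpha_1 X,\\
\mathcal N_1^\dagger\mathcal N_1(X)&=\Tr_{B_0}[(\sigma_0\otimes I)(\sigma_0\otimes X)]=\alpha_0 X,\\
\mathcal N_0^\dagger\mathcal N_1(X)&=\Tr_{B_1}[(I\otimes\sigma_1)(\sigma_0\otimes X)]=\sigma_0\,\Tr(\sigma_1X),\\
\mathcal N_1^\dagger\mathcal N_0(X)&=\Tr_{B_0}[(\sigma_0\otimes I)(X\otimes\sigma_1)]=\Tr(\sigma_0X)\,\sigma_1.
\end{align*}
In the third line the input $X$ lives on $A_1\simeq B_1$ and the output lives on $A_0\simeq B_0$. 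In the fourth line the input lives on $A_0\simeq B_0$ and the output on $B_1$. Positivity of $X$ is not actually needed; the identities are linear.

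\textbf{Main obstacle.} The only real difficulty is bookkeeping of which tensor slot each operator occupies, given the ordering $B_0B_1$ in both channels. This fixes whether the trace is taken over $B_0$ or $B_1$ and determines the system on which the resulting $\sigma_0$ or $\sigma_1$ acts. Apart from that, every step reduces to $\Tr_{2}[(Y\otimes Z)(W\otimes V)]=YW\,\Tr(ZV)$.
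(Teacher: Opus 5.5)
Your proposal is correct and matches the paper's proof: you read off the adjoints from $\Tr[(\rho\otimes\sigma_1)X]=\Tr[\rho\,\Tr_{B_1}((I\otimes\sigma_1)X)]$ and its $B_0$ analogue, and you obtain the four compositions by direct substitution. Your side remark is also right: positivity of $X$ is not needed, because all four identities are linear in $X$.
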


\begin{proof}[Proof of Lemma~\ref{lem:single-site}]
The adjoint formulas follow directly from
\[
\Tr\big[(\rho\otimes\sigma_1)X\big]=\Tr\!\big[\rho\,\Tr_{B_1}((I\otimes\sigma_1)X)\big],
\qquad
\Tr\big[(\sigma_0\otimes\rho)X\big]=\Tr\!\big[\rho\,\Tr_{B_0}((\sigma_0\otimes I)X)\big].
\]
The composition identities are then obtained by substitution. For example,
\[
\mathcal N_0^\dagger\mathcal N_1(X)
=\mathcal N_0^\dagger(\sigma_0\otimes X)
=\Tr_{B_1}\big[(I\otimes\sigma_1)(\sigma_0\otimes X)\big]
=\Tr(\sigma_1 X)\,\sigma_0,
\]
and the remaining cases are analogous.
\end{proof}
\begin{proof}[Proof of \eqref{eqn:key upper bound Renyi}] 
Fix $\vec x,\vec y$. Using adjoints and Hölder's inequality,
\begin{align}
\Tr\big[\mathcal N_{\vec x}(\rho_{\vec x})\,\mathcal N_{\vec y}(\rho_{\vec y})\big]=
\Tr\big[\rho_{\vec x}\,(\mathcal N_{\vec x}^\dagger\mathcal N_{\vec y})(\rho_{\vec y})\big]\notag\le \|\rho_{\vec x}\|_1\,\big\|(\mathcal N_{\vec x}^\dagger\mathcal N_{\vec y})(\rho_{\vec y})\big\|_\infty
= \big\|(\mathcal N_{\vec x}^\dagger\mathcal N_{\vec y})(\rho_{\vec y})\big\|_\infty. \label{eq:holder}
\end{align}
Since $\mathcal N_{\vec x}=\bigotimes_i \mathcal N_{x_i}$, we have
\[
\mathcal N_{\vec x}^\dagger\mathcal N_{\vec y}
=
\bigotimes_{i=1}^n (\mathcal N_{x_i}^\dagger\mathcal N_{y_i}).
\]
By Lemma~\ref{lem:single-site}, each tensor factor is one of four completely positive maps:
\[
\mathcal N_0^\dagger\mathcal N_0(X)=\alpha_1 X,\quad
\mathcal N_1^\dagger\mathcal N_1(X)=\alpha_0 X,\quad
\mathcal N_0^\dagger\mathcal N_1(X)=\Tr(\sigma_1 X)\sigma_0,\quad
\mathcal N_1^\dagger\mathcal N_0(X)=\Tr(\sigma_0 X)\sigma_1.
\]
Let $D:=\{i: x_i\neq y_i\}$ and $E:=\{i:x_i=y_i\}$. The equal sites contribute the scalar factor
$\alpha_1^{N_{00}(\vec x,\vec y)}\alpha_0^{N_{11}(\vec x,\vec y)}$.
On the mismatch sites $D$, the map is ``measure-and-prepare'':
it applies a weighted partial trace against $\sigma_1$ (resp.\ $\sigma_0$) and outputs $\sigma_0$ (resp.\ $\sigma_1$).
Concretely, there exist positive operators
\[
M_{\vec x,\vec y}
=
\sigma_1^{\otimes N_{01}(\vec x,\vec y)}\otimes \sigma_0^{\otimes N_{10}(\vec x,\vec y)},
\qquad
P_{\vec x,\vec y}
=
\sigma_0^{\otimes N_{01}(\vec x,\vec y)}\otimes \sigma_1^{\otimes N_{10}(\vec x,\vec y)},
\]
such that for every positive operator $X$ on the input space of $\mathcal N_{\vec y}$,
\begin{equation}\label{eq:explicit-form}
(\mathcal N_{\vec x}^\dagger\mathcal N_{\vec y})(X)
=
\alpha_1^{N_{00}}\alpha_0^{N_{11}}
\big(\Tr_D\big[(M_{\vec x,\vec y}\otimes I_E)X\big]\big)\otimes P_{\vec x,\vec y},
\end{equation}
where $\Tr_D$ is the partial trace over the mismatch input registers and $N_{ab}=N_{ab}(\vec x,\vec y)$.

Now specialize to $X=\rho_{\vec y}$ (a density operator). The operator
\[
R_E:=\Tr_D\big[(M_{\vec x,\vec y}\otimes I_E)\rho_{\vec y}\big]
\]
is positive, and hence $\|R_E\|_\infty\le \Tr(R_E)$. Moreover,
\[
\Tr(R_E)=\Tr\big[(M_{\vec x,\vec y}\otimes I_E)\rho_{\vec y}\big]\le \|M_{\vec x,\vec y}\|_\infty
=\lambda_1^{N_{01}}\lambda_0^{N_{10}},
\]
since $\Tr(K\rho)\le \|K\|_\infty\Tr(\rho)$ for $K\ge 0$ and $\Tr(\rho_{\vec y})=1$.
Also,
\[
\|P_{\vec x,\vec y}\|_\infty=\lambda_0^{N_{01}}\lambda_1^{N_{10}}.
\]
Combining with \eqref{eq:explicit-form} and using $\|A\otimes B\|_\infty=\|A\|_\infty\|B\|_\infty$ gives
\begin{equation}\label{eq:pairwise-bound}
\big\|(\mathcal N_{\vec x}^\dagger\mathcal N_{\vec y})(\rho_{\vec y})\big\|_\infty
\le
\alpha_1^{N_{00}(\vec x,\vec y)}\alpha_0^{N_{11}(\vec x,\vec y)}
(\lambda_0\lambda_1)^{N_{01}(\vec x,\vec y)+N_{10}(\vec x,\vec y)}.
\end{equation}
\end{proof}
\subsection{Generalized direct sum completely positive maps and states}
\begin{definition}[Generalized direct sum CP maps]
    Let us consider Hilbert spaces, $A\cong A_0\oplus A_1$ and $B\cong B_0\oplus B_1$. Moreover consider two CP maps, $\Phi_i: \mc{B}(A_i)\to \mc{B}(B_i)$, for $i=0,1$, with particular Kraus representations, $\{F^{(i)}_k\}$. A generalised direct sum (GDS) completely positive (CP) map, $\Phi:\mc{B}(A)\to \mc{B}(B)$ is defined by its Kraus representation: $\{F_k:= F^{(1)}_{k}\oplus F_k^{(2)}\}$.
\end{definition}
Note that in the definition, we add zero operators in either of the set of Kraus operators to make them equally long. Furthermore, different ordering of the Kraus operators in the direct sum, can yield different generalised direct sum CP maps. As these maps are not necessarily trace-preserving, in general $\sum_{k} F^{\dagger}_k F_k \neq \mb{I}_{A}$. The Choi-Jamio\l{}kowski isomorphism \cite{CHOI1975285,JAMIOLKOWSKI1972275} implies that corresponding to the completely positive map, $\Phi:\mc{B}(A)\to \mc{B}(B)$, there exists a positive semidefinite operator on the bipartite system $AB$, i.e., $J_{\Phi} \in \mathrm{Pos}(A\otimes B)$. In fact, we can write \cite{Watrous_2018}
\begin{align}
    J_{\Phi} = \sum_{k} \mathrm{vec}({F^T_k})\mathrm{vec}({F^T_k})^{\dagger}.
\end{align}
For non-zero maps, $\Tr{J_{\Phi}} \neq 0$. Therefore, we can define a corresponding GDS state $\rho^{\Phi} \in \mc{D}(A\otimes B)$ as
\begin{align}
    \rho^{\Phi} = \frac{1}{\Tr{J_\phi}}J_{\Phi}.
\end{align}
\begin{theorem}
    Let $A,B \cong \mb{C}^{d_0+d_1}$. The following bipartite state
    \begin{align}
    \rho_{AB} = \frac{1}{2d_0d_1}\sum_{i=0}^{d_0-1}\sum_{j=0}^{d_1-1} \left(\ket{i}\ket{j} + \ket{j+d_0}\ket{i+d_1}\right) \left(\bra{i}\bra{j} + \bra{j+d_0}\bra{i+d_1}\right) 
\end{align}
has single-letter one-way distillable entanglement, i.e.,
\begin{align}
    D_{\to}(\rho_{AB}) = D^{(1)}_{\to}(\rho_{AB}).
\end{align}
\end{theorem}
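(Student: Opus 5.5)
The plan is to follow the architecture of the proof of Proposition~\ref{thm:GDS-state-additivity}, but to make each step concrete for this state. The target is
\[
D^{(1)}_{\to}(\rho_{AB}^{\otimes n}) = n\,D^{(1)}_{\to}(\rho_{AB})\quad\text{for all } n .
\]
The lower bound $\ge$ comes for free from superadditivity, since $D^{(1)}_{\to}(\rho^{\otimes n})\ge nD^{(1)}_{\to}(\rho)$ by applying an optimal one-copy instrument on each copy. For the upper bound I first write a purification explicitly. Since $\rho_{AB}=\frac{1}{d_0d_1}\sum_{ij}\ketbra{\psi_{ij}}{\psi_{ij}}$, a purification is
\[
\ket{\phi}=\frac{1}{\sqrt{d_0d_1}}\sum_{ij}\ket{\psi_{ij}}\ket{ij}_E .
\]
On a block $A_0$ input $\ket{i}$, Bob holds $\ket{j}$ and Eve holds $\ket{ij}$. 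On an $A_1$ input $\ket{j+d_0}$, Bob holds $\ket{i+d_1}$ and Eve holds $\ket{ij}$. Thus, conditioned on the block, Eve's system is a tensor product of "which-$i$" and "which-$j$" registers, matching the spin-alignment channels $\mathcal N_0(\rho)=\rho\otimes\sigma_1$ and $\mathcal N_1(\rho)=\sigma_0\otimes\rho$ with $\sigma_0=\mathbb I_{d_0}/d_0$ and $\sigma_1=\mathbb I_{d_1}/d_1$, up to transposes.

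\textbf{Step 1 (pinching).} By Lemma~\ref{lemma:alternative}, I bound $S(B^nM)-S(E^nM)$ for an arbitrary instrument with POVM $\{E_m\}$. I then show rigorously the claim sketched earlier: replacing $E_m$ by its pinching $\widetilde E_m=\sum_{\vec x}\Pi_{\vec x}E_m\Pi_{\vec x}$ leaves the conditional Eve states unchanged. The reason is that Eve's conditional state is $\Tr_{AB}$ of $(\sqrt{E_m}\otimes\mathbb I)\phi^{\otimes n}(\sqrt{E_m}\otimes\mathbb I)$, which equals $\Tr_A[(E_m^T\otimes \mathbb I)\phi_{AE}^{\otimes n}]$. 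Cross blocks $\Pi_{\vec x}E_m\Pi_{\vec y}$ with $\vec x\neq\vec y$ couple $A$-blocks paired with orthogonal $B$-blocks, so they vanish after tracing $B$. Meanwhile Bob's conditional state only gains entropy, since his state splits into blocks indexed by $\vec x$ and off-diagonal blocks are the only thing pinching removes. The pinched POVM remains complete, so it is feasible.

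\textbf{Step 2 (Bob's term and Eve's term per block).} For a block-diagonal $E_m$, Bob's conditional state within block $\vec x$ is $\Tr_A[(E_m^{\vec x})^T\rho_{\vec x}]$. Because the $A$-marginal restricted to each branch $\ket{i}$ or $\ket{j+d_0}$ is correlated with Bob only through the complementary index, which is maximally mixed, Bob's state is $\alpha_{m,\vec x}\,\mathbb I/\dim$ on the $B$-block $\vec x$. Hence $S(B^n|M)$ depends only on the weights $\alpha_{m,\vec x}$. Eve's term is exactly a spin-alignment minimization of $S\bigl(\sum_{\vec x}p_{\vec x}\mathcal N_{\vec x}(\rho_{\vec x})\bigr)$ with maximally mixed $\sigma$'s. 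Since every unit vector is a maximal eigenvector of a maximally mixed $\sigma$, the conjecture reduces to showing that product pure inputs are optimal. I will try to prove this directly for maximally mixed $\sigma_0,\sigma_1$, as this is the special case where I expect majorization to work. One route is Lemma~\ref{lem:tensor-rearrangement} iterated site by site. The other is the observation that the output commutes with a large unitary symmetry group, so one can twirl over $U_{\vec x}$ products and use concavity. This step is the \emph{main obstacle}, because only the Rényi-2 and $n=1$ cases are established. If the general case fails, I would at least reduce it to Conjecture~\ref{conj:spin-alignment} and state the dependence.

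\textbf{Step 3 (rank-one relaxation and degradable comparison).} After alignment, the (non-TP) rank-one instrument acts identically to an instrument on $\sigma_0^{\otimes n}$, where $\sigma_0=d_0d_1(K_{00}\otimes\mathbb I)\rho(K_{00}^\dagger\otimes\mathbb I)$ is the restriction to $\mathrm{span}\{\ket0,\ket{d_0}\}$. This is a two-qubit-like state: a coherent mixture of $\ket{0}\ket{0}$ and $\ket{d_0}\ket{d_1}$ with flagged Eve. I would verify its degradability explicitly, via a degrading map analogous to \eqref{eq:degrading-map}, and then invoke degradable additivity to get the bound $\le n I(A\rangle B)_{\sigma_0}$. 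Finally, $\mathcal T^{\otimes n}$ with the Kraus operators $K_{ij}$ achieves this value, because all its branches are local-unitarily equivalent to $\sigma_0^{\otimes n}$. This yields equality and $D_\to=D^{(1)}_\to=I(A\rangle B)_{\sigma_0}$.
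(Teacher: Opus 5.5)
Your plan follows the paper's proof step for step: pinching to block-diagonal POVM elements, aligned rank-one replacement inside each block with the same weight $\alpha_{m,\vec x}$, transfer to an instrument on the degradable state $\sigma_0^{\otimes n}$, and the matching lower bound from $\mathcal{T}^{\otimes n}$. Step~1 and your treatment of Bob's term are sound. The diagonal blocks of $\rho$ are $\frac12\frac{\mathbb{I}_{d_0}}{d_0}\otimes\frac{\mathbb{I}_{d_1}}{d_1}$ and $\frac12\frac{\mathbb{I}_{d_1}}{d_1}\otimes\frac{\mathbb{I}_{d_0}}{d_0}$, so Bob's branch states and the branch probabilities depend only on the $\alpha_{m,\vec x}$.

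\textbf{The gap in Step~2.} The genuine gap is the one you flag, and the paper does not close it either. Its proof simply invokes Conjecture~\ref{conj:spin-alignment}. The paper establishes that conjecture only for $n=1$, and for all $n$ only for R\'enyi-2 (Proposition~\ref{prop:renyi2-spin-alignment}). Neither result controls the von Neumann quantity $S(E^nM)$ for $n\ge 2$, because a R\'enyi-2 ordering does not imply majorization. So both arguments prove additivity only conditionally on multi-copy von Neumann spin alignment for $\sigma_0=\mathbb{I}_{d_0}/d_0$, $\sigma_1=\mathbb{I}_{d_1}/d_1$. The single-copy value $D^{(1)}_\to(\rho_{AB})=I(A\rangle B)_{\sigma_0}$ is unconditional, since for maximally mixed $\sigma$'s the case $n=1$ follows from concavity and block-unitary invariance.

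Neither of your two routes closes the gap as described:
\begin{itemize}
\item Twirling over a symmetry group and applying concavity only shows that the symmetrized configuration has \emph{larger} output entropy. That is the wrong direction for certifying a minimizer.
\item Lemma~\ref{lem:tensor-rearrangement} needs two summands, each a product across one fixed cut. For $n\ge 2$ there are $2^n$ summands, and once some $\rho_{\vec x}$ is entangled across sites they are not products across any single-site cut. Those entangled configurations are exactly what the conjecture must exclude.
\end{itemize}
Note also that what you need is \emph{aligned} products (the same vector at site $i$ for every $\vec x$), not just product pure inputs; misaligned products can be strictly worse. Your fallback of stating the dependence on Conjecture~\ref{conj:spin-alignment} is precisely what the paper's argument delivers.

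\textbf{A check to make explicit in Step~3.} Spell out feasibility, since this is where the unequal weights $1/\sqrt{d_1}$, $1/\sqrt{d_0}$ in $K_{00}$ matter. In particular, $\sigma_0$ is not the plain normalized restriction of $\rho$ to $\mathrm{span}\{\ket{0},\ket{d_0}\}$.
\begin{itemize}
\item With $|t(\vec x)\rangle=\bigotimes_i\ket{t(x_i)}$, set $F_m=\sum_{\vec x}\alpha_{m,\vec x}\,d_0^{-N_0(\vec x)}d_1^{-N_1(\vec x)}\,|t(\vec x)\rangle\langle t(\vec x)|$.
\item Then $(\sqrt{F_m}\otimes\mathbb{I})\,\sigma_0^{\otimes n}\,(\sqrt{F_m}\otimes\mathbb{I})$ equals, as an unnormalized operator, the $m$-th branch of the rank-one family on $\rho^{\otimes n}$.
\item Completeness of the original POVM gives $\sum_m\alpha_{m,\vec x}=\mathrm{Tr}\,\Pi_{\vec x}=d_0^{N_0(\vec x)}d_1^{N_1(\vec x)}$. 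Hence $\sum_m F_m$ is the projector onto $\mathrm{span}\{\ket{0},\ket{d_0}\}^{\otimes n}$, which contains the support of the $A$-marginal of $\sigma_0^{\otimes n}$.
\end{itemize}
This is what licenses the degradable bound $S(B^nM)-S(E^nM)\le S(B^n)-S(E^n)$, which requires a trace-preserving instrument. For a post-selected family on a degradable state, coherent information can exceed $I(A\rangle B)$; that is the gap between $\widehat D^{(1)}_\to$ and $D^{(1)}_\to$. If you used the plain restriction with $d_0\neq d_1$, the induced family would not even be a sub-instrument.
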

\begin{proof}
    First we define the instrument $\mc{T}:A\to AM$, with Kraus operators $K_{ij} = \frac{1}{\sqrt{d_1}}\ketbra{i}{i} + \frac{1}{\sqrt{d_0}}\ketbra{d_0+j}{d_0+j}$ where $i = 0,\dots, d_0-1$ and $j=0,\dots,d_1-1$. We have a total of $d_0d_1$ number of Kraus operators. In the following, we prove that $\mc{T}^{\otimes n}: A^{\otimes n} \to A^{\otimes n} M $ is the instrument that achieves the optimum for $D^{(1)}_{\to}(\rho^{\otimes n}_{AB})$ for all $n\in \mb{N}$.

\textit{\textbf{Step 1} (Block diagonal POVM elements are sufficient):}
Consider any instrument $\mathcal{T}_1:A^{\otimes n} \to A^{\otimes n} M$ with POVM elements
\begin{align*}
    E_{m,n} = \sum_{\Vec{x},\Vec{y}\in \{0,\dots,d_0+d_1-1\}^n} \;e^{m}_{\Vec{x},\Vec{y}}\;\ketbra{\Vec{x}}{\Vec{y}}.
\end{align*}
For all $\vec{x} \in  \{0,\dots,d_0+d_1-1\}^n$, define $\vec{s}(\vec{x})\in \{0,1\}^n$ such that
\begin{align}
    s_i = \begin{cases}
        0, \quad x_i\in \{0,\dots,d_1-1\}\\
        1, \quad x_i \in \{d_1,\dots,d_1+d_0-1\}.
    \end{cases}
\end{align}
The $(\vec{x},\vec{x})$-th diagonal element of $\rho^{(m)}_{B^{\otimes n}}$ is proportional to 
\begin{align}
     \Tr \Pi_{\vec{s}(\vec{x})} E_{m,n} \Pi_{\vec{s}(\vec{x})}. 
\end{align}
Here $\Pi_{\vec{x}} = \Pi_{x_1}\otimes \Pi_{x_2}\otimes\dots \otimes \Pi_{x_n}$ with $\Pi_{0} = \sum_{j=0}^{d_0-1} \ketbra{j}{j}$ and $\Pi_{1} = \sum_{j=0}^{d_1-1} \ketbra{j}{j}$. 

The state of $E^{\otimes n}$ is proportional to

\begin{align}
    \sum_{\vec{x}\in\{0,1\}^n} \bigotimes_{i=1}^n \mathcal{N}_{x_i}(\mathscr{E}_{\vec{x}}), 
\end{align}
where 
\begin{align}
    \mathscr{E}_{\vec{x}} = \Pi_{\vec{x}} E^{T}_{m,n} \Pi_{\vec{x}}.
\end{align}
Here we treat $\mathscr{E}_{\vec{x}}$ to be a positive operator on the space $\bigotimes_i\mathbb{C}^{d_{x_i}}$. Furthermore, we define, $\mathcal{N}_0:B(\mathbb{C}^{d_0})\to B(\mathbb{C}^{d_1}\otimes \mathbb{C}^{d_0})$ and $\mathcal{N}_1:B(\mathbb{C}^{d_1})\to B(\mathbb{C}^{d_1}\otimes \mathbb{C}^{d_0})$ as:
\begin{align}
    \mathcal{N}_0(\rho) = \mathbb{I}_{d_1}\otimes\rho ; \qquad \mathcal{N}_1(\sigma) = \sigma \otimes \mathbb{I}_{d_0}.
\end{align}
Note that the state of the environment does not depend on the the \textit{off-diagonal blocks} of the POVM elements, i.e., on $\Pi_{\vec{x}}E_{m,n}\Pi_{\vec{y}}$ for $\vec{x}\neq \vec{y}$. As these blocks appear only in the off-diagonal entries of $B^{\otimes n}$, we can choose POVM elements $E_{m,n}$ such that $$\Pi_{\vec{x}}E_{m,n}\Pi_{\vec{y}} = \delta_{\vec{x},\vec{y}}\Pi_{\vec{x}}E_{m,n}\Pi_{\vec{x}}$$
without decreasing the coherent information.

\textit{\textbf{Step 2} (Spin-alignment phenomena):} Furthermore, the spin alignment conjecture \ref{conj:spin-alignment} implies that the entropy of the environment is minimum when replace the POVM elements $E_{m,n}$ by positive elements
\begin{align}
    E'_{m,n} = \sum_{\vec{x}\in\{0,1\}^{\otimes n}} \Tr (\Pi_{\vec{x}} E_{m,n} \Pi_{\vec{x}})\; \bigotimes_{i=1}^n \ketbra{t(x_i)}{t(x_i)},
\end{align}
where $t(0)=0$ and $t(1)=d_0$. Let us denote the corresponding trace non-preserving instrument (which we have introduced in the definition \ref{def: Non-TP instrument}), by $\mc{T}_2: A^{\otimes n}\to A^{\otimes n} M$. Therefore, so far we have proven that for any trace-preserving instrument, $\mathcal{T}_1$, we can find a (possibly trace non-preserving) instrument, $\mathcal{T}_2$ such that 
\begin{align}
    I(A^{\otimes n}\rangle B^{\otimes n}M)_{\mc{T}_2(\rho^{\otimes n})} \geq I(A^{\otimes n}\rangle B^{\otimes n}M)_{\mc{T}_1(\rho^{\otimes n})}.
\end{align}
\textit{\textbf{Step 3} (Matching upper and lower bound on $D^{(1)}_{\to}(\rho^{\otimes n})$)}: Now, we consider the state
\begin{align}
    \sigma = \sigma^{\otimes n}_{0} = (d_0d_1)^n\; ((K_{00} \otimes \mb{I}_B) \rho_{AB} (K_{00}^\dagger \otimes \mb{I}_B))^{\otimes n},
\end{align}
and the instrument $\mathcal{T}_3:A^{\otimes n}\to A^{\otimes n} M$ given by the POVM elements: 
\begin{align*}
    F_m = \sum_{\vec{x}\in\{0,1\}^n} \frac{1}{d_0^{N_0(\vec{x})}}\frac{1}{d_1^{N_1(\vec{x})}}\Tr \Pi_{\vec{x}} E_{m,n} \Pi_{\vec{x}}  \bigotimes_{i=1}^n \ketbra{t(x_i)}{t(x_i)};\qquad F = \mb{I}_{A^{\otimes n}} - \sum_{m}F_m.
\end{align*}
It immediately follows that 
\begin{align}
    I(A^{\otimes n}\rangle B^{\otimes n}M)_{\mc{T}_2(\rho^{\otimes n})} = I(A^{\otimes n}\rangle B^{\otimes n}M)_{\mc{T}_3(\sigma)}.
\end{align}

Now $\sigma_0$ is a degradable state, as can be seen as follows:
\begin{align}
    \sigma^{AB}_0 = \frac{1}{2}\left(\frac{1}{\sqrt{d_1}}\ket{0,0}+\frac{1}{\sqrt{d_0}}\ket{d_0,d_1}\right)\left(\frac{1}{\sqrt{d_1}}\bra{0,0}+\frac{1}{\sqrt{d_0}}\bra{d_0,d_1}\right) + \frac{1}{2d_1}\sum_{j=1}^{d_1-1} \ketbra{0,j}{0,j} + \frac{1}{2d_0}\sum_{i=1}^{d_0-1} \ketbra{d_0,i+d_1}{d_0,i+d_1}
\end{align}
For any purification $\ket{\sigma_0}_{ABE}$, by applying a fully amplitude damping channel from $\ket{d_1}\to \ket{0}$ on $B$ alone, we obtain a state on $AB$ that is equivalent to the state $\sigma^{AE}_0$, upto local isometries on $B$. Therefore, $\sigma = \sigma_0^{\otimes n}$ is degradable as well.

We know that for degradable states, the trivial instrument achieves the optimal coherent information. We get the following chain of inequality,
\begin{align}
    I(A^{\otimes n}\rangle B^{\otimes n})_\sigma \geq I(A^{\otimes n}\rangle B^{\otimes n}M)_{\mc{T}_3(\sigma)} =I(A^{\otimes n}\rangle B^{\otimes n}M)_{\mc{T}_2(\rho^{\otimes n})} \geq I(A^{\otimes n}\rangle B^{\otimes n}M)_{\mc{T}_1(\rho^{\otimes n})}
\end{align}
As $\mc{T}_1$ is any trace-preserving instrument, this inequality implies that 
\begin{align}\label{eq:up-b}
    I(A^{\otimes n}\rangle B^{\otimes n})_\sigma \geq D^{(1)}_{\to}(\rho^{\otimes n}_{AB})
\end{align}
Now we note that for all $x_i\in (0,\dots, d_0-1)$ and $y_i \in (0,\dots,d_1-1)$, the states $$(d_0d_1)^n\; \bigotimes_{i=1}^n(K_{x_iy_i} \otimes \mb{I}_B) \rho_{AB} (K_{x_iy_i}^\dagger \otimes \mb{I}_B)$$
are related to $\sigma$ via local unitaries on $A^{\otimes n}$ and $B^{\otimes n}$, and there are $(d_0d_1)^n$ numbers of such states. Therefore, they all have the same coherent information as $I(A^{\otimes n}\rangle B^{\otimes n})_\sigma$. Noting that $\bigotimes^n_{i=1} K_{x_iy_i}$ are nothing but the Kraus operators of the proposed optimum instrument $\mc{T}^{\otimes n}$, we have 
\begin{align}\label{eq:low-b}
    I(A^{\otimes n}\rangle B^{\otimes n})_\sigma =I(A^{\otimes n}\rangle B^{\otimes n}M)_{\mc{T}^{\otimes n}(\rho^{\otimes n})}\leq D^{(1)}_{\to}(\rho^{\otimes n}_{AB})
\end{align}
where the last inequality follows from the definition of $D^{(1)}_{\to}(\rho^{\otimes n}_{AB})$.

\textit{\textbf{Step 4} (Additivity of $D^{(1)}_{\to}(\rho^{\otimes n}_{AB}$)
)}: From \ref{eq:up-b} and \ref{eq:low-b}, we get
\begin{align}
    D^{(1)}_{\to}(\rho^{\otimes n}_{AB}) &= I(A^{\otimes n}\rangle B^{\otimes n})_\sigma\nonumber= I(A^{\otimes n}\rangle B^{\otimes n})_{\sigma^{\otimes n}_{0}} \nonumber= nI(A\rangle B)_{\sigma_0} = n D^{(1)}(\rho_{AB}).
\end{align}
Therefore, we finally have
\begin{align}
    D_{\to} (\rho_{AB}) = \lim_{n\to \infty} \frac{1}{n}D^{(1)}_{\to}(\rho^{\otimes n}_{AB}) = D^{(1)}(\rho_{AB}).
\end{align}
\end{proof}
\newpage

\newpage

\end{document}